\newtheorem{thm}{Theorem}[section]
\newtheorem{lem}[thm]{Lemma}
\newtheorem{cor}[thm]{Corollary}
\newtheorem{rem}[thm]{Remark}
\newtheorem{dfn}[thm]{Definition}
\newtheorem{alg}[thm]{Algorithm}
\newcommand{\mathset}[1]{{\left\{#1\right\}}} 
\newcommand{\absolute}[1]{\left\lvert#1\right\rvert}
\DeclareMathOperator{\GL}{GL}
\DeclareMathOperator{\id}{id}
\DeclareMathOperator{\supp}{supp}
\DeclareMathOperator{\im}{im}
\DeclareMathOperator{\bin}{bit}
\DeclareMathOperator{\gc}{gc}
\DeclareMathOperator{\coord}{c}
\DeclareMathOperator{\trail}{trail}
\DeclareMathOperator{\static}{static}
\DeclareMathOperator{\scaled}{scaled}
\begin{document}

\title{$p$-Adic scaled space filling curve indices for high dimensional data}
\author{Patrick Erik Bradley, Markus Wilhelm Jahn}





\maketitle

\begin{abstract}
Space filling curves are widely used in Computer Science. In particular Hilbert curves and their generalisations to higher dimension are used as an indexing method because of their nice locality properties. This article generalises this concept to the systematic construction of $p$-adic versions of  Hilbert curves based on affine transformations of the $p$-adic Gray code, and develops an efficient scaled indexing method for data taken from high-dimensional spaces based on these new curves, which with increasing dimension is shown to be less space consuming than the optimal standard static Hilbert curve index. A measure is derived which allows to assess the local sparsity of a data set, and is tested on some data.
\end{abstract}


\section{Introduction}

Motivated by the fact proven by Cantor  that the unit interval has the same cardinality as any Euclidean $n$-space \cite{Cantor1878}, Peano discoverd in 1890 a continuous map from the unit interval onto the unit square, i.e.\ a first example of a space-filling curve \cite{Peano1890}.
Its construction uses an approximation by iteratively subdividing the square into $3^k$ subsquares and aligning these.
A year later, Hilbert modified this $3$-adic approach in order to obtain a binary construction of approximations to  such a curve \cite{Hilbert1891}. These are so-called \emph{space-filling curves} and can be generalised to higher dimension, i.e.\ a continuous map onto an $n$-dimensional hypercube.

\smallskip
The binary reflected Gray code allows for a generalisation of the Hilbert curve to higher dimension.
Such higher dimensional Hilbert curves are used as an index for point clouds in $\mathds{R}^n$.
They have the property that nearby points on the curve  are also nearby in $\mathds{R}^n$.
Another property of Hilbert curves is their fractal structure: The $k+1$-st iteration of the Hilbert curve is the union of copies of the $k$-th iteration.
In \cite{Haverkort2017}, it is asked how many different generalisations of the Hilbert curve are there to dimension $n\ge 3$.
A Gray-code index based on the Gray code for point clouds whose coordinates have varying precision is introduced in \cite{HR2007}.

\smallskip
This article has a similar aim as \cite{HR2007} in that the scalability of the Hilbert curve should be exploited to produce a scalable index whose local precision adapts to the local density
of the point cloud, and to be able to do this for any $p$-adic approximation to a Hilbert-like space-filling curve. For this, the reflected $p$-adic Gray code and affine transformations will be used in order to construct a  multitude of $p$-adic Gray-Hilbert curves.
The emphasis on this article is to develop the theory underlying the $p$-adic Gray-Hilbert curve and to substantiate this theory by some experiental results with small data sets. A bigger data application is undertaken in \cite{JB-treedist} for a large rain forest tree data set.

\bigskip
In the following section, we fix some notation. Section 3 recalls properties of the $p$-adic reflected Gray code. Section 4 deals with the binary reflected code separartely.  Section 5 studies affine transformations of Gray codes.
These are used in Section 6 to construct affine Gray-Hilbert codes. 
 Section 7 goes to the projective limit, where the affine Gray-Hilbert curve becomes a continuous map between formal power series rings. Section 8 shows how to approximate a point on the $2$-adic affine Gray-Hilbert curve, the case $p>2$ having already been done in Section 6. Section 9 introduces a scaled indexing algorithm by mapping points onto different iterations of the swapping Hilbert curve. An insertion algorithm is introduced as well. 
Section 11 deals with upper complexity bounds. This is followed by Section 12 containing experiments. A conclusion finishes the core of this article.

\section{Notation}
Let $p$ be a prime number.
The finite field with $p$ elements $0$, $1$, \dots, $p-1$ is denoted as $\mathds{F}_p$. The $n$-dimensional
$\mathds{F}_p$ vector space $\mathds{F}_p^n$ is viewed here as the set of maps
\[
x=(x_{n-1},\dots,x_0)\colon\mathset{0,\dots,n-1}\to\mathds{F}_p,\;
i\mapsto x_i
\]
with the usual point-wise addition and scalar multiplication.
The elements of $\mathds{F}_p^n$ are also coefficients of polynomials or formal power series:
the set of polynomials is denoted as $\mathds{F}_p^n[t]$, and the set of formal power series as $\mathds{F}_p[[t]]$. For $x\in\mathds{F}_p^n$ we denote the support of $x$ as
\[
\supp(x)=\mathset{i\in\mathset{0,\dots,n-1}\mid x_i=1}
\]
The standard unit vectors are denoted as $e_i$  and have 
\[
\supp(e_i)=\mathset{i}
\]
for  $i\in\mathset{0,\dots,n-1}$.
The {Hamming distance} on $\mathds{F}_p^n$ is denoted as
$d_H$.
$S_n$ will denote the group of permutations of the set $\mathset{0,\dots,n-1}$.
If $x\in\mathds{F}_p^r$, then
\[
x^\sigma:=(x_{\sigma(r-1)},\dots,x_{\sigma(0)})
\]
for $x=(x_{r-1},\dots,x_{0})\in\mathds{F}_p^r$.


\section{$p$-Adic reflected Gray code}

We take the recursive definition of the reflected $p$-adic Gray code from \cite{ZS2003}:
\begin{align*}
  G(1,p)&=(i)_i\\
  G(n,p)&=\left(i\;G(n-1,p)^{\sigma_i}\right)_i
\end{align*}
where $i$ runs through the set $\mathset{0,\dots,p-1}$ and
$\sigma_i\in S_p$ is the permutation
\[
\sigma_i=\begin{cases}
\id,&i\equiv 0\mod 2\\
\sigma&i\equiv 1\mod 2
\end{cases}
\]
where
\[
\sigma=
\begin{pmatrix}
  0&1&\dots&p-1\\
  p-1&p-2&\dots&0
\end{pmatrix}
\]
In other words, $G(n-1,p)^{\sigma_i}$ is $G(n-1,p)$ in reverse order, if $i$ is odd. Notice that $G(n,p)$ is a sequence of elements from $\mathds{F}_p^n$, i.e.\
a map $\mathds{F}_p^n\to\mathds{F}_p^n$.

\smallskip
Let
  \[
\bin_n\colon\mathds{Z}/p^n\mathds{Z}\to\mathds{F}_p^n
\]
be the map which takes every number to its $p$-adic representation.

\smallskip
Instead of $G(n,p)$, we will write $\gc_n$, suppressing the
$p$ in the notation.

\begin{lem}
  It holds true that
  \begin{align*}
    \gc_n(\bin_n(0))&=(0\dots0)\\
    \gc_n(\bin_n(p^n-1))&=(-1\dots-1)
  \end{align*}  
\end{lem}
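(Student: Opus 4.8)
The plan is to establish both identities at once by induction on $n$, reading the two extreme terms of the sequence $G(n,p)$ directly off its recursive block decomposition $G(n,p)=(i\,G(n-1,p)^{\sigma_i})_i$. The base case $n=1$ is immediate: since $G(1,p)=(0,1,\dots,p-1)$, its first term is $0$ and its last term is $p-1=-1$, which are exactly $\gc_1(\bin_1(0))$ and $\gc_1(\bin_1(p-1))$.

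For the inductive step I would locate the first and last terms of $G(n,p)$ among its $p$ blocks. The first term lies in the block $i=0$, where $\sigma_0=\id$, so it is obtained by prepending $0$ to the first term of $G(n-1,p)$; by the induction hypothesis that term is $(0\dots0)\in\mathds{F}_p^{n-1}$, and prepending $0$ gives $(0\dots0)\in\mathds{F}_p^n$. This half of the statement passes to $n$ uniformly in $p$, because the leading block $i=0$ is never reflected.

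The last term lies in the block $i=p-1$, and here the whole argument turns on the permutation $\sigma_{p-1}$, that is, on the parity of $p-1$. For odd $p$ the index $p-1$ is even, so $\sigma_{p-1}=\id$ and the last term of $G(n,p)$ is $(p-1)$ prepended to the last term of $G(n-1,p)$; by induction the latter is $(-1\dots-1)$, hence the result is $(-1\dots-1)\in\mathds{F}_p^n$ and the induction closes.

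The step I expect to be the main obstacle is precisely this last block when $p=2$, for then $p-1=1$ is odd, $\sigma_{p-1}=\sigma$ reverses the subsequence, and the last term of $G(n,2)$ becomes $1$ prepended to the \emph{first} term of $G(n-1,2)$, namely $(0\dots0)$, so the naive induction yields $(1,0,\dots,0)$ rather than $(1\dots1)$. I would therefore treat $p=2$ on its own --- consistent with the separate handling of the binary code in Section~4 --- carrying the first term of $G(n-1,2)$ rather than its last term through the reflected final block, and determining the binary endpoint accordingly. Away from this parity bookkeeping the proof is just a routine unwinding of the recursion, so I expect no further difficulty; I also note that both identities can be read as saying that the constant vectors $(0\dots0)$ and $(-1\dots-1)$ are fixed points of $\gc_n$, the former always and the latter for odd $p$.
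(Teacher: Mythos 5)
Your induction is the same as the paper's own proof: base case $n=1$, then reading the first element of $G(n,p)$ off the block $i=0$ (never reflected, since $\sigma_0=\id$) and the last element off the block $i=p-1$. For odd $p$ your argument is complete and coincides with the paper's.

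The place where you part company with the paper is the place where you are right and the paper is wrong. The paper's inductive step claims, for every $p$,
\[
\gc_{n+1}(\bin_{n+1}(p^{n+1}-1))=\left(-1\;\gc_n(\bin_n(p^n-1))^{\sigma_{p-1}}\right)=(-1\;-1\dots-1),
\]
which silently treats $\sigma_{p-1}$ as fixing the vector $(-1\dots-1)$; that is legitimate only when $p-1$ is even, i.e.\ when $\sigma_{p-1}=\id$. When $p=2$ the last block is reflected and, exactly as you computed, the last element of $G(n,2)$ is $1$ prepended to the \emph{first} element of $G(n-1,2)$, i.e.\ $e_{n-1}=(1,0,\dots,0)$ rather than $(1\dots1)$; already for $n=2$ one has $G(2,2)=(00,01,11,10)$, whose last entry is $10\neq 11$. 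So the second identity of the lemma is simply false for $p=2$ and $n\ge 2$, and no separate treatment of the binary case can rescue it: the only correct ``binary endpoint determined accordingly'' is $e_{n-1}$, which is precisely what the paper itself asserts in Section 5 (the binary reflected Gray code has ending point $f'=(1,0,\dots,0)=e_{n-1}$). In short, your proof is correct exactly on the range of validity of the statement (odd $p$), and the obstacle you flagged is not a gap in your argument but an error in the lemma and in the paper's proof of it, which should either be restricted to $p>2$ or have its second claim amended to $\gc_n(\bin_n(2^n-1))=e_{n-1}$ in the binary case.
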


\begin{proof}
  We prove the assertions by induction.

  \smallskip
  Clearly, we have for $n=1$:
\begin{align*}
  \gc_1(0) &= (0)\\
  \gc_1(p-1)&=(-1)
\end{align*}
Now, assume that the assertions are true for $n>0$. Then
\begin{align*}
  \gc_{n+1}(\bin_{n+1}(0))&=(0\; \gc_n(\bin_n(0))^{\sigma_0})
  \\
  &=(0\; 0\dots 0)\\
  \gc_{n+1}(\bin_{n+1}(p^{n+1}-1))
  &  = (-1 \; \gc_n(\bin_n(p^n-1))^{\sigma_{p-1}})
  \\
&  =(-1\;-1\dots -1)
\end{align*}
This proves the assertions.
\end{proof}

  We say that elements $v$ and $v^\perp$ of $\mathds{F}_p^n$
  are \emph{opposite}, if
  \[
v+v^\perp=d
  \]
with $d=(-1\dots-1)\in\mathds{F}_p^n$. 

\begin{lem}\label{sigmaperp}
  It holds true that
  \[
\gc_n(x)^\sigma=\gc_n(x^\perp)
  \]
\end{lem}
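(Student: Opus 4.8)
The plan is to argue by induction on $n$, following the recursive definition of $\gc_n$. First I would record two elementary facts about $\sigma$. Applied coordinatewise to a vector $v\in\mathds{F}_p^n$, it sends each entry $v_j$ to $p-1-v_j=-1-v_j$, so that $v^\sigma=v^\perp$; and $\sigma$ is an involution, whence $(v^\sigma)^\sigma=v$. Under this identification the claim reads $\gc_n(x)^\perp=\gc_n(x^\perp)$, i.e.\ $\gc_n$ intertwines coordinatewise complementation on source and target. The base case $n=1$ is immediate, since $\gc_1=\id$ and then both sides equal $x^\perp$.

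For the inductive step I would write $x\in\mathds{F}_p^{n+1}$ as $x=(i,y)$ with leading (most significant) digit $i\in\mathds{F}_p$ and tail $y\in\mathds{F}_p^n$, and unfold the recursion $G(n+1,p)=(i\,G(n,p)^{\sigma_i})_i$. Because reflecting the block index by $\sigma_i$ corresponds on the $\bin_n$ side to replacing $y$ by $y^\perp$ (as $\bin_n(p^n-1-j)=\bin_n(j)^\perp$), the recursion becomes
\[
\gc_{n+1}(i,y)=
\begin{cases}
(i,\gc_n(y)), & i \text{ even},\\
(i,\gc_n(y^\perp)), & i \text{ odd}.
\end{cases}
\]
Invoking the inductive hypothesis $\gc_n(y^\perp)=\gc_n(y)^\sigma$ in the odd branch rewrites the tail purely in terms of $\gc_n(y)$ and $\sigma$.

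Next I would compute both sides of the asserted identity. Applying $\sigma$ coordinatewise to $\gc_{n+1}(i,y)$ complements the leading digit to $\sigma(i)$ and acts by $\sigma$ on the tail, using $\sigma^2=\id$ to collapse the odd branch. On the other side I would compute $\gc_{n+1}(x^\perp)=\gc_{n+1}(\sigma(i),y^\perp)$ directly from the same recursion, whose branch is now governed by the parity of $\sigma(i)$, and again apply the inductive hypothesis. Both results then carry leading digit $\sigma(i)$ and a tail equal to either $\gc_n(y)^\sigma$ or $\gc_n(y)$.

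The whole argument thus reduces to a single piece of parity bookkeeping: the two tails coincide exactly when $i$ and $\sigma(i)=p-1-i$ have the same parity. Since $\sigma(i)-i=p-1-2i\equiv p-1\pmod 2$, this holds precisely when $p$ is odd, and this is the only genuinely delicate point — for odd $p$ the branches line up and the induction closes. The binary case $p=2$, where the leading digit switches parity under complementation so that the coordinatewise reading of $\sigma$ no longer matches the sequence reflection, is exactly the situation set aside for the separate treatment in Section~4. As a sanity check I would verify the closed identity against the extreme values $\gc_n(\bin_n(0))=(0\dots0)$ and $\gc_n(\bin_n(p^n-1))=(-1\dots-1)$ from the preceding lemma, noting that $\bin_n(0)^\perp=\bin_n(p^n-1)$ makes these a convenient test of the statement.
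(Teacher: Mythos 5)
You have proved a different statement from the one the lemma makes, because you have misread the superscript $\sigma$. In this paper $\gc_n(x)^\sigma$ does not mean ``apply $\sigma$ entrywise to the vector $\gc_n(x)$''; it denotes the value at $x$ of the Gray code sequence taken in \emph{reverse order}, exactly as in the recursion $G(n,p)=\left(i\;G(n-1,p)^{\sigma_i}\right)_i$, where the superscript $\sigma_i$ reverses the order of the sequence $G(n-1,p)$. Under that reading the lemma says: the reversed Gray code evaluated at $x=\bin_n(i)$ equals the Gray code evaluated at $x^\perp$, and the paper's proof is two lines long --- reversing the order replaces the index $i$ by $p^n-1-i$, and $\bin_n(p^n-1-i)=d-x=x^\perp$ because subtracting from $d=(p-1,\dots,p-1)$ produces no carries. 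This holds for \emph{every} $p$, including $p=2$. Your entrywise reading turns the claim into $\gc_n(x)^\perp=\gc_n(x^\perp)$, which is a genuinely different assertion: it is Lemma~\ref{gcperp}, stated and proved later in the paper and only in the setting $p>2$, and it is in fact false for $p=2$, $n\ge 2$ (by linearity, $\gc_n(x^\perp)=\gc_n(x)+\gc_n(d)=\gc_n(x)+e_{n-1}$, whereas $\gc_n(x)^\perp=\gc_n(x)+d$). That you were forced to carve out $p=2$ --- even though the lemma sits in the section treating all primes, and is later invoked in the proof that $\gc_n\circ\gc_n=\id$ precisely in the sequence-reversal sense, to convert superscripts $\sigma$ produced by the recursion into complemented arguments --- is the visible symptom of the misreading; so is the fact that under your reading Lemma~\ref{sigmaperp} and Lemma~\ref{gcperp} would be the same statement, making one of them redundant.

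There is also a circularity to flag: the key step of your induction, the unfolding
\[
\gc_{n+1}(i,y)=
\begin{cases}
(i,\gc_n(y)), & i \text{ even},\\
(i,\gc_n(y^\perp)), & i \text{ odd},
\end{cases}
\]
justified by ``reflecting the block index corresponds on the $\bin_n$ side to replacing $y$ by $y^\perp$'', is precisely the content of Lemma~\ref{sigmaperp} in its intended reading, applied one level down. So your argument assumes the lemma it was supposed to prove and uses it to derive the stronger, odd-$p$-only statement of Lemma~\ref{gcperp}. As a proof of that later lemma your induction is essentially sound (and essentially the paper's own proof of it); as a proof of Lemma~\ref{sigmaperp} it is not what is needed --- what is needed is only the no-carry observation that you relegated to a parenthesis.
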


\begin{proof}
  The Gray code in reverse order at $x=\bin_n(i)$ is obtained by
  taking the element of the Gray code at the point whose index is $p^n-1-i$.
  But the $p$-adic representation of this number is
  $d-x=x^\perp$ (observe that in this case, there is no ``carry'' taking place, as the digits of $d$ are all highest possible: $p-1$).
\end{proof}


\section{Binary reflected Gray code}
Here, we consider the case $p=2$ separately.
Let 
\[
\bin_n\colon\mathds{Z}/2^n\mathds{Z}\to\mathds{F}_2^n 
\]
be the bijective map which takes every number to its binary representation.

\smallskip
A \emph{(cyclic) binary Gray code} is a bijective mapping
\[
g\colon\mathds{F}_2^n\to\mathds{F}_2^n
\]
such that for all $i\in\mathds{Z}/2^n\mathds{Z}$ it holds true that 
\[
d_H(\alpha(i),\alpha(i+1))=1
\]
for $\alpha=g\circ\bin_n$ and
where $d_H$ is the Hamming distance on $\mathds{F}_2^n$.

\smallskip
The \emph{binary reflected Gray code} \cite{Gray1953} is the mapping
\[
\gc_n\colon\mathds{F}_2^n\to\mathds{F}_2^n,\;x\mapsto x+(x\triangleright 1)
\]
where $x\triangleright k$ means a right shift by $k$ bits (cf.\ \cite{Hamilton2006}).

\smallskip
The map $\gamma_n\colon \mathds{Z}/2^n\mathds{Z}\to\mathds{F}_2^n$ is defined as $\gamma_n=\gc_n\circ\bin_n$.
According to \cite[Thmn.\ 2.1]{Hamilton2006}, these definitions are equivalent with the binary reflected Gray code defined there. Notice, that \cite{Hamilton2006} does not distinguish between $\gamma_n$ and $\gc_n$.
For this reason, we also call $\gamma_n$ a \emph{binary Gray code}.

\smallskip
The following observation explains the coinciding notation for $p=2$.

\begin{lem}
The binary reflected Gray code coincides with the $2$-adic reflected Gray code.
\end{lem}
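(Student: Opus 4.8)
The plan is to argue by induction on $n$, showing that the closed-form code of this section satisfies the very recursion that defines $G(n,2)$; since that recursion determines the listing uniquely, this forces the two codes to agree. Write $\gamma_n(i)=\gc_n(\bin_n(i))$ for the formula-based code and $(G(n,2)_i)_i$ for the recursively defined sequence, so that the claim becomes $\gamma_n(i)=G(n,2)_i$ for every $i\in\mathds{Z}/2^n\mathds{Z}$. The base case $n=1$ is immediate, since the formula gives $\gamma_1(0)=0$ and $\gamma_1(1)=1$, which is exactly $G(1,2)=(0,1)$. For the inductive step I would fix $\bin_n(i)=(b_{n-1},\dots,b_0)$ and read off $\gamma_n(i)=\bin_n(i)+(\bin_n(i)\triangleright 1)$ coordinatewise: its leading coordinate is $b_{n-1}$, and its remaining coordinates are $b_k+b_{k-1}$ for $k=n-1,\dots,1$, i.e.\ the word $(b_{n-1},\,b_{n-1}+b_{n-2},\dots,b_1+b_0)$.

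First I would treat the lower half $i<2^{n-1}$, where $b_{n-1}=0$. Then the leading coordinate of $\gamma_n(i)$ vanishes and, because $b_{n-1}+b_{n-2}=b_{n-2}$, the trailing $n-1$ coordinates are precisely $(b_{n-2},\,b_{n-2}+b_{n-3},\dots,b_1+b_0)=\gamma_{n-1}(i)$. Hence $\gamma_n(i)=(0\;\gamma_{n-1}(i))$, which by the induction hypothesis equals $(0\;G(n-1,2)_i)$, matching the $i=0$ block $0\;G(n-1,2)^{\sigma_0}$ of the recursion with $\sigma_0=\id$.

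The hard part will be the upper half $i=2^{n-1}+j$ with $0\le j<2^{n-1}$, where $b_{n-1}=1$ and the recursion prescribes the \emph{reflected} block $1\;G(n-1,2)^{\sigma}$, so the trailing word must equal $\gamma_{n-1}(2^{n-1}-1-j)$. On the formula side the leading coordinate of $\gamma_n(i)$ is $1$ and its trailing word is $(1+b_{n-2},\,b_{n-2}+b_{n-3},\dots,b_1+b_0)$. To identify this with $\gamma_{n-1}(2^{n-1}-1-j)$ I would use that $2^{n-1}-1$ has all digits equal to $1$, so subtracting $j$ triggers no borrow and $\bin_{n-1}(2^{n-1}-1-j)$ is the bitwise complement $(\overline{b}_{n-2},\dots,\overline{b}_0)$ of $\bin_{n-1}(j)$; this is the $p=2$ incarnation of the opposite $x^\perp$ used in Lemma \ref{sigmaperp}. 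Applying the Gray formula to this complement and invoking $\overline a+\overline b=a+b$ in $\mathds{F}_2$ collapses every non-leading coordinate back to $b_k+b_{k-1}$, while the leading coordinate becomes $\overline{b}_{n-2}=1+b_{n-2}$. Thus $\gamma_{n-1}(2^{n-1}-1-j)=(1+b_{n-2},\,b_{n-2}+b_{n-3},\dots,b_1+b_0)$, exactly the trailing word of $\gamma_n(i)$, giving $\gamma_n(i)=(1\;\gamma_{n-1}(2^{n-1}-1-j))=(1\;G(n-1,2)^{\sigma}_j)$.

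Both blocks therefore reproduce the recursive definition, so $\gamma_n=G(n,2)$ for all $n$ and the two codes coincide. The single genuinely delicate point is the reflection step: one must see that reversing the listing of $G(n-1,2)$ corresponds, on the closed-form side, to complementing the input digits, and it is precisely the characteristic-$2$ identity $\overline a+\overline b=a+b$ that leaves the low-order Gray coordinates invariant under this complement while flipping only the top coordinate, thereby realising the reflection $\sigma$.
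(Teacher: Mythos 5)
Your proof is correct, but it takes a genuinely different route from the paper: the paper offers no argument at all for this lemma, simply citing \cite{Knuth2004}, whereas you give a self-contained induction showing that the closed-form code $x\mapsto x+(x\triangleright 1)$ satisfies the defining recursion of $G(n,2)$ and hence must coincide with it. Your two block computations are sound: in the lower half the leading bit $b_{n-1}=0$ makes the shift formula restrict verbatim to $\gamma_{n-1}(i)$, and in the upper half the key observation — that $\bin_{n-1}(2^{n-1}-1-j)$ is the bitwise complement of $\bin_{n-1}(j)$ (no borrow, since $2^{n-1}-1$ has all digits maximal), combined with the characteristic-$2$ identity $\overline a+\overline b=a+b$ — correctly converts the recursion's reversal $\sigma$ into a complementation that the shift formula absorbs in all but the top coordinate. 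This is in effect a direct $p=2$ re-derivation of the reflection principle that the paper records abstractly in Lemma \ref{sigmaperp}, and you rightly flag it as the delicate step. What your approach buys is self-containedness: the lemma no longer rests on an external reference, and the argument exposes exactly why the shift formula realises the ``reflected'' structure. What the paper's citation buys is brevity and the backing of a standard source; nothing more. One small point of care: the paper's notation $G(n-1,p)^{\sigma_i}$ with $\sigma_i\in S_p$ is loose (the reversal acts on a sequence of length $2^{n-1}$, not $p$), and your reading of it as reversal of the whole block is the intended one, so no issue arises.
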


\begin{proof}
This is proven in \cite{Knuth2004}.
\end{proof}

\begin{lem}\label{gc=aut}
The mapping $\gc_n$ is a linear automorphism of the $\mathds{F}_2$-vector space $\mathds{F}_2^n$.
\end{lem}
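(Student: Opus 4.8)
The plan is to decompose $\gc_n$ as a sum of two maps, argue linearity coordinatewise, and then establish the automorphism property either by a triangularity argument or by writing down an explicit inverse.

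First I would write $\gc_n=\id+S$, where $S\colon\mathds{F}_2^n\to\mathds{F}_2^n$ is the right shift $x\mapsto x\triangleright 1$. The crucial observation is that, although the right shift on integers equals the floor of division by two and is \emph{not} additive on $\mathds{Z}$, on bit strings it becomes genuinely $\mathds{F}_2$-linear: in coordinates one has $(Sx)_j=x_{j+1}$ for $0\le j\le n-2$ and $(Sx)_{n-1}=0$, with no carries occurring because the arithmetic takes place in $\mathds{F}_2$. Thus $S$ is described by a $0/1$ matrix and is $\mathds{F}_2$-linear, and $\gc_n$, being the sum of the identity and $S$, is $\mathds{F}_2$-linear as well.

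For the automorphism property, since $\gc_n$ is an endomorphism of the finite-dimensional space $\mathds{F}_2^n$, it suffices to show injectivity (equivalently, invertibility). I would do this by evaluating $\gc_n$ on the standard basis: $\gc_n(e_0)=e_0$, and $\gc_n(e_i)=e_i+e_{i-1}$ for $1\le i\le n-1$, using $e_i\triangleright 1=e_{i-1}$ together with $e_0\triangleright 1=0$. Hence the matrix of $\gc_n$ in the basis $e_0,\dots,e_{n-1}$ is bidiagonal, triangular with $1$'s along the diagonal; its determinant equals $1\ne 0$ in $\mathds{F}_2$, so $\gc_n$ is invertible and therefore a linear automorphism.

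Alternatively, and perhaps more transparently, I would exhibit a two-sided inverse directly. Writing $y=\gc_n(x)$, the coordinate identities $y_{n-1}=x_{n-1}$ and $y_i=x_i+x_{i+1}$ for $i\le n-2$ can be solved from the top down via $x_{n-1}=y_{n-1}$ and $x_i=y_i+x_{i+1}$, which is exactly the usual prefix-$\mathds{F}_2$-sum (Gray-to-binary) conversion; this settles bijectivity at once. I expect no serious obstacle here: the only point demanding care is the first one, namely verifying that the right shift, which fails to be additive as an operation on $\mathds{Z}$, is nonetheless $\mathds{F}_2$-linear when read on $\mathds{F}_2^n$, precisely because bitwise addition involves no carry.
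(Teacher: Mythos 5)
Your proposal is correct, and its linearity half is essentially the paper's own argument: the paper proves additivity by the direct computation $\gc_n(x)+\gc_n(y)=(x+y)+[(x+y)\triangleright 1]=\gc_n(x+y)$, whose crux is precisely your observation that the right shift is additive on bit strings because bitwise addition in $\mathds{F}_2^n$ involves no carries; writing $\gc_n=\id+S$ with $S$ linear is just a repackaging of that. Where you genuinely go beyond the paper is the automorphism half: the paper disposes of it with ``Clearly, $\gc_n$ is a bijection'' and offers no argument, whereas you supply two --- the unitriangular matrix of $\gc_n$ in the basis $e_0,\dots,e_{n-1}$ (determinant $1$), and the explicit top-down solution $x_{n-1}=y_{n-1}$, $x_i=y_i+x_{i+1}$, i.e.\ the Gray-to-binary prefix sum. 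The latter is a nice bonus: it is exactly the upper-triangular matrix $A=(\alpha_{ij})$ with $\alpha_{ij}=1$ for $i\le j$ that the paper itself writes down much later (in the indexing section) as $\gc_n^{-1}$, so your proof both closes the gap the paper leaves and anticipates a formula the paper needs anyway.
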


\begin{proof}
On $\mathds{F}_2^n$ we consider the bitwise addition $+$ of binary numbers. Clearly, $\gc_n$ is a bijection.
It follows from the definition that
\begin{align*}
\gc_n(x)+\gc_n(y)&=x +(x\triangleright 1) + y+ (y\triangleright 1)
\\
&=(x+y)+[(x+y)\triangleright 1]\\
&=\gc_n(x+y)
\end{align*}
as, clearly,
\[
(x\triangleright 1)+(y\triangleright 1)=(x+y)\triangleright 1
\]
Hence, $\gc_n$ is also $\mathds{F}_2$-linear.
\end{proof}

\begin{rem}
Observe that for $p>2$, the Gray code $\gc_n$ is not linear. E.g.\
let $x=(0,\dots,0,1,0)$, $y=(0,\dots,0,p-1,0)$. Then
\begin{align*}
\gc_n(x)&=(0,\dots,0,1,p-1)\\
\gc_n(y)&=(0,\dots,0,p-1,0)\\
\gc_n(x+y)&=\gc_n(\bin_n(0))=\bin_n(0)\\
\gc_n(x)+\gc_n(y)&=(0,\dots,0,p-1)
\end{align*}
which contradicts linearity.
\end{rem}


\section{Transformed Gray code}\label{transformedGC}

\subsection{Case $p=2$}

The binary reflected Gray code $\gamma_n$ has the starting point $e'=(0,\dots,0)$ and the
ending point $f'=(1,0,\dots,0)=e_{n-1}$. The \emph{direction} $d'$ of $\gc_n$ is given by
\[
e'+f'=e_{d'}
\]
So, $d'=n-1$. Any bijection $T$ of $\mathds{F}_2^n$
 which makes $\gc_n$ to a (cyclic) binary Gray code 
 $\gc_T=T^{-1}\circ\gc_n$ must satisfy:
 \[
 d_H(\gamma(i),\gamma(i+1))=1
 \]
 for all $i\in\mathds{Z}/n\mathds{Z}$, where
 \[
 \gamma_T=T^{-1}\circ\gamma_n
 \]
 In particular, $\gc_T$ must satisfy
 \begin{align*}
\gc_T(e')+\gc_T(f')=e_d 
 \end{align*}
 for some $d\in\mathset{0,\dots,n-1}$.
 This means that $T$ takes $\gc_T(e')$ to $e'$ and $e_d$ to $e_{d'}$.
 Furthermore, for all $i\in\mathds{Z}/n\mathds{Z}$ we have
 \[
 \gamma_T(i)+\gamma_T(i+1)=e_j
 \]
and also 
 \[
 \gamma_n(i)+\gamma_n(i+1)=e_{\sigma^{-1}}(j)
\]
with some $j\in\mathds{Z}/n\mathds{Z}$, and a 
permutation $\sigma=\sigma_d$ of $\mathset{0,\dots,n-1}$ induced by $T$,
which takes $n-1$ to $d$.

\smallskip
We will not investigate all possible Hamming distance preserving bijections, but are interested in affine transformations, because they contain a translation which takes $\gc_T(e')$ to $e'$.

\smallskip
Let $e\in\mathds{F}_2^n$. Then the translation with $e$  is denoted as
\[
A_e\colon\mathds{F}_2^n\to\mathds{F}_2^n,\;x\mapsto x+e
\]

\begin{thm}\label{affinePerm}
If $T$ is an affine transformation
such that $\gc_T$ is a Gray code, then it can be written a translation followed by an index permutation:
\[
T=\sigma_d\circ A_e
\]
with prescribed $e\in\mathds{F}_2^n$ 
for some permutation $\sigma_d\in S_n$. 
\end{thm}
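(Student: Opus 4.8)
The plan is to reduce the statement to a claim about the linear part of $T$ alone. Write the affine transformation as $T(x)=Mx+b$ with an invertible linear map $M\in\GL_n(\mathds{F}_2)$ and $b\in\mathds{F}_2^n$; then $T^{-1}$ is again affine with linear part $M^{-1}$. The only feature of $\gc_T$ that I would use is that consecutive values of $\gamma_T=T^{-1}\circ\gamma_n$ lie at Hamming distance one, i.e.\ $\gamma_T(i)+\gamma_T(i+1)=e_{k_i}$ for each $i\in\mathds{Z}/2^n\mathds{Z}$ and some index $k_i$ (bijectivity of $\gc_T$ being automatic, since $T$ and $\gc_n$ are bijective).

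First I would exploit the fact that, over $\mathds{F}_2$, the translation part of an affine map cancels in a consecutive difference: since $T^{-1}(u)+T^{-1}(v)=M^{-1}(u+v)$ for all $u,v$, the two copies of the translation vector add to zero, giving
\[
\gamma_T(i)+\gamma_T(i+1)=M^{-1}\bigl(\gamma_n(i)+\gamma_n(i+1)\bigr).
\]
Because $\gamma_n$ is itself a (binary reflected) Gray code, the argument on the right is a unit vector $e_{j_i}=\gamma_n(i)+\gamma_n(i+1)$, so the Gray-code condition on $\gc_T$ becomes exactly the requirement that $M^{-1}e_{j_i}$ be a unit vector for every $i$. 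Thus the translation $b$ is irrelevant, and everything hinges on $M^{-1}$ sending the transition directions of $\gamma_n$ to standard basis vectors.

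The key step, and the one I expect to be the \emph{main obstacle}, is to show that every coordinate direction actually occurs as a transition of $\gamma_n$, so that the constraint pins down $M^{-1}$ completely rather than on a subset of the basis. This follows from bijectivity: if some $e_{j_0}$ never appeared as $\gamma_n(i)+\gamma_n(i+1)$, then the $j_0$-th coordinate of $\gamma_n(i)$ would be constant in $i$, contradicting that $\gamma_n$ is a bijection onto all of $\mathds{F}_2^n$. Hence $\{j_i\}=\{0,\dots,n-1\}$, and $M^{-1}$ maps each standard basis vector to a standard basis vector; being invertible it permutes them, so $M^{-1}$, and therefore $M$, is a coordinate-permutation matrix.

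It remains to package this as $T=\sigma_d\circ A_e$. Writing the coordinate permutation as $Mx=x^{\sigma_d}$ for the induced $\sigma_d\in S_n$, one has $T(x)=x^{\sigma_d}+b=(x+e)^{\sigma_d}$ with $e=b^{\sigma_d^{-1}}$, using that permuting coordinates is additive; this is precisely $\sigma_d\circ A_e$, and the prescribed translation vector $e$ is the one fixing the starting point of $\gc_T$ as set up before the theorem. Finally I would verify that the permutation $\sigma_d$ so obtained takes $n-1$ to the prescribed direction $d$, matching the normalisation $\gc_T(e')+\gc_T(f')=e_d$ recorded above.
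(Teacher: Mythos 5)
Your proof is correct and follows essentially the same route as the paper: decompose $T$ into a linear part plus a translation, show the linear part must permute the standard basis vectors, and then rewrite $T(x)=x^{\sigma_d}+b=(x+e)^{\sigma_d}=(\sigma_d\circ A_e)(x)$. In fact your write-up is more complete than the paper's own proof, which merely asserts (``we have already seen'') that the linear part permutes the $e_i$; your two supporting observations --- that the translation cancels in consecutive differences over $\mathds{F}_2$, and that every coordinate direction occurs as a transition of $\gamma_n$ because otherwise some coordinate of $\gamma_n$ would be constant, contradicting bijectivity --- are exactly the details needed to justify that assertion.
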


Here, we keep $d=\sigma_d^{-1}(n-1)$ in the notation of $\sigma_d$.

\begin{proof}
If $T$ is an affine transformation of $\mathds{F}_2^n$, then it can be written in the form
\[
T\colon x\mapsto \Lambda x + e^\sigma
\]
with some $\Lambda\in\GL_n(\mathds{F}_2)$ and some $e^\sigma\in\mathds{F}_2^n$. We have already seen that $\Lambda$ takes each $e_i$ to some $e_{\sigma(i)}$, and $e_d$ to $e_{n-1}$. Hence,
\[
x\mapsto x^\sigma+e^\sigma=(x+e)^\sigma
\]
This proves the assertion.
\end{proof}

We call $\gc_T$ an \emph{affine Gray code} if 
\[
\gc_T=T^{-1}\circ\gc_n
\]
where $T$ is an affine transformation.

\begin{cor}\label{numberOfGC}
There are precisely $(n-1)!$ affine Gray codes for $\mathds{F}_2^n$ with prescribed starting point $e$ and direction $d$.
\end{cor}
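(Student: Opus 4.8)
The plan is to leverage Theorem~\ref{affinePerm} to turn the enumeration into a count of permutations. By that theorem, every affine transformation $T$ making $\gc_T$ a Gray code is of the form $T=\sigma_d\circ A_e$, so an affine Gray code is encoded by a pair $(e,\sigma_d)\in\mathds{F}_2^n\times S_n$. I would then show that the two pieces of prescribed data---starting point and direction---decouple into one constraint on $e$ and one constraint on $\sigma_d$, and finally check that every admissible pair really yields a distinct, valid Gray code, so that the resulting count is exact.

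First I would pin down the starting point. Writing $\gamma_T=T^{-1}\circ\gamma_n$ and recalling $\gamma_n(0)=\gc_n(\bin_n(0))=(0\dots0)$, the inverse $T^{-1}=A_e\circ\sigma_d^{-1}$ sends $0$ to the translation vector $e$ (the coordinate permutation fixes the zero vector), so
\[
\gamma_T(0)=T^{-1}(0)=e.
\]
Thus prescribing the starting point is precisely prescribing the translation $A_e$, independently of $\sigma_d$. Next I would read off the direction from the ending point: using the values $e'=(0\dots0)$ and $f'=e_{n-1}$ of $\gamma_n$ recalled at the start of this section, the map $T^{-1}$ moves the single nonzero entry of $f'$ from the distinguished index $n-1$ to the index that $\sigma_d$ associates to it, and then translates by $e$, so that the sum of the starting and ending points of $\gamma_T$ is a single standard basis vector whose index is governed by the action of $\sigma_d$ on $n-1$. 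Hence prescribing the direction to be $d$ is, in the convention kept for $\sigma_d$, exactly the single equation $\sigma_d^{-1}(n-1)=d$; it fixes one value of the permutation and nothing more.

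With both reductions in hand the count is combinatorial: the permutations $\sigma_d\in S_n$ of $\mathset{0,\dots,n-1}$ subject only to the single constraint $\sigma_d^{-1}(n-1)=d$ are in bijection with the arbitrary bijections of the remaining $n-1$ indices, of which there are $(n-1)!$. To see that this upper bound is attained, I would verify two things. First, every such pair $(e,\sigma_d)$ does give a cyclic binary Gray code: since $T^{-1}$ is a coordinate permutation followed by a translation, and both preserve the Hamming distance $d_H$, consecutive values of $\gamma_T=T^{-1}\circ\gamma_n$ still differ in exactly one coordinate. Second, distinct admissible permutations give distinct codes: if two pairs share the prescribed $e$ and satisfy $\gc_T=T^{-1}\circ\gc_n=(T')^{-1}\circ\gc_n$, then bijectivity of $\gc_n$ forces $T=T'$ and hence $\sigma_d=\sigma_d'$. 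Together these give exactly $(n-1)!$ affine Gray codes with the prescribed starting point and direction.

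I expect the only delicate point to be the decoupling step: confirming that ``prescribed starting point'' and ``prescribed direction'' impose exactly the constraints ``$e$ fixed'' and ``$\sigma_d^{-1}(n-1)=d$'', with no further hidden restriction on $\sigma_d$. Once this is established, the Hamming-distance argument guarantees that every one of the $(n-1)!$ admissible permutations is realised, so the enumeration is tight.
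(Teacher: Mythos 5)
Your proof is correct and follows essentially the same route as the paper, whose entire proof is ``this is an immediate consequence of Theorem~\ref{affinePerm}'': you simply unpack why it is immediate, by reducing the prescribed starting point to the choice of $e$, the prescribed direction to the single constraint $\sigma_d^{-1}(n-1)=d$, and counting the remaining $(n-1)!$ permutations. Your extra verification that every admissible pair really yields a Gray code (Hamming-distance preservation) and that distinct permutations yield distinct codes is a welcome completion of a step the paper leaves implicit.
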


\begin{proof}
This is an immediate consequence of Theorem \ref{affinePerm}.
\end{proof}

\subsection{Case $p>2$}

  We define a \emph{corner} of $\mathds{F}_p^n$ to be a vector
  $v\in\mathds{F}_p^n$ whose components $v_i$ are in $\mathset{0,-1}$.
The set of all corners of $\mathds{F}_p^n$ will be denoted as $\mathcal{C}$.

\smallskip
Assume $p$ odd. Then let for $e\in\mathcal{C}$:
\begin{align*}
\mathcal{T}_e
=&\left\{\text{affine transformations $T\colon\mathds{F}_p^n\to\mathds{F}_p^n$
    s.t.} \right. 
\\  
&\quad\text{$T(e)=0$, $T(e^\perp)=d$, $T(\mathcal{C})=\mathcal{C}$, and}
\\
&\quad\text{$x-y=e_i\;\Rightarrow\;T(x)-T(y)=\pm e_{\tau(i)}$ for all }
\\&
\left.\quad\text{$i\in\mathset{0,\dots,n-1}$ and
  some permutation $\tau$.}
\right\}
\end{align*}

\begin{thm}\label{affineGrayTransform}
  Let $T\colon\mathds{F}_p^n\to\mathds{F}_p^n$ be an affine transformation.
  It holds true that
  $T\in\mathcal{T}_e$ if and only if $T$ is given as follows:
  \[
T\colon x\mapsto A(x-e)
  \]
  with $A\in \GL_n(\mathds{F}_p)$ such that
  \[
  Ae_i=\begin{cases}
  -e_{\tau(i)},&i\in\supp(e)\\
  +e_{\tau(i)},&i\in\supp(e^\perp)
  \end{cases}
  \]
for some permutation $\tau$ of the set $\mathset{0,\dots,n-1}$.
\end{thm}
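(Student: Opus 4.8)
The plan is to prove the two implications separately, treating the four defining conditions of $\mathcal{T}_e$ as a checklist and exploiting that, for a corner, $\supp$ records its nonzero entries: thus $i\in\supp(e)\Leftrightarrow e_i=-1$ and $i\in\supp(e^\perp)\Leftrightarrow e_i=0$, so the two supports partition $\{0,\dots,n-1\}$. For the sufficiency direction I would set $T(x)=A(x-e)$ with $A$ as specified and verify all four conditions. Two are immediate: $T(e)=A(e-e)=0$, and for $x-y=e_i$ one has $T(x)-T(y)=A(x-y)=Ae_i=\pm e_{\tau(i)}$, which is the fourth condition with the prescribed $\tau$. For $T(e^\perp)=d$ I would note that, since $e^\perp=d-e$, the $i$-th entry of $e^\perp-e$ is $-1-2e_i$, equal to $+1$ on $\supp(e)$ and to $-1$ on $\supp(e^\perp)$; hence $e^\perp-e=\sum_{i\in\supp(e)}e_i-\sum_{i\in\supp(e^\perp)}e_i$, and applying $A$ with its sign rule turns each summand into $-e_{\tau(i)}$, so the sum collapses to $-\sum_j e_j=d$ because $\tau$ is a bijection.

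The only condition requiring real work in this direction is $T(\mathcal{C})=\mathcal{C}$. Here I would proceed componentwise: the $\tau(i)$-th coordinate of $A(c-e)$ equals $\varepsilon_i(c_i-e_i)$, where $\varepsilon_i\in\{+1,-1\}$ is the sign attached to $i$ by $A$. A short case analysis on $(c_i,e_i)\in\{0,-1\}^2$ then shows this value always lands in $\{0,-1\}$, so $A(c-e)$ is again a corner. Since $T$ is a bijection of $\mathds{F}_p^n$ sending the finite set $\mathcal{C}$ into itself, it restricts to a bijection of $\mathcal{C}$, which gives the claimed equality.

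For the converse, suppose $T\in\mathcal{T}_e$ and write the affine map as $T(x)=Bx+c$. The fourth condition says that whenever $x-y=e_i$ the difference $T(x)-T(y)=Be_i$ equals $\pm e_{\tau(i)}$ for one fixed permutation $\tau$; hence $B$ sends each basis vector to a signed basis vector along $\tau$, so $B$ is a signed permutation matrix and in particular lies in $\GL_n(\mathds{F}_p)$. The first condition $T(e)=0$ forces $c=-Be$, whence $T(x)=B(x-e)$, which already matches the claimed form with $A=B$. It remains only to pin down the signs, and this is exactly where the second condition enters: expanding $B(e^\perp-e)$ as above and imposing $B(e^\perp-e)=d$ forces the coefficient of $e_{\tau(i)}$ to be $-1$ precisely for $i\in\supp(e)$ and $+1$ precisely for $i\in\supp(e^\perp)$, which is the sign rule in the statement. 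Notably, the third condition is then automatic and need not be invoked.

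The argument is essentially a sign-bookkeeping exercise over $\mathds{F}_p$, and the one place demanding care is the interplay between the corner convention (entries in $\{0,-1\}$) and the two supports. The step I expect to be most delicate is the sign determination from the second condition together with the corner-preservation check: one must use throughout that $\supp(e)$ and $\supp(e^\perp)$ partition the index set and that the sign pattern of $A$ is precisely the one sending $e^\perp-e$ to $d$ while mapping $\{0,-1\}$-vectors to $\{0,-1\}$-vectors. Beyond this tracking of signs no genuine obstacle arises.
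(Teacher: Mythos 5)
Your proof is correct, and the sufficiency direction matches the paper's in substance: the paper expands $T(c')=Ac'+c$ as a signed sum over $\supp(e)\cap C'$ and $\supp(e^\perp)\cap C'$ to see that it is a corner, while you check the coordinates of $A(c-e)$ one index at a time; both arguments then conclude $T(\mathcal{C})=\mathcal{C}$ from bijectivity of $T$ and finiteness of $\mathcal{C}$. The necessity direction is where you genuinely diverge. The paper invokes the corner condition to get $c=T(0)\in\mathcal{C}$, then pins down the signs of $A$ on $\supp(e)$ from $Ae=-c$ together with $c\in\mathcal{C}$, and the signs on $\supp(e^\perp)$ from $Ae^\perp=d-c=c^\perp\in\mathcal{C}$; it therefore uses all four defining conditions of $\mathcal{T}_e$. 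You instead obtain $T(x)=B(x-e)$ from $T(e)=0$ and the unit-difference condition alone, and then determine every sign in one stroke by comparing coefficients of $e_{\tau(i)}$ in $B(e^\perp-e)=d$, which is legitimate because $\tau$ is a bijection and $1\neq -1$ in $\mathds{F}_p$ for odd $p$. This buys a by-product the paper's proof does not give: the condition $T(\mathcal{C})=\mathcal{C}$ is redundant in the definition of $\mathcal{T}_e$, since your necessity argument never uses it and your sufficiency argument recovers it from the other three conditions. The paper's variant, in exchange, keeps corner preservation as the visible mechanism forcing the signs, which fits the geometric reading of $\mathcal{T}_e$. Finally, your reading of $\supp$ for corners (the indices of the nonzero, i.e.\ $-1$, entries, so that $\supp(e)$ and $\supp(e^\perp)$ partition $\{0,\dots,n-1\}$) is exactly what the paper implicitly uses, even though its literal definition of support asks for $x_i=1$; flagging this was appropriate rather than pedantic.
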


We call $A$ the \emph{linear part} of $T$, and $\tau$ the permutation \emph{associated with $T$}.

\begin{proof}
  The affine transformation can be written as
  \[
T\colon x\mapsto Ax+c
\]
with $c\in\mathds{F}_p^n$.
  
  $\Leftarrow$. If $T$ is as asserted, then $c=-Ae$ and
  \[
  T(e)= Ae+c=0
  \]
  Likewise,
  \begin{align*}
  T(e^\perp)&=Ae^\perp-Ae\\
&  =\sum\limits_{i\in\supp(e^\perp)}-e_{\tau(i)}
  +\sum\limits_{i\in\supp(e)}-e_{\tau(i)}
  \\
&  =d
  \end{align*}
  Let $c'\in\mathcal{C}$, and set $C'=\supp(c')$. Then
  \begin{align*}
    T(c')&=Ac'+c\\
&    =\sum\limits_{i\in\supp(e)\cap C'}e_{\tau(i)}
    -\sum\limits_{i\in\supp(e^\perp)\cap C'}e_{\tau(i)}-Ae\\
    &=-\sum\limits_{i\in\supp(e^\perp)}e_{\tau(i)}-\sum\limits_{i\in\supp(e)\setminus C'}e_{\tau(i)}
    \end{align*}
  This is a corner. Now, assume that $x-y=e_i$. Then
  \[
Tx-Ty=Ax-Ay=Ae_i=\pm e_{\tau(i)}
  \]
Hence, we have shown that $T\in\mathcal{T}_e$.
  
  \smallskip
  $\Rightarrow$.
  Let $T\in\mathcal{T}_e$. As $0\in \mathcal{C}$, it follows that
  \[
c=T(0)\in\mathcal{C}
  \]
  Assume now that $x-y=e_i$ for $x,y\in\mathds{F}_p^n$. Then
  \[
Ae_i=Ax-Ay=Tx-Ty=\pm e_{\tau(i)}
\]
for some permutation $\tau\in S_n$. As $T(e)=0$, we have $Ae=-c$. Hence,
\[
-c=Ae=\sum\limits_{i\in\supp(e)}\alpha_i e_{\tau(i)}
\]
with $\alpha_i\in\mathset{\pm 1}$ for $i\in\supp(e)$. As $c\in\mathcal{C}$, it follows that $\alpha_i=1$ for $i\in\supp(e)$. Hence,
\[
A e_i=-e_{\tau(i)}
\]
if $i\in\supp(e)$. As $T(e^\perp)=d$, it follws that
\[
Ae^\perp=d-c=c^\perp\in\mathcal{C}
\]
Hence,
\[
c^\perp=Ae^\perp=\sum\limits_{i\in\supp(e^\perp)}\alpha_i e_{\tau(i)}
\]
with $\alpha_i\in\mathset{\pm 1}$ for $i\in\supp(e^\perp)$. It follows that
$\alpha_i=-1$ for all $i\in\supp(e^\perp)$. Hence,
\[
A e_i=e_{\tau(i)}
\]
for all $i\in\supp(e^\perp)$. This proves the assertion about $T$.  
\end{proof}

\begin{cor}\label{noOfGrayCodes}
It holds true that $\absolute{\mathcal{T}_e}=n!$.
\end{cor}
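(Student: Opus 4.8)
The plan is to read Theorem~\ref{affineGrayTransform} as an explicit parametrisation of $\mathcal{T}_e$ by the symmetric group $S_n$, and then merely count. I would introduce the map
\[
\Phi\colon S_n\to\mathcal{T}_e
\]
sending a permutation $\tau$ to the affine transformation $T_\tau\colon x\mapsto A_\tau(x-e)$, where $A_\tau\in\GL_n(\mathds{F}_p)$ is the linear map prescribed by the theorem, namely $A_\tau e_i=-e_{\tau(i)}$ for $i\in\supp(e)$ and $A_\tau e_i=+e_{\tau(i)}$ for $i\in\supp(e^\perp)$. Proving that $\Phi$ is a bijection immediately yields $\absolute{\mathcal{T}_e}=\absolute{S_n}=n!$.

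I would verify three things. For \emph{well-definedness}, for each $\tau$ the matrix $A_\tau$ is invertible, since it carries the standard basis $(e_i)$ to $(\pm e_{\tau(i)})$, which is again a basis of $\mathds{F}_p^n$ because $\tau$ is a bijection and $\pm 1$ are units in $\mathds{F}_p$; the $\Leftarrow$ direction of Theorem~\ref{affineGrayTransform} then guarantees $T_\tau\in\mathcal{T}_e$, so $\Phi$ indeed lands in $\mathcal{T}_e$. For \emph{surjectivity}, this is exactly the $\Rightarrow$ direction, which asserts that every $T\in\mathcal{T}_e$ has the displayed form for some permutation $\tau$, i.e.\ $T=\Phi(\tau)$. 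For \emph{injectivity}, I would recover $\tau$ from $T_\tau$ by observing that $A_\tau e_i=\pm e_{\tau(i)}$ has its unique nonzero coordinate in position $\tau(i)$; thus $\tau$ is read off directly from the linear part of $T_\tau$, so $\Phi(\tau)=\Phi(\tau')$ forces $\tau=\tau'$.

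The only point requiring any care is the faithfulness of the parametrisation: one must check that $\tau$ is genuinely recoverable from $T$ and that the sign pattern — forced to be $-1$ on $\supp(e)$ and $+1$ on $\supp(e^\perp)$ — leaves no additional freedom that could collapse or multiply the count. But this rigidity is precisely what the proof of Theorem~\ref{affineGrayTransform} already pins down, so no genuinely new obstacle arises, and the counting is then purely formal.
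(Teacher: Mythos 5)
Your proof is correct and takes essentially the same route as the paper: the paper's own proof simply notes that $T\in\mathcal{T}_e$ is completely determined by its associated permutation $\tau$ and declares the map $\mathcal{T}_e\to S_n$, $T\mapsto\tau$, to be bijective, which is exactly your bijection $\Phi$ read in the opposite direction. Your version merely spells out the well-definedness, surjectivity, and injectivity checks that the paper leaves implicit.
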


\begin{proof}
  $T\in\mathcal{T}_e$ is completely determined by the associated permutation $\tau$
as  in Theorem \ref{affineGrayTransform}.
The map
  \[
\mathcal{T}_e\to S_n,\;T\mapsto\tau
\]
is clearly bijective.
\end{proof}

\begin{cor}\label{A&d}
  Let $A$ be the linear part of $T\in\mathcal{T}_e$. Then
  \[
A(e^\perp-e)=d
  \]
\end{cor}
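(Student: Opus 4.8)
The plan is to exploit the explicit normal form of $T$ supplied by Theorem \ref{affineGrayTransform} together with the defining properties of the set $\mathcal{T}_e$. The quickest route is almost tautological: since $T\in\mathcal{T}_e$, Theorem \ref{affineGrayTransform} writes $T$ as $x\mapsto A(x-e)$, so evaluating at $x=e^\perp$ gives $T(e^\perp)=A(e^\perp-e)$. On the other hand, membership in $\mathcal{T}_e$ forces $T(e^\perp)=d$ by definition. Comparing the two expressions yields $A(e^\perp-e)=d$ at once. I would, however, prefer to record the argument in a self-contained way that re-derives the identity from the action of $A$ on the basis, rather than merely re-citing the definition.

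For that direct computation I would first use linearity to write $A(e^\perp-e)=Ae^\perp-Ae$, and then expand $e$ and $e^\perp$ in the standard basis. The essential bookkeeping point is that $e$ is a corner, so its nonzero components equal $-1$; hence $e=-\sum_{i\in\supp(e)}e_i$ and likewise $e^\perp=-\sum_{i\in\supp(e^\perp)}e_i$. Substituting the prescribed values $Ae_i=-e_{\tau(i)}$ for $i\in\supp(e)$ and $Ae_i=+e_{\tau(i)}$ for $i\in\supp(e^\perp)$ then produces $Ae=\sum_{i\in\supp(e)}e_{\tau(i)}$ and $Ae^\perp=-\sum_{i\in\supp(e^\perp)}e_{\tau(i)}$.

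Finally I would combine these and invoke the fact that $\supp(e)$ and $\supp(e^\perp)$ partition $\mathset{0,\dots,n-1}$, which holds because $e+e^\perp=d$ forces exactly one of $e_i,e^\perp_i$ to be $-1$ for each index $i$. Since $\tau$ is a permutation, the two sums recombine into $\sum_{j=0}^{n-1}e_j$, and the overall minus sign gives $A(e^\perp-e)=-\sum_{j=0}^{n-1}e_j=d$, as desired.

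The only place where one can slip is the sign bookkeeping: because corners carry the value $-1$ rather than $+1$, each corner acquires an extra minus sign when written in the unit-vector basis, and it is precisely this sign that converts the combined sum $\sum_{j}e_j=(1,\dots,1)$ into $d=(-1,\dots,-1)$. Beyond this there is no genuine obstacle, since the identity is essentially a reading-off of the computation already performed when verifying $T(e^\perp)=d$ inside the proof of Theorem \ref{affineGrayTransform}.
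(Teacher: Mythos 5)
Your proposal is correct, and the direct computation in your second and third paragraphs is essentially the paper's own proof: expand $e^\perp-e$ in the standard basis (with the sign coming from the corner entries being $-1$), apply the prescription $Ae_i=-e_{\tau(i)}$ on $\supp(e)$ and $Ae_i=+e_{\tau(i)}$ on $\supp(e^\perp)$ from Theorem \ref{affineGrayTransform}, and recombine the two sums into $-\sum_{j=0}^{n-1}e_j=d$ using that $\supp(e)$ and $\supp(e^\perp)$ partition $\mathset{0,\dots,n-1}$ and that $\tau$ is a permutation. Your opening shortcut, reading $A(e^\perp-e)=T(e^\perp)=d$ directly off the normal form $T(x)=A(x-e)$ and the defining condition $T(e^\perp)=d$ of $\mathcal{T}_e$, is also valid and even more economical than what the paper records.
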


\begin{proof}
  From Theorem \ref{affineGrayTransform}, we have
  \begin{align*}
  A(e^\perp-e)&=\sum\limits_{i\in\supp(e^\perp)}-Ae_i+\sum\limits_{i\in\supp(e)}Ae_i
  \\
&  =\sum\limits_{i=0}^{n-1}-e_{\tau(i)}=d
  \end{align*}
  as asserted.
\end{proof}

\begin{cor}
The transformation $T\in\mathcal{T}_e$ maps opposite points to opposite points.
\end{cor}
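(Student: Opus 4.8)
The plan is to reduce the statement to a short linear computation using the explicit form of $T$ from Theorem \ref{affineGrayTransform} together with Corollary \ref{A&d}. By definition, two points $x,y\in\mathds{F}_p^n$ are opposite precisely when $x+y=d$, so what I must show is that $x+y=d$ implies $T(x)+T(y)=d$.

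First I would write $T$ in the form $T\colon z\mapsto A(z-e)=Az-Ae$ supplied by Theorem \ref{affineGrayTransform}, where $A$ is the linear part of $T$. Then for opposite points $x,y$ I compute
\[
T(x)+T(y)=(Ax-Ae)+(Ay-Ae)=A(x+y)-2Ae=Ad-2Ae,
\]
using linearity of $A$ and the hypothesis $x+y=d$.

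The key step is to identify $Ad-2Ae$ with $d$. Since $e$ and $e^\perp$ are themselves opposite, we have $e^\perp=d-e$, hence $e^\perp-e=d-2e$ and therefore $A(e^\perp-e)=Ad-2Ae$. Now Corollary \ref{A&d} gives $A(e^\perp-e)=d$, so $Ad-2Ae=d$, and thus $T(x)+T(y)=d$. This is exactly the assertion that $T(x)$ and $T(y)$ are opposite.

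There is essentially no obstacle here beyond recognising that Corollary \ref{A&d} provides precisely the relation needed between $Ad$ and $Ae$; once $e^\perp-e$ is rewritten as $d-2e$, the conclusion is immediate. Note that $2e$ occurs only as the formal sum $e+e$ in $\mathds{F}_p^n$, so the oddness of $p$ plays no special role in the argument.
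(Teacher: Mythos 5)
Your proof is correct and follows essentially the same route as the paper: both write $T(z)=A(z-e)$ via Theorem \ref{affineGrayTransform}, reduce the sum $T(x)+T(y)$ for opposite points to $A(d-2e)=A(e^\perp-e)$, and conclude with Corollary \ref{A&d}. The only cosmetic difference is that the paper works directly with the pair $x, x^\perp$ rather than a general pair $x,y$ with $x+y=d$, which is the same thing.
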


\begin{proof}
  Let $x\in\mathds{F}_p^n$. Then
  \begin{align*}
    T(x)+T(x^\perp)&=A(x-e)+A(x^\perp-e) 
    \\
&    = A(d-2e)
    =A(e^\perp-e)=d
  \end{align*}
  where the last equality holds true by Corollary \ref{A&d}. Hence, $T(x)$ and $T(x^\perp)$ are opposite points.
\end{proof}

\begin{cor}\label{Atauperp}
  Let $T\in\mathcal{T}_e$ with $A\in\GL_n(\mathds{F}_p)$ its linear part and $\tau\in S_n$ its associated permutation as in Theorem \ref{affineGrayTransform}. Then
  \begin{align*}
    Ae&=-e^\tau\\
    Ae^\perp&=\left(e^\tau\right)^\perp
    \end{align*}
\end{cor}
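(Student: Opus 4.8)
The plan is to prove both identities by directly unwinding the action of the linear part $A$ on the standard unit vectors as described in Theorem \ref{affineGrayTransform}, and then repackaging the outcome through the permutation notation $x^\tau$. Since $e\in\mathcal{C}$ is a corner, its entries lie in $\mathset{0,-1}$, so I would first record that
\[
e=-\sum_{i\in\supp(e)}e_i,
\qquad
e^\perp=-\sum_{i\in\supp(e^\perp)}e_i,
\]
i.e.\ both are sums of negated standard unit vectors over their nonzero positions (exactly the bookkeeping already used in the proofs of Theorem \ref{affineGrayTransform} and Corollary \ref{A&d}, where $\supp$ refers to the set of indices on which the corner equals $-1$).

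For the first identity I would apply $A$ to the expression for $e$ and use $Ae_i=-e_{\tau(i)}$ for $i\in\supp(e)$ from Theorem \ref{affineGrayTransform}:
\[
Ae=-\sum_{i\in\supp(e)}Ae_i=\sum_{i\in\supp(e)}e_{\tau(i)}.
\]
The right-hand side is the vector carrying the value $1$ exactly on the positions into which $\tau$ sends $\supp(e)$. Comparing with $e^\tau$, whose entries are those of $e$ transported along $\tau$ (hence equal to $-1$ on the corresponding permuted support), this reads off as $Ae=-e^\tau$. For the second identity it is cleanest not to recompute from scratch but to combine the first identity with Corollary \ref{A&d}. By linearity $Ae^\perp=A(e^\perp-e)+Ae$; Corollary \ref{A&d} gives $A(e^\perp-e)=d$, and the first identity gives $Ae=-e^\tau$, whence $Ae^\perp=d-e^\tau$. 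Since $v^\perp=d-v$ is precisely the definition of the opposite vector, $d-e^\tau=(e^\tau)^\perp$, which yields $Ae^\perp=(e^\tau)^\perp$. (One could instead mirror the first computation using $Ae_i=+e_{\tau(i)}$ for $i\in\supp(e^\perp)$, but routing through Corollary \ref{A&d} is shorter and reuses the first identity directly.)

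The only genuine subtlety — and the place I would be most careful — is the bookkeeping of the permutation direction. One must verify that collecting the terms $e_{\tau(i)}$ and invoking the indexing convention $x^\sigma=(x_{\sigma(r-1)},\dots,x_{\sigma(0)})$ truly reproduces $e^\tau$ and not its inverse-permuted counterpart; equivalently, that the way $A$ shuffles $\supp(e)$ matches the convention defining $x\mapsto x^\tau$. Once this index convention is pinned down, both computations are immediate, and the second identity is a one-line consequence of the first together with Corollary \ref{A&d}.
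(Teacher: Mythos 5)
Your proof is correct, and for the first identity it coincides with the paper's argument: the paper's entire proof reads ``This follows from the definition of $A$, as $\left(e^\perp\right)^\tau=\left(e^\tau\right)^\perp$'', i.e.\ both identities are meant to be obtained by expanding $e=-\sum_{i\in\supp(e)}e_i$ and $e^\perp=-\sum_{i\in\supp(e^\perp)}e_i$ and applying the formula for $Ae_i$ from Theorem \ref{affineGrayTransform}. Where you genuinely differ is the second identity: the paper recomputes $Ae^\perp=-\sum_{i\in\supp(e^\perp)}e_{\tau(i)}$ directly and identifies the result with $\left(e^\tau\right)^\perp$ via the commutation $\left(e^\perp\right)^\tau=\left(e^\tau\right)^\perp$ (which holds because $\tau$ maps the complementary supports $\supp(e)$ and $\supp(e^\perp)$ to complementary sets), whereas you write $Ae^\perp=A(e^\perp-e)+Ae=d-e^\tau=\left(e^\tau\right)^\perp$ using Corollary \ref{A&d} and the first identity. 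Your route is legitimate --- Corollary \ref{A&d} precedes this corollary and rests only on Theorem \ref{affineGrayTransform}, so there is no circularity --- and it buys a small economy: you never have to verify the commutation identity $\left(e^\perp\right)^\tau=\left(e^\tau\right)^\perp$, which is exactly where the index bookkeeping hides. Finally, the permutation-direction caveat you raise is real, but it is an ambiguity in the paper's own notation rather than a gap in your argument: under the literal convention $\left(x^\sigma\right)_i=x_{\sigma(i)}$ from the Notation section, the direct computation yields a vector supported on $\tau(\supp(e))$, which is $-e^{\tau^{-1}}$ rather than $-e^\tau$, so the statement must be read with the ``transport'' interpretation $\left(e^\tau\right)_{\tau(i)}=e_i$ (equivalently, with $\tau$ renamed as its inverse); the paper's one-line proof glosses over precisely this point, exactly as you anticipated.
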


\begin{proof}
  This follows from the definition of $A$, as
  $\left(e^\perp\right)^\tau=\left(e^\tau\right)^\perp$. 
\end{proof}

\begin{cor}\label{Tinverse}
If $T\in\mathcal{T}_e$, then $T^{-1}\in\mathcal{T}_{e^\tau}$, where $\tau\in S_n$ is the permutation associated with $T$.
\end{cor}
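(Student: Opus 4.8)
The plan is to verify directly that $T^{-1}$ satisfies the four defining conditions of $\mathcal{T}_{e^\tau}$, rather than to match it against the explicit normal form of Theorem \ref{affineGrayTransform}. First I would write $T$ in the form $T(x)=A(x-e)=Ax-Ae$ and use Corollary \ref{Atauperp} to replace $-Ae$ by $e^\tau$, so that $T(x)=Ax+e^\tau$ and hence
\[
T^{-1}(y)=A^{-1}(y-e^\tau).
\]
This single identity already exhibits $T^{-1}$ as an affine transformation whose linear part is $A^{-1}$ and whose ``base corner'' is $e^\tau$, and it immediately gives $T^{-1}(e^\tau)=0$. Note also that $e^\tau\in\mathcal{C}$, since permuting the coordinates of a corner yields a corner, so $\mathcal{T}_{e^\tau}$ is defined.

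Next I would check the condition $T^{-1}\bigl((e^\tau)^\perp\bigr)=d$ by passing to $T$: it is equivalent to $T(d)=(e^\tau)^\perp$. Since $d-e=e^\perp$, we have $T(d)=A e^\perp$, and Corollary \ref{Atauperp} gives $Ae^\perp=(e^\tau)^\perp$, as required. The condition $T^{-1}(\mathcal{C})=\mathcal{C}$ is then free: because $T\in\mathcal{T}_e$ maps the finite set $\mathcal{C}$ into itself and is injective, it restricts to a bijection of $\mathcal{C}$, whence so does $T^{-1}$.

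The remaining condition is the signed-neighbour property. By the description of its action on the standard basis in Theorem \ref{affineGrayTransform}, $A$ is a signed permutation matrix, with $Ae_i=\pm e_{\tau(i)}$; consequently $A^{-1}$ is again a signed permutation matrix and satisfies $A^{-1}e_i=\pm e_{\tau^{-1}(i)}$. Hence whenever $x-y=e_i$ we obtain
\[
T^{-1}(x)-T^{-1}(y)=A^{-1}(x-y)=A^{-1}e_i=\pm e_{\tau^{-1}(i)},
\]
so $T^{-1}$ enjoys the signed-neighbour property with associated permutation $\tau^{-1}$. Together with the three previous points this shows $T^{-1}\in\mathcal{T}_{e^\tau}$.

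I expect the only real obstacle to be bookkeeping rather than substance: one must resist plugging $A^{-1}$ into the normal form of Theorem \ref{affineGrayTransform} and trying to match the sign pattern against $\supp(e^\tau)$ and $\supp\bigl((e^\tau)^\perp\bigr)$, which is error-prone. Verifying the four membership conditions directly is cleaner, and the two places where care is needed are the appeal to Corollary \ref{Atauperp} to identify the images of $e^\tau$ and $(e^\tau)^\perp$, and the observation that inverting the signed permutation matrix $A$ turns $\tau$ into $\tau^{-1}$.
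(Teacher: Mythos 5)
Your proposal is correct and follows essentially the same route as the paper: both compute $T^{-1}(x)=A^{-1}(x-e^\tau)$ via Corollary \ref{Atauperp}, then verify the four membership conditions of $\mathcal{T}_{e^\tau}$ directly, with the signed-neighbour property holding for $\tau^{-1}$. Your only deviations are cosmetic --- checking $T^{-1}\bigl((e^\tau)^\perp\bigr)=d$ by passing to $T(d)$ instead of computing with $A^{-1}$, and arguing the bijection on $\mathcal{C}$ and the form of $A^{-1}$ abstractly where the paper writes out the explicit sign pattern.
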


\begin{proof}
  Let $A\in\GL_n(\mathds{F}_p)$ be the linear part of $T$. Then $T^{-1}$ is given as
  \[
T^{-1}(x)= A^{-1}x+e=A^{-1}(x+Ae)=A^{-1}(x-e^\tau)
  \]
  according to Corollary \ref{Atauperp}. From this it follows that
  \[
T^{-1}(e^\tau)=0
\]
and
\begin{align*}
T^{-1}\left(\left(e^\tau\right)^\perp\right)
&=A^{-1}\left(\left(e^\perp\right)^\tau-e^\tau\right)
\\
&=A^{-1}(Ae^\perp+Ae)
\\
&=e^\perp+e=d
\end{align*}
Clearly, $T^{-1}$ induces a bijection on $\mathcal{C}$. Further, $A^{-1}$ is given by
\[
Ae_i=\begin{cases}
-e_{\tau^{-1}(i)},&i\in\supp\left(e^\tau\right)\\
e_{\tau^{-1}(i)},&i\in\supp\left(\left(e^\tau\right)^\perp\right)
\end{cases}
\]
This means that if $x-y=e_i$, then
\[
T^{-1}(x)-T^{-1}(y)=A^{-1}(x-y)=A^{-1}e_i=\pm e_{\tau^{-1}(i)}
\]
From these considerations, it follows that $T^{-1}\in\mathcal{T}_{e^\tau}$.
\end{proof}

\begin{cor}\label{Tperp}
  Let $T\in\mathcal{T}_e$. Then
  \[
T(x)^\perp=T\left(x^\perp\right)
\]
for all $x\in\mathds{F}_p^n$.
\end{cor}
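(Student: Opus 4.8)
The plan is to deduce this immediately from the corollary already established above, namely that $T$ maps opposite points to opposite points, combined with the observation that the operation $x\mapsto x^\perp$ is well defined, i.e.\ the opposite of a vector is unique. The statement is in fact merely a reformulation of that earlier corollary in the $\perp$-notation, so the only real content is the uniqueness of the opposite.

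First I would recall that the defining relation $v+v^\perp=d$ determines $v^\perp$ uniquely, since $\mathds{F}_p^n$ is an abelian group under addition; explicitly, for any $w\in\mathds{F}_p^n$ one has $w^\perp=d-w$. Applying this to $w=T(x)$ gives
\[
T(x)^\perp=d-T(x).
\]
Then I would invoke the preceding corollary, which asserts precisely that $T(x)+T(x^\perp)=d$, so that $T(x^\perp)=d-T(x)$ as well. Comparing the two expressions yields $T(x)^\perp=T(x^\perp)$, which is the assertion.

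If one prefers a self-contained computation rather than quoting the opposite-to-opposite corollary, I would instead write $T(x)=A(x-e)$ for the linear part $A$ of $T$ and use $x^\perp=d-x$. Expanding both sides, the claimed identity $T(x)^\perp=T(x^\perp)$ reduces to $Ad=d+2Ae$, which is exactly Corollary \ref{A&d}, namely $A(e^\perp-e)=d$, rewritten through $e^\perp=d-e$. There is no genuine obstacle in either route; the single subtlety worth flagging is that, because $\perp$ is an involution on $\mathds{F}_p^n$ defined via the fixed vector $d$, the equality $T(x)^\perp=T(x^\perp)$ is equivalent to the additive statement $T(x)+T(x^\perp)=d$, and so nothing beyond the uniqueness of the opposite and Corollary \ref{A&d} is needed.
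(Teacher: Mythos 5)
Your proposal is correct, and your primary route is organized differently from the paper's. The paper proves the statement by a fresh direct computation: it writes $T(x)^\perp = d - T(x) = T(e^\perp) - T(x)$, expands both terms through the linear part as $A(e^\perp - e) - A(x-e) = A(e^\perp - x)$, and then uses $e^\perp - x = x^\perp - e$ (both equal $d-e-x$) to recognize the result as $T(x^\perp)$. You instead observe that the statement is just a $\perp$-notation reformulation of the immediately preceding corollary ($T$ maps opposite points to opposite points, i.e.\ $T(x)+T(x^\perp)=d$), combined with the uniqueness of the opposite $w^\perp = d-w$; this is a legitimate and more economical derivation, since that earlier corollary's proof already contains the same algebra the paper repeats here. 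Your fallback computational route, which reduces the identity to $Ad = d + 2Ae$ and recognizes this as Corollary \ref{A&d}, is in substance the paper's argument (and the earlier corollary's) reorganized. What your primary route buys is the elimination of a redundant computation; what the paper's direct computation buys is self-containedness, not leaning on the unlabelled corollary. Both are complete and correct.
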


\begin{proof}
  It holds true that
  \begin{align*}
    T(x)^\perp&=d-T(x)=T(e^\perp)-T(x)
    \\
&    =A(e^\perp-e)-A(x-e)\\
&    = A(e^\perp-x)=A(x^\perp-e)\\
&    =T(x^\perp)
  \end{align*}
 as asserted. 
\end{proof}

In what follows, assume that $e\in\mathcal{C}$ and $T_e\in\mathcal{T}_e$ are fixed.

\smallskip
Let
\[
\gc_n^e:=T_e^{-1}\circ\gc_n\colon\mathds{F}_p^n\to\mathds{F}_p^n
\]

\begin{lem}
  It holds true that
  \[
    \gc_n^e(0)=e,\quad
    \gc_n^e(d)=e^\perp
  \]
  for $e\in\mathds{F}_p^n$.
\end{lem}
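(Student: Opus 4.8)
The plan is to reduce the statement to two facts that are already available: the value of the plain reflected Gray code at the two extreme indices, and the normalisation built into the definition of $\mathcal{T}_e$. Since $\gc_n^e=T_e^{-1}\circ\gc_n$ by definition, it suffices to compute $\gc_n(0)$ and $\gc_n(d)$ and then apply $T_e^{-1}$.

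First I would recall the first Lemma of Section 3, which gives $\gc_n(\bin_n(0))=(0\dots0)$ and $\gc_n(\bin_n(p^n-1))=(-1\dots-1)$. Observing that $\bin_n(0)=0$ and that the $p$-adic representation of $p^n-1$ is $(p-1,\dots,p-1)=d$, these two identities read simply $\gc_n(0)=0$ and $\gc_n(d)=d$ as maps $\mathds{F}_p^n\to\mathds{F}_p^n$. The identification $\bin_n(p^n-1)=d$ is precisely the ``no carry'' observation already exploited in the proof of Lemma \ref{sigmaperp}.

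Next I would invoke the defining properties of $\mathcal{T}_e$, which require $T_e(e)=0$ and $T_e(e^\perp)=d$. As $T_e$ is a bijection, these are equivalent to $T_e^{-1}(0)=e$ and $T_e^{-1}(d)=e^\perp$. Combining the two steps then yields $\gc_n^e(0)=T_e^{-1}(\gc_n(0))=T_e^{-1}(0)=e$ and $\gc_n^e(d)=T_e^{-1}(\gc_n(d))=T_e^{-1}(d)=e^\perp$, which is the claim.

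I do not anticipate a genuine obstacle here, as the statement is essentially a bookkeeping combination of the starting and ending values of the reflected Gray code with the normalisation encoded in $\mathcal{T}_e$. The only point requiring a moment of care is translating between the sequence-indexing picture (where the extreme indices are $0$ and $p^n-1$) and the vector picture (where the corresponding arguments are $0$ and $d$); once this is made explicit, the conclusion follows immediately.
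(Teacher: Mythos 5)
Your proposal is correct and follows essentially the same route as the paper's proof: apply $T_e^{-1}$ to the identities $\gc_n(0)=0$ and $\gc_n(d)=d$, then use the normalisation $T_e(e)=0$, $T_e(e^\perp)=d$ from the definition of $\mathcal{T}_e$. You merely make explicit two steps the paper leaves implicit (the citation of the first lemma of Section 3 and the no-carry identification $\bin_n(p^n-1)=d$), which is harmless.
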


\begin{proof}
  We have
  \begin{align*}
    \gc_n^e(0)&=T_e^{-1}(\gc_n(0))=T_e^{-1}(0)=e\\
    \gc_n^e(d)&=T_e^{-1}(\gc_n(d))=T_e^{-1}(d)=e^\perp
  \end{align*}
where the last equalities hold true  by definition of $T_e$.
\end{proof}

\section{Affine Gray-Hilbert curves}

In \cite{Hamilton2006} there is a construction of a Hilbert curve in arbitrary dimension $n$ by applying the Gray code to each subhypercube after a certain affine transformation.
Here, instead of this particular one, we allow any affine transformation to obtain a transformed Gray code as in Section \ref{transformedGC}.

\smallskip
The \emph{zero-th iteration} of this Hilbert curve is the $p$-adic reflected Gray code for $\mathds{F}_p^n$.
The \emph{$k+1$-st iteration} is obtained from the $k$-th iteration by subdividing each hypercube into $p^n$ subhypercubes and glueing together transformed Gray code curves for each of those hypercubes.
These Gray code pieces are called \emph{local pieces} and are
determined by the local pieces from the $k$-the iteration.

\begin{dfn}
A curve constructed as above is called an \emph{affine Gray-Hilbert curve}.
\end{dfn}

\begin{thm}
The number of $k$-th iterations of an affine Gray-Hilbert curve equals
\[
\begin{cases}
(n-1)!\cdot\alpha_2,&p=2\\
n!\cdot\alpha_p,&p>2
\end{cases}
\]
with
\[
\alpha_p =\sum\limits_{i=0}^k (p^n)^k=\frac{p^{n(k+1)}-1}{p^n-1}
\]
for all $p$.
\end{thm}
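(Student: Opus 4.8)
The plan is to prove the identity by separating the count into two independent factors and then evaluating a finite geometric series, proceeding by induction on the iteration level $k$. The first factor is the number $N$ of admissible local pieces for a single hypercube: by Corollary \ref{numberOfGC} there are exactly $(n-1)!$ affine Gray codes for $\mathds{F}_2^n$ once the starting point $e$ and the direction $d$ are prescribed, while by Corollary \ref{noOfGrayCodes} there are exactly $\absolute{\mathcal{T}_e}=n!$ transformations for $p>2$. The second factor counts the hypercubes created along the construction. For the base case $k=0$ the curve is a single transformed Gray code on all of $\mathds{F}_p^n$, so there is one hypercube and the count is $N\cdot 1=N\cdot\alpha_p$ with $\alpha_p=\sum_{i=0}^{0}(p^n)^i=1$, matching the claimed value.

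For the inductive step I would invoke the recursive description of the construction: the $k+1$-st iteration arises from the $k$-th by subdividing each hypercube into $p^n$ subhypercubes and glueing a transformed Gray code into every one of them. Hence the number of hypercubes, equivalently of local pieces, at level $i$ is $(p^n)^i=p^{ni}$, starting from the single hypercube at level $0$. Summing over all refinement levels $0,\dots,k$ gives the total number of local pieces as $\sum_{i=0}^{k}p^{ni}$, and the finite geometric series evaluates this to $\frac{p^{n(k+1)}-1}{p^n-1}=\alpha_p$. Each such local piece contributes the same factor $N$ because its entry and exit corners are fixed by the coarser iteration through the glueing that guarantees continuity; this prescribes the starting point $e$ and direction $d$ of the piece, so that Corollary \ref{numberOfGC} (for $p=2$) and Corollary \ref{noOfGrayCodes} (for $p>2$) apply verbatim with count $N$ at every hypercube. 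Collecting the factor $N$ with the geometric-series count $\alpha_p$ yields $N\cdot\alpha_p$, which is $(n-1)!\cdot\alpha_2$ for $p=2$ and $n!\cdot\alpha_p$ for $p>2$.

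The step I expect to be the main obstacle is the verification that every local piece really is constrained to a prescribed starting point and direction, so that the two corollaries apply with the identical count $N$ at each hypercube and each level, and that the contributions combine as the product $N\cdot\alpha_p$ rather than in any other way. This amounts to unwinding how the glueing in the passage from the $k$-th to the $k+1$-st iteration forces the entry and exit corners of each subhypercube to agree with those of its neighbours along the Gray-code ordering, thereby fixing $e$ and $d$ for that piece. Once this bookkeeping is settled, the distinction between $p=2$ and $p>2$ enters only through the value of $N$ supplied by Corollary \ref{numberOfGC} and Corollary \ref{noOfGrayCodes} respectively, and the geometric-series identity completes the argument.
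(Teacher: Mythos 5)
Your proposal is correct and takes essentially the same route as the paper: the paper's entire proof is the one-line remark that the count follows from Corollary \ref{numberOfGC} (for $p=2$) or Corollary \ref{noOfGrayCodes} (for $p>2$) by induction on $k$. Your write-up merely makes explicit the bookkeeping the paper leaves implicit, namely the per-hypercube factor $N$ supplied by those corollaries and the geometric series $\sum_{i=0}^{k}p^{ni}=\alpha_p$ over the refinement levels.
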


\begin{proof}
This follows from Corollary \ref{numberOfGC} for $p=2$, or 
Corollary \ref{noOfGrayCodes} for $p>2$, by induction.
\end{proof}

There is a natural aggregation map between the $k+1$-st iteration and the $k$-th iteration of an affine Gray-Hilbert curve which takes each hypercube to the superhypercube containing it.

\smallskip
In the following two subsections, we are going to explicitly
construct affine Gray-Hilbert curves.

\subsection{Case $p=2$}
Consider the $i$-th point of the $k$-th iteration of an affine Gray-Hilbert curve for $[0,1]^n$.
In the $k+1$-st iteration, this point is replaced by a transformed Gray code $\gamma_n^{(e,d)}$, 
depending on $e\in\mathds{F}_2^n$ and $d\in\mathds{Z}/n\mathds{Z}$.
We may assume that this point lies in the image under the aggregation map of an untransformed Gray code curve $\gamma_n$. 
Here, the first point of $\gamma_n^{(e,d)}$ 
is $e$ and the direction is  $d$.
Following  Section \ref{transformedGC}, we have
\[
\gamma_n^{(e,d)}= T_{(e,d)}^{-1}\circ \gamma_n
\]
where we choose the particular affine  transformation
\[
T_{(e,d)}=\sigma_d\circ A_e
\]
with $\sigma_d$ a permutation which takes $n-1$ to $d$.

\smallskip
We also define
\[
\gc_n^{(e,d)}:=T_{(e,d)}^{-1}\circ \gc_n
\]

Next,  we get 
from the definition of the binary reflected Gray code that
\[
\gamma_n(i)+\gamma_n(i+1)=e_{\tau_n(i)}
\]
with some $\tau_n(i)\in\mathds{Z}/n\mathds{Z}$
for all $i\in\mathds{Z}/2^n\mathds{Z}$. 
According to \cite[Lemma 2.3]{Hamilton2006}, it holds true that
\[
\tau_n(i)=\text{\# of trailing $1$ bits of $\bin_n(i)$}\mod n
\]

This leads to:

\begin{cor}\label{transformedGray}
It holds true that
\[
\gamma_n^{(e,d)}(i)+\gamma_n^{(e,d)}(i+1)
=e_{\tau_n(i)}^{\sigma_d}
\]
\end{cor}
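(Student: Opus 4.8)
The plan is to unwind the definition $\gamma_n^{(e,d)}=T_{(e,d)}^{-1}\circ\gamma_n$ and to exploit that, up to a translation, everything in sight is linear over $\mathds{F}_2$. First I would make $T_{(e,d)}^{-1}$ explicit. By Theorem~\ref{affinePerm} the transformation $T_{(e,d)}=\sigma_d\circ A_e$ acts as $x\mapsto(x+e)^{\sigma_d}$; inverting the index permutation and then the translation (the latter being its own inverse in characteristic $2$) yields
\[
T_{(e,d)}^{-1}(y)=y^{\sigma_d^{-1}}+e .
\]
Hence $\gamma_n^{(e,d)}(i)=\gamma_n(i)^{\sigma_d^{-1}}+e$ for every index $i$.

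Next I would add two consecutive values. Since we work over $\mathds{F}_2$, the two translation summands $e$ cancel, leaving
\[
\gamma_n^{(e,d)}(i)+\gamma_n^{(e,d)}(i+1)=\gamma_n(i)^{\sigma_d^{-1}}+\gamma_n(i+1)^{\sigma_d^{-1}} .
\]
The operation $x\mapsto x^{\sigma_d^{-1}}$ merely permutes coordinates and is therefore $\mathds{F}_2$-linear, so it commutes with the sum and may be pulled outside:
\[
\gamma_n^{(e,d)}(i)+\gamma_n^{(e,d)}(i+1)=\bigl(\gamma_n(i)+\gamma_n(i+1)\bigr)^{\sigma_d^{-1}} .
\]
Finally I would substitute the jump formula $\gamma_n(i)+\gamma_n(i+1)=e_{\tau_n(i)}$ recalled just before the statement (from \cite[Lemma~2.3]{Hamilton2006}), so that the consecutive difference is the image of the single unit vector $e_{\tau_n(i)}$ under the coordinate permutation attached to $T_{(e,d)}^{-1}$, which is exactly the asserted $e_{\tau_n(i)}^{\sigma_d}$.

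The one genuinely delicate point — the step I would treat most carefully — is the bookkeeping of the permutation direction, i.e.\ whether the exponent that survives the computation is written $\sigma_d$ or $\sigma_d^{-1}$. This hinges on the convention fixed in the Notation section, namely $(x^\sigma)_k=x_{\sigma(k)}$, under which a standard basis vector transforms as $e_j^{\sigma}=e_{\sigma^{-1}(j)}$, together with the normalisation that $\sigma_d$ sends $n-1$ to $d$ so that $T_{(e,d)}$ realises the prescribed direction. Reconciling the linear part of $T_{(e,d)}^{-1}$ with the exponent in the statement thus amounts to matching $e_{\tau_n(i)}^{\sigma_d^{-1}}$ against the intended reading of $e_{\tau_n(i)}^{\sigma_d}$; once the action of a permutation on a basis vector is spelled out on the label $\tau_n(i)$, the identity drops out with no further computation. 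Everything else is a one-line consequence of $\mathds{F}_2$-linearity of coordinate permutations and the characteristic-$2$ cancellation of the translation part, so I would devote the written proof almost entirely to fixing the permutation convention and leave the remaining algebra as a short display.
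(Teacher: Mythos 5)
Your proposal is correct and follows essentially the same route as the paper's own proof: unwind $\gamma_n^{(e,d)}=T_{(e,d)}^{-1}\circ\gamma_n$, cancel the two translation summands $e$ using characteristic $2$, pull the coordinate permutation out of the sum by $\mathds{F}_2$-linearity, and invoke the jump formula $\gamma_n(i)+\gamma_n(i+1)=e_{\tau_n(i)}$. The only difference is that the paper writes the permutation part of $T_{(e,d)}^{-1}$ directly as $(\cdot)^{\sigma_d}$ and concludes in one display, whereas you carry $\sigma_d^{-1}$ and then reconcile the convention; the bookkeeping subtlety you flag is real, but the paper's own proof simply does not distinguish $\sigma_d$ from its inverse in the exponent, so your more careful treatment subsumes it.
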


\begin{proof}
It holds true that
\begin{align*}
\gamma_n^{(e,d)}(i)&+\gamma_n^{(e,d)}(i+1)\\
&=\gamma_n(i)^{\sigma_d}+e+\gamma_n(i+1)^{\sigma_d}+e
\\
&=e_{\tau_n(i)}^{\sigma_d}
\end{align*}
This proves the assertion.
\end{proof}

We define:
\begin{align}\label{tau}
\tau_n^{(e,d)}&\colon\mathds{Z}/2^n\mathds{Z}\to\mathds{Z}/n\mathds{Z},
\\
i&\mapsto\log_2\left(\bin_n^{-1}\left(e_{\tau_n(i)}^{\sigma_d}\right)\right)\mod n\nonumber
\end{align}
Notice that $\tau_n^{(e,d)}$ does not depend on $e$. We keep $e$ in the notation anyway, as it is part of a transformation of parts of Hilbert curves depending on
$e$ and $d$.

\smallskip
We will need the following map:
\begin{align*}
\bin_{n}^k\colon&\mathds{Z}/(2^n)^k\mathds{Z}\to\mathds{F}_2^n[t]/t^k\mathds{F}_2^n[t]
\\
&x=\sum\limits_{\nu=0}^{k-1}\alpha_\nu\cdot(2^n)^\nu\mapsto\sum\limits_{\nu=0}^{k-1}\bin_n(\alpha_\nu)\cdot t^{k-1-\nu}
\end{align*}
where $x$ is given by its $2^n$-adic expansion.

\smallskip
Observe that in the following map:
\begin{align*}
\coord_n^k\colon&\mathds{F}_2^n[t]/t^k\mathds{F}_2^n[t]\to[0,1]^n
\\
&\sum\limits_{\nu=0}^{k-1}a_\nu t^\nu\mapsto\sum\limits_{\nu=1}^{k} a_\nu\left(\frac12\right)^\nu
\end{align*}
that $\coord_n^k(f)$ is the lower left corner of a sub-hypercube of $[0,1]^n$ of sidelength $\left(\frac12\right)^k$.

\smallskip
There is the reduction map
\begin{align*}
\rho_k\colon\mathds{F}_2^n[t]/t^{k+1}\mathds{F}_2^n[t]&\to\mathds{F}_2^n[t]/t^k\mathds{F}_2[t]\\
f&\mapsto f\mod t^k
\end{align*}

Observe that in the commutative diagram
\begin{align}\label{continuous2Cube}
\xymatrix{
\mathds{F}_2^n[t]/t^{k+1}\mathds{F}_2^n[t]\ar[r]^{\quad\coord_n^{k+1}}\ar[d]_{\rho_k}&\im(\coord_n^{k+1})\ar[d]^{\mu_k}\\
\mathds{F}_2^n[t]/t^k\mathds{F}_2^n[t]\ar[r]_{\quad\coord_n^k}&\im(\coord_n^k)
}
\end{align}
the pre-image of a point under $\mu_k$ induces a subdivision of the corresponding hypercube into $2^n$ hypercubes.

\smallskip
The following commutative diagram with bijective horizontal maps
\begin{align}\label{pi}
\xymatrix{
\mathds{Z}/(2^n)^{k+1}\mathds{Z}\ar[r]^{\bin_n^{k+1}}\ar[d]_{\pi_k}&\mathds{F}_2^n[t]/t^{k+1}\mathds{F}_2^n[t]\ar[d]^{\rho_k}\\
\mathds{Z}/(2^n)^k\mathds{Z}\ar[r]_{\bin_n^k}&\mathds{F}_2^n[t]/t^k\mathds{F}_2^n[t]
}
\end{align}
defines the vertical map $\pi_k$. 

\begin{lem}
It holds true that the representatives in $\mathset{0,\dots,2^{n(k+1)}-1}$ of
\[
\pi_k^{-1}(m)
\]
for any $m\in\mathds{Z}/(2^n)^k\mathds{Z}$
form a consecutive set of numbers.
\end{lem}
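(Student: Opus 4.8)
The plan is to compute $\pi_k$ explicitly and to recognise it as integer division by $2^n$. Since the horizontal maps in~\eqref{pi} are bijective, the diagram yields $\pi_k=(\bin_n^k)^{-1}\circ\rho_k\circ\bin_n^{k+1}$, so I would start from an arbitrary $x\in\mathds{Z}/(2^n)^{k+1}\mathds{Z}$, write it in its $2^n$-adic expansion $x=\sum_{\nu=0}^{k}\alpha_\nu(2^n)^\nu$ with $\alpha_\nu\in\mathset{0,\dots,2^n-1}$, and trace it around the square.

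Applying $\bin_n^{k+1}$ produces $\sum_{\nu=0}^{k}\bin_n(\alpha_\nu)\,t^{k-\nu}$. The decisive observation is that the indexing in the definition of $\bin_n^{k+1}$ is reversed: the \emph{lowest} $2^n$-adic digit $\alpha_0$ becomes the coefficient of the \emph{top} power $t^{k}$. Consequently $\rho_k$, which reduces modulo $t^k$, deletes exactly the $\alpha_0$-term and leaves $\sum_{\nu=1}^{k}\bin_n(\alpha_\nu)\,t^{k-\nu}$. Re-indexing by $\mu=\nu-1$ turns this into $\sum_{\mu=0}^{k-1}\bin_n(\alpha_{\mu+1})\,t^{k-1-\mu}$, which by the defining formula for $\bin_n^k$ equals $\bin_n^k(m)$ with $m=\sum_{\mu=0}^{k-1}\alpha_{\mu+1}(2^n)^\mu$. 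Hence $\pi_k(x)=m=\lfloor x/2^n\rfloor$: the map $\pi_k$ simply forgets the lowest $2^n$-adic digit of $x$.

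With this identification the conclusion is immediate. For a fixed $m\in\mathset{0,\dots,2^{nk}-1}$ the condition $\lfloor x/2^n\rfloor=m$ is equivalent to $x=2^n m+\alpha_0$ with $\alpha_0\in\mathset{0,\dots,2^n-1}$, so the fibre is $\pi_k^{-1}(m)=\mathset{2^n m,\,2^n m+1,\,\dots,\,2^n m+2^n-1}$, a block of $2^n$ consecutive integers lying inside $\mathset{0,\dots,2^{n(k+1)}-1}$, as claimed.

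The only real obstacle is the index bookkeeping, and it is essential to get the direction of the correspondence right. Because $\bin_n^k$ reverses the roles of the powers of $t$ (sending the lowest $2^n$-adic digit to the highest power of $t$), reduction modulo $t^k$ forgets the lowest digit rather than the highest. Had it forgotten the highest digit instead, the fibres would have been the arithmetic progression $\mathset{m,\,m+2^{nk},\,\dots}$ of common difference $2^{nk}$, which is precisely \emph{not} consecutive. Thus the lemma hinges entirely on verifying that $\rho_k$ corresponds to division by $2^n$ and not to reduction modulo $2^{nk}$; once that is checked, everything else is formal.
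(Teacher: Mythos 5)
Your proposal is correct and follows essentially the same route as the paper: both trace the $2^n$-adic expansion of an element around diagram~(\ref{pi}), observe that the reversed indexing in $\bin_n^{k+1}$ makes $\rho_k$ delete the lowest digit $\alpha_0$ (so $\pi_k$ is integer division by $2^n$), and conclude that each fibre is the block $\alpha_0+2^n m$, $\alpha_0\in\mathset{0,\dots,2^n-1}$. Your explicit remark on why the direction of the indexing matters is a nice addition but does not change the argument.
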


\begin{proof}
See how (\ref{pi}) looks on elements:
\[
\xymatrix{
\sum\limits_{\nu=0}^k\alpha_\nu(2^n)^\nu\ar@{|->}[r]\ar@{|->}[d]
&\sum\limits_{\nu=0}^k\alpha_\nu t^{k-\nu}\ar@{|->}[d]\\
m\ar@{|->}[r]&\sum\limits_{\nu=1}^k\alpha_\nu t^{k-\nu}
}
\]
As 
\[
\sum\limits_{\nu=1}^k\alpha_\nu t^{k-\nu}=\sum\limits_{\nu=0}^{k-1}\alpha_{\nu+1}t^{k-1-\nu}
\]
it follows that
\[
m=\sum\limits_{\nu=0}^{k-1}\alpha_{\nu+1}(2^n)^\nu
\]
and all possible pre-images are of the form 
\[
\alpha_0+2^n\cdot m
\] 
with $\alpha_0\in\mathset{0,\dots,2^n-1}$.
This forms a set of consecutive numbers.
\end{proof}

Let $b+\mathds{F}_2^n t^k$ be the pre-image  of
 $b\in\mathds{F}_2^n[t]/t^k\mathds{F}_2^n[t]$ under the reduction map
 $\rho_k$. Then we define a bijection
\[
D_t^k\colon b+\mathds{F}_2^n t^k\to\mathds{F}_2^n,\; 
b+\alpha t^k\mapsto \alpha
\]
and  obtain a map $h_b^k$ via the commutative diagram:
\begin{align}\label{localHilbert}
\xymatrix{
b+\mathds{F}_2^n t^k\ar[r]^{h^k_b}\ar[d]_{D_t^k}&b+\mathds{F}_2^n t^k\ar[d]^{D_t^k}
\\
\mathds{F}_2^n\ar[r]_{\gc_n^{(e_k(b),d_k(b))}}&\mathds{F}_2^n
}
\end{align}
with $e_k(b)\in\mathds{F}_2^n$ and $d_k(b)\in\mathds{Z}/n\mathds{Z}$.

\smallskip
We define a distance $d$ on $\mathds{F}_2^n/t^k\mathds{F}$
as
\[
d\left(\sum\limits_{\nu=0}^{k-1}\alpha_\nu t^\nu,
\sum\limits_{\nu=0}^{k-1}\beta_\nu t^\nu\right)=
\sum\limits_{\nu=0}^{k-1}d_H(\alpha_\nu,\beta_\nu)
\]

\begin{thm}\label{glueHilbert}
For all natural $k\in\mathds{N}$ there exist functions
\begin{align*}
e_{k}&\colon\mathds{F}_2^n[t]/t^{k}\mathds{F}_2^n[t]\to\mathds{F}_2^n\\
d_{k}&\colon\mathds{F}_2^n[t]/t^{k}\mathds{F}_2^n[t]\to\mathds{Z}/n\mathds{Z}
\end{align*}
such that the maps $h_b^{k}$ are the restrictions of maps $h^{k+1}$ fitting into the commutative diagram
\[
\xymatrix{
\mathds{F}_2^n[t]/t^{k+1}\mathds{F}_2^n[t]\ar[r]^{h^{k+1}}\ar[d]_{\rho_k}
&\mathds{F}_2^n[t]/t^{k+1}\mathds{F}_2^n[t]\ar[d]^{\rho_k}
\\
\mathds{F}_2^n[t]/t^k\mathds{F}_2^n[t]\ar[r]_{h^{k}}
&\mathds{F}_2^n[t]/t^k\mathds{F}_2^n
}
\]
and such that
with
\begin{align*}
p_\ell&=h^{k+1}\left(\bin_n^{k+1}\left(2^{n}\cdot\ell+(2^{n}-1)\right)\right)\\
q_\ell&=h^{k+1}\left(\bin_n^{k+1}\left(2^n\cdot(\ell+1)\right)\right)
\end{align*}
it holds true that
\[
d(\rho_k(p_\ell),\rho_k(q_\ell))=\delta_{k,0}
\]
for all $\ell\in\mathset{0,\dots,2^{nk}-1}$.
\end{thm}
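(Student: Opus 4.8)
The plan is to prove the statement by induction on $k$, simultaneously constructing the functions $e_k,d_k$, the glued map $h^{k+1}$, and verifying the distance identity. The zeroth level is the single (cyclic) reflected Gray code on $\mathds{F}_2^n$, and the inductive step refines each cell of the $k$-th iteration into $2^n$ subcells carrying a transformed Gray code $\gc_n^{(e,d)}$ whose orientation is dictated by $e_k$ and $d_k$. The induction hypothesis I carry along is the continuity of the $k$-th iteration, so that the computation at level $k+1$ feeds on the behaviour established at level $k$.

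For the construction I would define $h^{k+1}$ cellwise: on each fibre $b+\mathds{F}_2^n t^k=\rho_k^{-1}(b)$ I let $h^{k+1}$ agree with the local piece $h_b^k$ of diagram (\ref{localHilbert}), i.e.\ apply $\gc_n^{(e_k(b),d_k(b))}$ to the $t^k$-coefficient while the base $b$ is transported according to $h^k$. Commutativity of the square with $\rho_k$ is then essentially forced: $\rho_k$ forgets precisely the $t^k$-coefficient that the local piece acts on, and by the preceding lemma the map (\ref{pi}) identifies the fibres of $\pi_k$ with consecutive blocks of $2^n$ indices, so $\rho_k\circ h^{k+1}=h^k\circ\rho_k$. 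The only genuine freedom, and the only real content, is the choice of $(e_k(b),d_k(b))$.

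These are fixed by the gluing requirement. By Corollary \ref{transformedGray} two consecutive points of a local piece satisfy $\gamma_n^{(e,d)}(i)+\gamma_n^{(e,d)}(i+1)=e_{\tau_n(i)}^{\sigma_d}$, so inside one cell the curve always moves by a single coordinate, altering only the $t^k$-coefficient; and the start/end identities $\gc_n^{(e,d)}(0)=e$, $\gc_n^{(e,d)}(d)=e^\perp$ (which for $p=2$ follow from $T_{(e,d)}=\sigma_d\circ A_e$ and $\gamma_n(0)=0$) identify the entry and exit subcells of each cell. I would then choose $e_k(b)$ so that the entry subcell of the cell visited in position $\ell+1$ is the face-neighbour of the exit subcell of the cell in position $\ell$, and $d_k(b)$ so that this exit face points toward the successor cell; the coordinate involved is the one recorded by the $k$-th level step, read off through (\ref{tau}). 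Corollary \ref{numberOfGC} guarantees that for each prescribed entry point and direction exactly such a transformation exists, so the assignment is well posed.

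For the distance identity I would compute $\rho_k(p_\ell)$ and $\rho_k(q_\ell)$ directly: since $m=2^n\ell+(2^n-1)$ and $m+1=2^n(\ell+1)$ are separated by a carry across the $2^n$-block, the points $p_\ell,q_\ell$ sit at an inter-cell transition, and by the gluing choice they differ only in the newly added finest coordinate together with the single coarse coordinate of the $k$-th level step. Applying $\rho_k$ removes exactly the $t^k$-contribution, so for $k\ge 1$ what remains is controlled by the inductive continuity hypothesis, whereas for $k=0$ the transition is the cyclic closure of the base Gray code, where Lemma \ref{sigmaperp} pins down the first and last codewords and accounts for the surviving unit of distance; this is how the two cases of $\delta_{k,0}$ arise. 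The hard part will be the bookkeeping in the gluing step, namely showing that the cell-by-cell choices $(e_k(b),d_k(b))$ are globally consistent, i.e.\ that the exit face of each cell and the entry face of its successor are genuinely opposite faces in the \emph{same} coordinate direction for every cell at once. This is where the interplay between $\sigma_d$, the trailing-ones permutation $\tau_n$, and the opposite-point symmetry of Lemma \ref{sigmaperp} must be tracked with care; everything else follows from the commuting diagrams (\ref{pi}), (\ref{continuous2Cube}), (\ref{localHilbert}) together with Corollaries \ref{transformedGray} and \ref{numberOfGC}.
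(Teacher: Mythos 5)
Your outline reproduces the paper's overall skeleton (induction on the iteration, cellwise definition of $h^{k+1}$ from the local pieces of (\ref{localHilbert}), commutativity with $\rho_k$ being automatic), but it leaves unproven exactly the point that carries the mathematical content of the theorem, and which you yourself flag as ``the hard part''. Corollary \ref{numberOfGC} only says that \emph{once} an entry point $e$ and a direction $d$ are prescribed, transformed Gray codes $\gc_n^{(e,d)}$ with those data exist. It does not say that the $2^n$ prescriptions inside one parent cell can be made simultaneously compatible: once you choose the entry $e$ and direction $d$ of the $i$-th subcell, its exit corner is forced to be $e+e_d$, so matching the entry of subcell $i+1$ to the exit of subcell $i$ constrains the next entry, while that subcell's exit must in turn land on the face in direction $\tau_n(i+1)$ toward subcell $i+2$. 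Whether this system of constraints is solvable for every $n$ and every position along the Gray code is precisely what requires proof. The paper does not prove it from scratch either, but imports it: it sets $e_a=e(i)$, $d_a=d(i)$ with the explicit formulas of \cite{Hamilton2006} (cf.\ Lemma \ref{direction+entry}), for which it is shown there that the resulting family of local pieces forms a second-iteration Hilbert curve. Without either reproducing such formulas or citing that result, your construction of $(e_k(b),d_k(b))$ is an assertion, not a proof.

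There is a second gap: your treatment of the distance identity for $k\ge 1$ (``what remains is controlled by the inductive continuity hypothesis'') does not suffice. After applying $\rho_k$, the inductive hypothesis only tells you that the two \emph{parent} cells $C$ and $C'$ are consecutive cells of the level-$k$ curve; it does not follow from this alone that the finest subcells $p_\ell$ and $q_\ell$ sit adjacently across the face $C\cap C'$, since that is a statement about where the refined curves end and begin \emph{inside} $C$ and $C'$. The paper closes this with a separate geometric argument: the last subcell traversed in $C$ contains the exit vertex $v$ of $C$, the first subcell traversed in $C'$ contains the entry vertex $v'$ of $C'$, continuity of $h^N$ gives $v=v'$, and because the refinements preserve these corner subcells ($w=v$, $w'=v'$), the two finest subcells are the corner subcells at the common vertex on either side of the shared face, hence adjacent. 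Some argument of this kind --- inter-cell gluing at every level, not just intra-cell gluing within one parent --- is indispensable and is absent from your proposal.
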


We call $h_b^k$ a \emph{local piece} of $h^{k+1}$, and
$h^{k+1}$ is called a \emph{refinement} of $h^k$.

\begin{proof}
We prove the assertion by induction: namely that for all $N\in\mathds{N}$
the assertion holds true for all $k\le N$.

\smallskip
$N=0$. In this case, we have to check only the one diagram
\[
\xymatrix{
\mathds{F}_2^n\ar[r]^{h^1}\ar[d]_{\rho_0}&\mathds{F}_2^n\ar[d]^{\rho_0}\\
0\ar[r]_{h_0}&0
}
\]
where $h^0$ is the zero map. 
According to (\ref{localHilbert}), this works with 
$h^1=\gc_n=\gc_n^{(0,0)}$ being the Gray code, i.e. $e_0$ and $d_0$ map everything to zero. The statement about the distance is clearly true.

\smallskip
Assume now that the assertion holds true for all $k\le N-1$ with $N\ge 1$.
This means that we have maps $e_{N-1}$ and $d_{N-1}$ such that the maps
$h_b^{N-1}$ for $b\in\mathds{F}_2^n[t]/t^{N-1}\mathds{F}_2^n[t]$ are the restrictions of a map 
\[
h^{N}\colon\mathds{F}_2^n[t]/t^{N}\mathds{F}_2^n[t]\to
\mathds{F}_2^n[t]/t^{N}\mathds{F}_2^n[t]
\]
to $b+\mathds{F}_2^n t^{N}$ satisfying the distance condition.
Let $h_b^{N-1}$ be a local piece of $h^N$. Without loss of generality, 
we may assume that $h_b^{N-1}=\gc_n$. Then we need to construct a Hilbert curve $\chi_b$
such that
\[
\xymatrix{
\mathds{F}_2^n[t]/t^2\mathds{F}_2^n[t]\ar[r]^{\chi_b}\ar[d]_\rho
&\mathds{F}_2^n[t]/t^2\mathds{F}_2^n[t]\ar[d]^\rho
\\
\mathds{F}_2^n\ar[r]_{\gc_n}&\mathds{F}_2^n
}
\]
where $\rho$ is the canonical projection. Let $a\in\mathds{F}_2^n$.
We need to show that $h_a$ in the following diagram is a local piece of $\chi_b$:
\[
\xymatrix{
\rho^{-1}(a)=a+\mathds{F}_2^n t\ar[r]^{\qquad h_a}\ar[d]_{\lambda_a}
&a+\mathds{F}_2^n t\ar[d]^{\lambda_a}
\\
\mathds{F}_2^n\ar[r]_{\gc_n^{(e_a,d_a)}}&\mathds{F}_2^n
}
\]
Let $i=\gamma_n^{-1}(a)$, and set $e_a=e(i)$ and $d_a=d(i)$ as in \cite{Hamilton2006}, where it is shown that the family of
$h_a$ form a second iteration Hilbert curve $\chi_b$. We now need to show that each $\chi_b$ is a local piece of a Hilbert curve $h^{N+1}$. Strictly speaking, \cite{Hamilton2006} shows only that the second iteration
is a Hilbert curve. 

\smallskip
$h^N$ being a Hilbert curve means that (via $c_n^N$) the local pieces $h_b^{N-1}$ traverse hypercubes $C$ which themselves are elements of a Hilbert curve. Let $b,b'$ be such that $C,C'$ is an ordered pair of neighbouring hypercubes. Thus, the last hypercube traversed by $h_b^{N-1}$ contains a vertex $v$ of $C$, and the first hypercube traversed by $h_{b'}^{N-1}$ contains a vertex $v'$ of $C$. As $h^N$ is a Hilbert curve, it follows that 
\[
v=v'
\]
Now, refinements $\chi_b$ and $\chi_{b'}$ are also Hilbert curves. Hence, the last hypercube traversed by
$\chi_b$ contains a vertex $w$ of $C$,
and the first hypercube traversed by $\chi_{b'}$ contains a vertex $w'$ of  $C'$. As those Hilbert curves are refinements of $h_b^{N-1}$ and $h_{b'}^{N-1}$, respectively, it follows that 
\[
w=v\;\text{and}\; w'=v'
\]
Hence, $v=v'$. This proves that the distance between those two small hypercubes also equals one. Hence, $h^{N+1}$ is a Hilbert curve.
\end{proof}






\subsection{Case $p>2$}
If we view $x\in\mathds{F}_p^n$ as a word over the alphabet $\mathset{0,\dots,p-1}$, then we can define the \emph{longest common prefix length}
as the number $\ell(x,y)$ of common initial consecutive letters of $x$ and $y$.
Then we can define
\[
\trail_{-1}(x):=\ell(x,d)
\]
for $x\in\mathds{F}_p^n$, and $d=(-1\dots-1)$.

\begin{lem}\label{trail}
  Let $k=\trail_{-1}(x)$ for $x=\bin_n(i)$ with
  $i\in\mathds{Z}/p^n\mathds{Z}$. Then
  \[
\bin_n(i+1)=(y_\nu)
\]
with
\[
y_\nu=\begin{cases}
0,&\nu< k\\
x_{k}+1,&\nu=k\\
x_\nu,&\nu>k
\end{cases}
\]
and $\nu\in\mathset{0,\dots,n-1}$.
\end{lem}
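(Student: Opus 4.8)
\emph{Proof proposal.} This statement is precisely the carry rule for adding $1$ to the base-$p$ expansion of $i$, so the plan is to unwind the definition of $\trail_{-1}$ into explicit digitwise conditions on $x$, perform the increment by a direct computation, and then invoke the uniqueness of $p$-adic representations, which is exactly what injectivity of $\bin_n$ provides.

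First I would translate $k=\trail_{-1}(x)=\ell(x,d)$ into information about the digits of $x=(x_{n-1},\dots,x_0)$. The quantity $\ell(x,d)$ counts the initial letters in which $x$ agrees with $d=(-1\dots-1)$, and this common block is the trailing (least-significant) block, i.e. the $p$-adic analogue of the count of trailing $1$-bits used in the binary case. Concretely, $k=\trail_{-1}(x)$ means
\[
x_0=x_1=\dots=x_{k-1}=-1=p-1,\qquad x_k\neq p-1\ \ (\text{if }k<n).
\]
Pinning down this reading direction is the one place where care is needed: one must check that the ``prefix'' measured by $\trail_{-1}$ corresponds to the digits $x_0,\dots,x_{k-1}$ that carry, and not to the most-significant end, as otherwise the asserted formula would fail.

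Next I would compute $i+1$ directly. Writing $i=\sum_{\nu=0}^{n-1}x_\nu p^\nu$ and substituting the values above, the trailing block collapses by a geometric sum,
\[
1+\sum_{\nu=0}^{k-1}(p-1)\,p^\nu=1+(p-1)\,\frac{p^{k}-1}{p-1}=p^{k},
\]
so that
\[
i+1=(x_k+1)\,p^{k}+\sum_{\nu=k+1}^{n-1}x_\nu p^\nu .
\]
Since $x_k\le p-2$ we have $0\le x_k+1\le p-1$, hence every coefficient on the right lies in $\mathset{0,\dots,p-1}$ and the displayed expression is a genuine base-$p$ expansion. By uniqueness of the $p$-adic representation its digit vector is forced to be $(y_\nu)$ with $y_\nu=0$ for $\nu<k$, $y_k=x_k+1$, and $y_\nu=x_\nu$ for $\nu>k$, which is the claim.

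Finally I would dispatch the boundary case $k=n$, i.e. $x=d$ and $i=p^n-1$: here the increment wraps around to $i+1\equiv 0 \bmod p^n$, and the formula degenerates correctly to $y_\nu=0$ for all $\nu\in\mathset{0,\dots,n-1}$, since the cases $\nu=k$ and $\nu>k$ simply do not arise. There is no deep obstacle in this lemma; the only real subtleties are fixing the reading direction of $\trail_{-1}$ so it records the least-significant digits, and keeping the $k=n$ wrap-around consistent with computing modulo $p^n$.
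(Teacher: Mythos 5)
Your proposal is correct and takes essentially the same approach as the paper: both proofs are the carry rule for adding $1$ to a base-$p$ integer whose trailing digits $x_0,\dots,x_{k-1}$ all equal $p-1$, so that the carry stops at the first digit $x_k\neq p-1$. The paper states this carry propagation verbally in two sentences, while you formalize the same idea via the geometric sum $1+\sum_{\nu=0}^{k-1}(p-1)p^\nu=p^k$, uniqueness of the base-$p$ representation, and the wrap-around case $i=p^n-1$; this added rigor (including your correct resolution of the reading direction of $\trail_{-1}$ as counting least-significant digits) is sound but does not constitute a different method.
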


\begin{proof}
For $\nu<k$, all $x_\nu$ get raised by one from $p-1$ to $0$, and then there is a last ``carry'', which raises $x_k$ by one. As $x_k\neq p-1$, it follows that there is no ``carry'' beyond this place, and so $y_\nu=x_\nu$ for $\nu>k$.
\end{proof}

Let
\[
\Delta(i):=\gc_n(\bin_n(i+1))-\gc_n(\bin_n(i))
\]

\begin{lem}
  It holds true that
  \[
  \Delta(i)
  =\pm e_{g(i)}
  \]
  is plus or minus a standard unit vector with
   \[
  g(i)=
  \trail_{-1}(\bin_n(i))
  \]
for all $i\in\mathds{Z}/p^n\mathds{Z}\setminus\mathset{-1}$.
\end{lem}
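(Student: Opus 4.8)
The plan is to prove the statement by induction on $n$, exploiting the recursive definition $\gc_n=G(n,p)=(i\,G(n-1,p)^{\sigma_i})_i$. Writing an index $j\in\mathset{0,\dots,p^n-1}$ as $j=i\,p^{n-1}+r$ with leading digit $i=x_{n-1}$ and remainder $r\in\mathset{0,\dots,p^{n-1}-1}$, the recursion gives
\[
\gc_n(\bin_n(j))=\bigl(i,\ \gc_{n-1}(\bin_{n-1}(r))\bigr)\ \text{if $i$ even},\qquad \bigl(i,\ \gc_{n-1}(\bin_{n-1}(p^{n-1}-1-r))\bigr)\ \text{if $i$ odd},
\]
because $G(n-1,p)^{\sigma_i}$ is $G(n-1,p)$ reversed exactly when $i$ is odd. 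The base case $n=1$ is immediate, since $\gc_1(\bin_1(i))=(i)$ gives $\Delta(i)=e_0$ while $\trail_{-1}(\bin_1(i))=0$ for $i\neq p-1$. Write $\Delta_{n-1}$ for the dimension-$(n-1)$ analogue of $\Delta$.

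For the inductive step I would split on whether the successor $j+1$ stays in the same leading block. \textbf{Case $r\neq p^{n-1}-1$} (so $j+1=i\,p^{n-1}+(r+1)$ keeps the leading digit $i$): the coordinate at position $n-1$ of $\Delta(j)$ vanishes, and the lower $n-1$ coordinates form a difference of consecutive entries of $G(n-1,p)^{\sigma_i}$. For $i$ even this difference is exactly $\Delta_{n-1}(r)$, which by induction equals $\pm e_{\trail_{-1}(\bin_{n-1}(r))}$; for $i$ odd, putting $s=p^{n-1}-1-r$, it equals $-\Delta_{n-1}(s-1)=\mp e_{\trail_{-1}(\bin_{n-1}(s-1))}$ (and $s-1\neq p^{n-1}-1$, so induction applies). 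In either case $\Delta(j)$ is $\pm$ a standard unit vector, and the embedding $w\mapsto(i,w)$ identifies positions $0,\dots,n-2$ with themselves, so the index is inherited from dimension $n-1$.

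\textbf{Case $r=p^{n-1}-1$ with $i\neq p-1$} (the block boundary): using the Section 3 Lemma computing $\gc_{n-1}$ at $0$ and $p^{n-1}-1$, we have $\gc_{n-1}(\bin_{n-1}(p^{n-1}-1))=d$ and $\gc_{n-1}(\bin_{n-1}(0))=0$. Since $i$ and $i+1$ have opposite parities, both endpoints share the same lower part, so $\Delta(j)=(1,0,\dots,0)=e_{n-1}$. As $\bin_n(j)=(i,p-1,\dots,p-1)$ with $i\neq p-1$, we get $\trail_{-1}(\bin_n(j))=n-1$, which matches the index. This also covers the full range, since $j\neq p^n-1$ forces $i\neq p-1$ precisely when $r=p^{n-1}-1$.

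The main obstacle is the bookkeeping of $\trail_{-1}$ in the two nontrivial subcases of Case~A. For $i$ even I must check $\trail_{-1}((i,\bin_{n-1}(r)))=\trail_{-1}(\bin_{n-1}(r))$, which holds because $r\neq p^{n-1}-1$ guarantees a non-$(p-1)$ digit of $\bin_{n-1}(r)$, so the trailing count never reaches position $n-1$. For $i$ odd I must verify $\trail_{-1}(\bin_{n-1}(s-1))=\trail_{-1}(\bin_{n-1}(r))$; here I would invoke the carry-free complement $\bin_{n-1}(p^{n-1}-1-r)=\bin_{n-1}(r)^\perp$ from Lemma~\ref{sigmaperp}, which turns the $t:=\trail_{-1}(\bin_{n-1}(r))$ trailing $(p-1)$'s of $r$ into $t$ trailing $0$'s of $s$ followed by a nonzero digit, and then observe (as in Lemma~\ref{trail}) that subtracting $1$ from $s$ converts those $t$ trailing zeros back into $(p-1)$'s while lowering the next digit to a value $\neq p-1$. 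Hence $s-1$ has exactly $t$ trailing $(p-1)$'s, and the index agrees, closing the induction.
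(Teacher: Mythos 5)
Your proof is correct, but it takes a genuinely different route from the paper's. The paper splits the statement into two independent parts: it cites \cite{Guan1998} for the fact that consecutive Gray codewords differ in exactly one coordinate, and then identifies that coordinate by a digit-agreement argument — by Lemma \ref{trail} the digits of $\bin_n(i)$ and $\bin_n(i+1)$ agree above position $k=\trail_{-1}(\bin_n(i))$; since each Gray digit depends only on its own position and the higher ones, the two codewords agree above position $k$ and (the relevant permutation $\tau\in\mathset{\id,\sigma}$ being injective) differ at position $k$, so Guan's one-coordinate fact forces the changed coordinate to be $k$. You instead prove both parts simultaneously by induction on $n$ from the recursion $G(n,p)=\left(i\;G(n-1,p)^{\sigma_i}\right)_i$, splitting into the within-block case (where the reversal for odd leading digit is handled via the carry-free complement and a borrow argument showing that $\trail_{-1}$ is preserved when passing from $r$ to $s-1$) and the block-boundary case (where both endpoints' lower parts are simultaneously $0$ or simultaneously $d$, so the difference is exactly $e_{n-1}$). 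Your bookkeeping is sound: $s-1\neq p^{n-1}-1$, so the inductive hypothesis applies, and the digit at position $t$ after the borrow equals $p-2-r_t\le p-2$, hence is indeed $\neq p-1$. What your approach buys is self-containedness — it re-proves the Gray-code adjacency property rather than outsourcing it, and it even tracks the signs explicitly; what the paper's approach buys is brevity and modularity, reusing Lemma \ref{trail} and the literature for the part of the statement that is already known.
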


\begin{proof}
The first statement is proven in \cite{Guan1998}.

\smallskip\noindent
Let $x=\bin_n(i)$, $y=\bin_n(i+1)$, $x'=\gc_n(x)$, $y'=\gc_n(y)$, and
$k=\trail_{-1}(x)$. 
From Lemma \ref{trail}, it follows that the last $n-k-1$ digits of
$x'$ and $y'$ coincide. As
\[
x_{k}\neq y_{k}
\]
it follows that 
\[
x_k'=\tau(x_k)\neq\tau(y_k)=y_k'
\]
with either $\tau=\id$ or $\tau=\sigma$ for the permutation $\sigma$ in the definition of the Gray code.
This shows that
\[
g(i)=k
\]
as asserted.
\end{proof}

View the $p$-adic Gray code $G(n,p)$ as an ordering of points on a hypercube
$I^n$.
Assume that this hypercube is further regularly subdivided into $p^n$ hypercubes, each of which is ordered by a transformed $p$-adic Gray code $\gc_n^e$ for some corner $e\in\mathds{F}_p^n$ as follows: 
Let $\epsilon(i)$ and $\phi(i)$ be the entry and exit points of the $i$-th subhypercube in the transformed Gray code ordering of that individual subhypercube.
The $i$-th and the $i+1$-th subhypercubes are neighbors along the $g(i)$-th coordinate, we require
\[
\epsilon(i+1)-\phi(i)=e_{g(i)}
\]
for $i\in\mathds{Z}/p^n{Z}\setminus\mathset{-1}$. We call this construction the
\emph{first iteration of a $p$-adic Gray-Hilbert curve}.

\begin{lem}
Entry and exit points of a given subhypercube are corners and are opposite to each other:
\[
\epsilon(i)\in\mathset{0,-1}^n,\quad \phi(i)=\epsilon(i)^\perp
\]
for $i\in\mathds{Z}/p^n\mathds{Z}\setminus\mathset{-1}$.
\end{lem}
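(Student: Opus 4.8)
The plan is to identify the entry and exit points of the $i$-th subhypercube with the values of its ordering transformed Gray code $\gc_n^{e(i)}$ at the extreme indices, and then to quote the Lemma that evaluates $\gc_n^e$ at $0$ and $d$.

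First I would fix $i\in\mathds{Z}/p^n\mathds{Z}\setminus\mathset{-1}$ and write $e=e(i)\in\mathcal{C}$ for the corner such that the $i$-th subhypercube carries the ordering $\gc_n^{e}=T_e^{-1}\circ\gc_n$. The entry point $\epsilon(i)$ is by definition the first point of this ordering and the exit point $\phi(i)$ is its last; since $\bin_n(0)=0$ and $\bin_n(p^n-1)=d$, these are exactly $\gc_n^{e}(0)$ and $\gc_n^{e}(d)$. The earlier Lemma gives $\gc_n^{e}(0)=e$ and $\gc_n^{e}(d)=e^\perp$, so $\epsilon(i)=e$ and $\phi(i)=e^\perp$.

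Two short verifications then close the argument. Because $e\in\mathcal{C}$, the entry point $\epsilon(i)=e$ lies in $\mathset{0,-1}^n$, which is the first assertion. For the second, I would check that $\perp$ preserves corners: componentwise $(e^\perp)_j=-1-e_j$, and $e_j\in\mathset{0,-1}$ forces $(e^\perp)_j\in\mathset{0,-1}$, so $\phi(i)=e^\perp\in\mathcal{C}$; by the definition of opposite points this is precisely $\phi(i)=\epsilon(i)^\perp$.

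For a single subhypercube this is essentially a reading of the $\gc_n^e$-Lemma, and the only points needing care are the identification of the geometric entry and exit with the indices $0$ and $d$, together with the stability of $\mathcal{C}$ under $\perp$. The genuinely delicate issue---underlying the well-definedness of the construction rather than this statement itself---is that the corners $e(i)$ must be chosen compatibly with the gluing $\epsilon(i+1)=\phi(i)+e_{g(i)}$: for the recursively produced entry point to remain in $\mathcal{C}$ one needs the $g(i)$-th coordinate of $\phi(i)=\epsilon(i)^\perp$ to equal $-1$, so that $-1+1=0$ is again a legal corner entry rather than the forbidden value $1$. If one prefers to establish the Lemma for all $i$ by induction from $\epsilon(0)=0$, this amounts to matching the exit direction of $\gc_n^{e(i)}$ with the adjacency coordinate $g(i)$, which can be extracted from Corollary \ref{A&d} and the identity $\Delta(i)=\pm e_{g(i)}$; that alignment is the step I would expect to be the crux.
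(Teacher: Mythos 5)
Your proposal is correct and follows essentially the same route as the paper: identify $\epsilon(i)=\gc_n^e(0)$ and $\phi(i)=\gc_n^e(d)$ for the corner $e=e(i)$, then apply the defining properties $T_e^{-1}(0)=e$ and $T_e^{-1}(d)=e^\perp$ (the earlier lemma on $\gc_n^e$). Your added checks (that $\perp$ preserves corners, and the remark on compatibility of the recursive choice of $e(i)$) are sound but go beyond what the paper's proof records.
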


\begin{proof}
  Now, $\epsilon(i)=\gc_n^e(0)$ for some corner $e\in\mathds{F}_p^n$. It follows
  that
  \[
\epsilon(i)=\gc_n^e(0)=T_e^{-1}(\gc_n(0))=T_e^{-1}(0)=e
\]
Thus, $\epsilon(i)$
is a corner. Further, 
\[
\phi(i)=\gc_n^e(d)=T_e^{-1}(\gc_n(d))=T_e^{-1}(d)=e^\perp=\epsilon(i)^\perp
\]
as asserted.
\end{proof}

Hence, we have
\begin{align}\label{entry-rec}
  \epsilon(i+1)-\epsilon(i)^\perp
  =e_{g(i)}
\end{align}
for $i\in\mathds{Z}/p^n\mathds{Z}\setminus\mathset{-1}$.

\smallskip
Further,
$\epsilon(0)$ equals the entry point of the parent hypercube, and
$\phi(p^n-1)$ equals the exit point of the parent hypercube.

\begin{lem}
  The entry points of the first iteration of a $p$-adic Gray-Hilbert curve satisfy:
  \[
  \epsilon(i)=
  \sum\limits_{j=0}^{i-1}(-1)^{i-1-j}(e_{g(j)}+d)
\]
for $i\in\mathset{0,\dots,p^n-1}$.
\end{lem}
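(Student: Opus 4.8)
The plan is to recast the recursion (\ref{entry-rec}) as an ordinary first-order linear recursion and then solve it by induction on $i$. First I would use the definition of opposite points, $\epsilon(i)+\epsilon(i)^\perp=d$, to rewrite $\epsilon(i)^\perp=d-\epsilon(i)$, so that (\ref{entry-rec}) becomes
\[
\epsilon(i+1)=-\epsilon(i)+\left(e_{g(i)}+d\right).
\]
Abbreviating $a_i:=e_{g(i)}+d$, this is the two-term recursion $\epsilon(i+1)=-\epsilon(i)+a_i$, in which the sign of the accumulated contributions flips at every step; it is exactly this sign flip that produces the alternating factor $(-1)^{i-1-j}$ in the claimed closed form.

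For the base case I would argue that $\epsilon(0)=0$. By construction $\epsilon(0)$ is the entry point of the parent hypercube, which is traversed by the untransformed reflected Gray code $\gc_n$, and the first Lemma of Section~3 shows that this ordering starts at $\gc_n(\bin_n(0))=(0\dots0)=0$. Since the sum in the asserted formula is empty for $i=0$ and hence equals $0$, the formula is consistent with this base case.

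The inductive step is then a direct manipulation of the sum. Assuming $\epsilon(i)=\sum_{j=0}^{i-1}(-1)^{i-1-j}a_j$, I substitute into $\epsilon(i+1)=-\epsilon(i)+a_i$, which negates every term of the sum to give $\sum_{j=0}^{i-1}(-1)^{i-j}a_j$, and then absorb the extra summand as $a_i=(-1)^{(i+1)-1-i}a_i$, obtaining
\[
\epsilon(i+1)=\sum_{j=0}^{i}(-1)^{i-j}a_j=\sum_{j=0}^{(i+1)-1}(-1)^{(i+1)-1-j}\left(e_{g(j)}+d\right),
\]
which is the formula with $i$ replaced by $i+1$.

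I do not expect a genuine obstacle here, since the entire content is solving a linear recursion; the only points requiring care are bookkeeping ones. All arithmetic takes place in $\mathds{F}_p^n$, so the alternating signs $(-1)^{i-1-j}$ are honest field elements modulo $p$ (and since $p$ is odd, $-1\neq 1$, so the alternation is not vacuous as it would be for $p=2$). Moreover, the recursion (\ref{entry-rec}) is only available for $i\in\mathds{Z}/p^n\mathds{Z}\setminus\{-1\}$, so the induction ranges over precisely the indices $i\in\{0,\dots,p^n-1\}$ asserted in the statement, excluding the wrap-around step.
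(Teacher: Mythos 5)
Your proof is correct and takes essentially the same route as the paper, whose entire proof is ``This follows by induction from (\ref{entry-rec})''; you have simply carried out that induction in full, correctly rewriting $\epsilon(i)^\perp=d-\epsilon(i)$ to get the signed recursion and grounding the base case $\epsilon(0)=0$ in the untransformed Gray code ordering of the parent hypercube. Your bookkeeping remarks (odd $p$, exclusion of the wrap-around index $i=-1$) are accurate and only add precision to the paper's one-line argument.
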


\begin{proof}
This follows by induction from (\ref{entry-rec}).
\end{proof}

Let
\[
\epsilon^e:=T_e^{-1}\circ\epsilon\colon\mathds{Z}/p^n\mathds{Z}\to\mathds{F}_p^n
\]
for $e\in\mathds{F}_p^n$.
Let $i\in\mathds{Z}/(p^n)^{k+1}\mathds{Z}$, and let
\[
P_i(t)=\sum\limits_{\lambda=0}^k i_\lambda t^\lambda\in\mathds{Z}/p^n\mathds{Z}[t]
\]
be such that $P_i(p^n)$ is the $p^n$-adic expansion of $i$.
$P_i(t)$ is called the \emph{$p^n$-adic polynomial} for $i$.  

\begin{dfn}
  A \emph{reflected $p$-adic Gray-Hilbert curve} is a map
  \[
  h_n\colon\mathds{F}_p^n[[t]]\to\mathds{F}_p^n[[t]],\;
  \sum\limits_{k\in\mathds{N}}x_kt^k
  \mapsto\sum\limits_{k\in\mathds{N}}\gc_n^{\epsilon_k}(x_k)t^k
  \]
  where  $\epsilon_k=\epsilon_k(x_k)\in\mathds{F}_p^n$ are corners such that
  \[
\epsilon_{k+1}-\epsilon_k^\perp=\pm e_{\tau_k}
\]
for some $\tau_k\in S_{n}$ and all $k\in\mathds{N}$.
\end{dfn}

\begin{thm}\label{p-adicHC}
  Consider for each $i\in\mathds{Z}/(p^n)^{k+1}$ the sequence
  \[
  \epsilon_0=0,\quad
\epsilon_{\lambda+1}=\epsilon^{\epsilon_\lambda}(i_\lambda)
\]
where $i_\lambda\in\mathds{Z}/p^n\mathds{Z}$ is a coefficient of the $p^n$-adic polyonomial
\[
P_i(t)=\sum\limits_{\lambda\in\mathds{N}}i_\lambda t^\lambda
\]
for $i$. Then it holds true that
\[
h_n(x)=\sum\limits_{\lambda\in\mathds{N}}\gc_n^{\epsilon_\lambda}(x_\lambda)t^\lambda
\]
with $x=\sum\limits_{\lambda\in\mathds{N}}x_\lambda t^\lambda$ and
\[
x_\lambda=\bin(i_\lambda)
\]
yields a well-defined $p$-adic Gray-Hilbert curve
\[
h_n\colon\mathds{F}_p^n[[t]]\to\mathds{F}_p^n[[t]],\;x\mapsto h_n(x)
\]
for any $p>2$ prime and $n>0$.
\end{thm}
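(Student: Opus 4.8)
The plan is to verify the two defining requirements of a reflected $p$-adic Gray-Hilbert curve --- that every $\epsilon_\lambda$ produced by the recursion is a corner, and that consecutive transformed entry points are adjacent --- and then to argue that the recursion assembles these into a consistently defined map on $\mathds{F}_p^n[[t]]$. First I would observe that the recursion $\epsilon_{\lambda+1}=\epsilon^{\epsilon_\lambda}(i_\lambda)$ expresses $\epsilon_{\lambda+1}$ purely in terms of $\epsilon_\lambda$ and the single digit $i_\lambda$, so that $\epsilon_\lambda$ depends only on the finitely many lower digits $i_0,\dots,i_{\lambda-1}$. Consequently the $\lambda$-th output coefficient $\gc_n^{\epsilon_\lambda}(x_\lambda)$ is determined by $x_0,\dots,x_\lambda$ alone; the assignment therefore extends coefficient-by-coefficient to all of $\mathds{F}_p^n[[t]]$, is continuous for the $t$-adic topology, and is compatible with truncation, exhibiting $h_n$ as the projective limit of its finite iterations. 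This is the ``well-defined'' clause, once the corner and adjacency properties below are in place.

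The corner property I would prove by induction on $\lambda$. The base case $\epsilon_0=0\in\mathcal{C}$ is immediate. For the step, assume $\epsilon_\lambda\in\mathcal{C}$. Then $T_{\epsilon_\lambda}\in\mathcal{T}_{\epsilon_\lambda}$ maps $\mathcal{C}$ onto $\mathcal{C}$, and by Corollary \ref{Tinverse} its inverse lies in $\mathcal{T}_{\epsilon_\lambda^\tau}$, hence also permutes $\mathcal{C}$. Since every untransformed entry point $\epsilon(i_\lambda)$ is a corner, $\epsilon_{\lambda+1}=\epsilon^{\epsilon_\lambda}(i_\lambda)=T_{\epsilon_\lambda}^{-1}(\epsilon(i_\lambda))$ is again a corner, closing the induction.

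For the adjacency requirement I would transport the level-zero relation (\ref{entry-rec}), $\epsilon(i+1)-\epsilon(i)^\perp=e_{g(i)}$ with $g(i)=\trail_{-1}(\bin_n(i))$, through $T_{\epsilon_\lambda}^{-1}$. Since entry and exit points are opposite corners and, by Corollary \ref{Tperp}, $T_{\epsilon_\lambda}^{-1}$ sends opposite points to opposite points, one has $\epsilon^{\epsilon_\lambda}(i)^\perp=T_{\epsilon_\lambda}^{-1}(\epsilon(i)^\perp)$. As $T_{\epsilon_\lambda}^{-1}\in\mathcal{T}_{\epsilon_\lambda^\tau}$ by Corollary \ref{Tinverse}, its defining property carries the difference of a standard unit vector to another signed standard unit vector, so that
\[
\epsilon^{\epsilon_\lambda}(i+1)-\epsilon^{\epsilon_\lambda}(i)^\perp=\pm e_{\tau^{-1}(g(i))}.
\]
This is precisely the required relation of the form $\pm e_{\tau_\lambda}$ for consecutive subhypercubes within each level.

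The main obstacle is the continuity of the full curve at the ``carry'' boundaries, where incrementing the global index forces a carry from digit position $\lambda$ into higher positions. Here I would mimic the inductive argument of Theorem \ref{glueHilbert}: assuming the depth-$\lambda$ curve is already a valid Hilbert curve, the last sub-subhypercube traversed in one block and the first one in the next meet in a shared face, and I must check that refining by one further level preserves this. The key point is that the recursion $\epsilon_{\lambda+1}=\epsilon^{\epsilon_\lambda}(i_\lambda)$ makes the entry point of each refined block equal to the entry corner of the parent block it sits in, while the opposite-point compatibility of Corollary \ref{Tperp} forces its exit corner to be the matching opposite one; tracking these through the carry chain shows the shared face persists, so the adjacency propagates to all depths and the projective limit stays continuous.
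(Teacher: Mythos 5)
Your proposal is correct and takes essentially the same route as the paper: its heart --- transporting the relation (\ref{entry-rec}) through $T_{\epsilon_\lambda}^{-1}$ by means of Corollaries \ref{Tinverse} and \ref{Tperp} to obtain $\epsilon^{\epsilon_\lambda}(i+1)-\epsilon^{\epsilon_\lambda}(i)^\perp=\pm e_{\tau^{-1}(g(i))}$ --- is exactly the paper's displayed chain of equalities. If anything, you are more complete than the paper, whose proof treats only the no-carry case (it asserts that consecutive Gray-code pieces at level $k$ lie in a common level-$(k-1)$ hypercube, which fails precisely when a carry occurs) and leaves the corner property of the $\epsilon_\lambda$ implicit, whereas you prove the corner property by induction and explicitly sketch the carry-boundary argument via the entry/exit compatibility with the parent cell.
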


We call
\begin{align*}
h_n^k&\colon \mathds{F}_p^n[t]/t^{k+1}\mathds{F}_p^n[t]\to
\mathds{F}_p^n[t]/t^{k+1}\mathds{F}_p^n[t],
\\
&x\mapsto
h_n^k(x)=\sum\limits_{\lambda=0}^k\gc_n^{\epsilon_\lambda}(x_\lambda)t^\lambda
\end{align*}
 the \emph{$k$-th iteration} of $h_n$.

\begin{proof}
  We need to show that the Hamming distance between the exit point
  of one curve piece and the entry point 
  of the next curve piece equals one for any iteration $k$ of the $p$-adic Gray-Hilbert curve. Let $h_n^k$ be the $k$-th iteration of $h$. Then two consecutive Gray-code pieces belonging to neighouring hypercubes at level $k$ come from
  the same hypercube at level $k-1$. Hence, we are interested in what happens for points $x=\bin_n(i)$, $x'=\bin_n(i')$ with
  $i'_{k}=i_{k}-1$, and $i'_\lambda=i_\lambda$ for $\lambda < k$.
  This means that $\epsilon_{k}$ coincides for both points.
  The difference between the exit point at $i_{k}-1$ and entry point at $i_k$ is thus
  \begin{align*}
  \epsilon^{\epsilon_{k}}(i_{k}-1)^\perp
  &- \epsilon^{\epsilon_{k}}(i_{k})
  =T_{\epsilon_k}^{-1}(\epsilon(i_k))
  -T_{\epsilon_k}^{-1}(\epsilon(i_k-1))^\perp
  \\
&  \stackrel{\ref{Tinverse} + \ref{Tperp}}{=}
  T_{\epsilon_k}^{-1}(\epsilon(i_k))
  -T_{\epsilon_k}^{-1}(\epsilon(i_k-1)^\perp)
  \\
  &=A_k^{-1}(\epsilon(i_k)-\epsilon(i_k-1))
  \stackrel{(\ref{entry-rec})}{=}A_k^{-1}(e_{g(i_k)})
  \\
  &=\pm e_{\tau_k^{-1}(g(i_k))}
    \end{align*}
  where $A_k^{-1}$ is the linear part of and $\tau_k$ the permutation associated with $T_{\epsilon_k}^{-1}$.
\end{proof}

\section{Taking the projective limits}

The diagram (\ref{pi}) is commutative for all $p$:

\begin{align}
\xymatrix{
\mathds{Z}/(p^n)^{k+1}\mathds{Z}\ar[r]^{\bin_n^{k+1}}\ar[d]_{\pi_k}&\mathds{F}_p^n[t]/t^{k+1}\mathds{F}_p^n[t]\ar[d]^{\rho_k}\\
\mathds{Z}/(p^n)^k\mathds{Z}\ar[r]_{\bin_n^k}&\mathds{F}_p^n[t]/t^k\mathds{F}_p^n[t]
}
\end{align}

where $\rho_k$ is the canonical projection.

\smallskip
The horizontal bijections $\bin_n^k$ induce a homeomorphism between the projective limits:
\[
\xymatrix{
\mathds{Z}_{p^n}^\pi:=\varprojlim\limits_{\pi_k}\mathds{Z}/(p^n)^k\mathds{Z}
\ar[r]^{\quad\qquad\bin_n}&\mathds{F}_p^n[[t]]
}
\]
Here, $\varprojlim\limits_{\pi_k}$ is the projective limit with respect to the inverse system 
$\mathset{\pi_k\mid k\in\mathds{N}}$ (cf.\ \cite[\S 2.5]{Jacobson1989} for the definition of projective limit).

\begin{thm}
The Hilbert curve constructions lead to a continuous surjective map $\mathcal{H}_n$
which fits into the following commutative diagram:
\[
\xymatrix{
\mathds{Z}_{p^n}^\pi\ar[drr]_{\mathcal{H}_n}\ar[r]^{\!\!\!\!\bin_n}
&\mathds{F}_p^n[[t]]\ar[r]^{h_n}
&\mathds{F}_p^n[[t]]\ar[d]^{\coord_n}
\\
&&[0,1]^n
}
\]
The map $\coord_n$ is induced by
\begin{align*}
\coord_n^k&\colon\mathds{F}_p^n[t]/t^k\mathds{F}_p^n[t]\to[0,1]^n
\\
&\sum\limits_{\nu=0}^{k-1}a_\nu t^\nu\mapsto
\sum\limits_{\nu=1}^{k}a_\nu \left(\frac{1}{p}\right)^\nu
\end{align*}
and is
 continuous, the maps $h_n$ and $\bin_n$ are homeomorphisms.
\end{thm}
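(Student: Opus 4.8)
The plan is to assemble the statement from three independent facts—that $\bin_n$ and $h_n$ are homeomorphisms, and that $\coord_n$ is a continuous surjection—after which the assertion about $\mathcal{H}_n=\coord_n\circ h_n\circ\bin_n$ follows formally. The crucial structural observation is that both $\mathds{Z}_{p^n}^\pi$ and $\mathds{F}_p^n[[t]]$ are projective limits of finite discrete spaces, hence compact and Hausdorff. This reduces the two homeomorphism claims to continuity together with bijectivity, because a continuous bijection from a compact space onto a Hausdorff space is automatically a homeomorphism.

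First I would treat $\bin_n$. The maps $\bin_n^k$ are bijections fitting into the commutative square (\ref{pi}) with the transition maps $\pi_k$ and $\rho_k$, so they are compatible with the two inverse systems and induce a bijection on the projective limits. Since each $\bin_n^k$ is a map of finite discrete spaces, it and the induced limit map are continuous, and by the compactness argument above $\bin_n$ is a homeomorphism.

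Next I would treat $h_n$. Well-definedness is already furnished by Theorem \ref{p-adicHC} for $p>2$ and by Theorem \ref{glueHilbert} for $p=2$. The key point for continuity is that the coefficient of $t^\lambda$ in $h_n(x)$ is $\gc_n^{\epsilon_\lambda}(x_\lambda)$, where the corner $\epsilon_\lambda$ is determined recursively by $\epsilon_0=0$ and $\epsilon_{\lambda+1}=\epsilon^{\epsilon_\lambda}(i_\lambda)$, hence depends only on the coefficients $x_0,\dots,x_{\lambda-1}$. Therefore truncation commutes with $h_n$, the finite iterations $h_n^k$ form an inverse system, and the $k$-th output coefficient is a locally constant function of the input, which gives continuity in the $t$-adic topology. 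Bijectivity follows because each $\gc_n^{\epsilon_\lambda}$ is a bijection and the corners $\epsilon_\lambda$ can be recovered recursively from the image, so every $h_n^k$ is a bijection and hence so is their limit. Compactness again upgrades this to a homeomorphism.

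Finally, $\coord_n$ is the base-$p$ coordinate map. Its surjectivity is just the existence of base-$p$ expansions of points of $[0,1]^n$. For continuity, two series agreeing modulo $t^{k}$ have images whose coordinates differ by at most the geometric tail $\sum\limits_{\nu>k}(p-1)p^{-\nu}\to 0$, so $\coord_n$ is continuous from the $t$-adic topology to the Euclidean one. Combining these, $\mathcal{H}_n$ is a composition of continuous maps, hence continuous, and is surjective since $\coord_n$ is surjective while $h_n$ and $\bin_n$ are bijections. I expect the only genuinely delicate step to be the verification that the finite-level maps $h_n^k$ really form an inverse system—that is, that truncating the output agrees with applying $h_n$ to the truncated input—since this is precisely where the recursive dependence of $\epsilon_\lambda$ on the lower-order coefficients must be invoked; everything else is formal once the compactness reduction is in place.
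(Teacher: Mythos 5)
Your proposal is correct and follows essentially the same route as the paper: the paper likewise factors $\mathcal{H}_n=\coord_n\circ h_n\circ\bin_n$, obtains $\bin_n$ and $h_n$ as homeomorphisms induced by the compatible level-wise bijections of diagrams (\ref{pi}) and Theorem \ref{glueHilbert} (carried over to $p>2$), and gets continuity of $\coord_n$ from the $1$-Lipschitz estimate $\absolute{\coord_n(f)-\coord_n(g)}\le\absolute{f-g}_p$, which is exactly your geometric-tail bound. Your compactness shortcut and your explicit verification that the $h_n^k$ form an inverse system (via the recursive dependence of $\epsilon_\lambda$ on lower-order coefficients) are just fuller write-ups of steps the paper leaves implicit.
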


\begin{proof}
The 
commutative diagram of Theorem \ref{glueHilbert} 
carries over to the case $p>2$ also. The horizontal bijections in that diagram
induce a homeomorphism
\[
h_n\colon \mathds{F}_p^n[[t]]\to\mathds{F}_p^n[[t]]
\]
The commutative diagram
 (\ref{continuous2Cube}) also carries over to the case $p>2$. The horizontal maps there
 induce a  map 
\[
\coord_n\colon\mathds{F}_p[[t]]\to [0,1]
\]
which is surjective, and also continuous, as it is ($1$-)Lipshitz:
\[
\absolute{\coord_n(f)-\coord_n(g)}\le\absolute{f-g}_p
\]
where 
\[
\absolute{x}_p=\begin{cases}
p^{-\nu},&x=t^\nu\cdot u,\;u=\sum\limits_{\mu=0}^\infty\alpha_\mu t^\mu,\;\alpha_0=1\\
0,&x=0
\end{cases}
\]
is the $p$-adic norm.
Thus all assertions are proven.
\end{proof}

Notice that $\coord_n$ is not injective. Namely, 
\begin{align*}
\coord_n\left(\frac{(p-1)t}{1-t}\right)=\sum\limits_{\nu=1}^\infty (p-1)p^{-\nu-1}=\frac{1}{p}=c_n(1)
\end{align*}

\smallskip
We call $\mathcal{H}_n$ an \emph{$n$-dimensional $p$-adic reflected Hilbert curve}.

\smallskip
Notice also that $\absolute{\cdot}_p$ is a non-Archimedean (also called ultrametric) norm on $\mathds{F}_p^n$:
it satisfies the strict triangle inequality
\[
\absolute{x+y}_p\le\max\mathset{\absolute{x}_p,\absolute{y}_p}
\]

\section{Approximating points on $2$-adic affine Gray-Hilbert curves}
In this section, we will show that, analogous to Theorem \ref{p-adicHC}, the local pieces of $2$-adic affine Gray-Hilbert curves are $2$-adic approximations of that curve.

\smallskip
Let $p=2$ and
\begin{align*}
\epsilon^{(e,d)}&\colon\mathds{Z}/2^n\mathds{Z}\to\mathds{F}_2^n
\\
i&\mapsto\begin{cases}
\gamma_n^{(e,d)}(0),&i=0\\
\gamma_n^{(e,d)}(i-1),&i\equiv 1\mod 2\\
\gamma_n^{(e,d)}(i-2),&i\equiv 0\mod 2
\end{cases}
\end{align*}
and
\begin{align*}
\delta^{(e,d)}&\colon\mathds{Z}/2^n\mathds{Z}\to\mathds{Z}/n\mathds{Z}
\\
i&\mapsto\begin{cases}
\tau_n^{(e,d)}(0),&i=0\\
\tau_n^{(e,d)}(i),&i\equiv 1\mod 2\\
\tau_n^{(e,d)}(i-1),&i\equiv 0\mod 2
\end{cases}
\end{align*}
with
$\tau_n^{(e,d)}$ as in (\ref{tau}).

\begin{lem}\label{direction+entry}
The quantities $\epsilon^{(e,d)}$ and $\delta^{(e,d)}$ define the sequences of intra-sub-hypercube directions and of entry points for $\gc_n^{(e,d)}$.
\end{lem}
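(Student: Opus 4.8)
The plan is to reduce the claim to the entry-point and direction formulas for the \emph{untransformed} binary Hilbert curve established in \cite{Hamilton2006}, and then to transport them along the affine transformation $T_{(e,d)}=\sigma_d\circ A_e$. First I would recall that for the standard binary reflected Gray code the $i$-th sub-hypercube of the second iteration is entered at the point $e(i)$ and its local piece runs in direction $d(i)$, where $e(0)=0$, $d(0)=0$, and for $i>0$ Hamilton gives $e(i)=\gamma_n(2\lfloor (i-1)/2\rfloor)$ together with a direction $d(i)$ expressed through the trailing-ones index $\tau_n$, evaluated with offset $i$ for $i$ odd and $i-1$ for $i$ even. Since $2\lfloor (i-1)/2\rfloor$ equals $i-1$ when $i$ is odd and $i-2$ when $i$ is even, the piecewise prescriptions defining $\epsilon^{(e,d)}$ and $\delta^{(e,d)}$ reproduce these two formulas verbatim, but with $\gamma_n$ replaced by $\gamma_n^{(e,d)}$ and $\tau_n$ replaced by $\tau_n^{(e,d)}$.

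The core of the argument is then to check that this substitution is legitimate, i.e.\ that transforming the standard entry/direction data yields the entry/direction data of the curve $\gc_n^{(e,d)}=T_{(e,d)}^{-1}\circ\gc_n$. For the entry points, since $\gamma_n^{(e,d)}=T_{(e,d)}^{-1}\circ\gamma_n$ and the index selection $2\lfloor(i-1)/2\rfloor$ takes place in the domain $\mathds{Z}/2^n\mathds{Z}$ while $T_{(e,d)}^{-1}$ acts on the codomain, applying $T_{(e,d)}^{-1}$ to Hamilton's entry point just yields $\gamma_n^{(e,d)}(2\lfloor(i-1)/2\rfloor)$, which is exactly $\epsilon^{(e,d)}(i)$. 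For the directions I would invoke Corollary \ref{transformedGray}: the jump between consecutive sub-hypercube positions in the transformed ordering is $e_{\tau_n(i)}^{\sigma_d}$, and by the definition (\ref{tau}) of $\tau_n^{(e,d)}$ this is precisely the unit vector indexed by $\tau_n^{(e,d)}(i)$; feeding this through the same even/odd selection reproduces $\delta^{(e,d)}(i)$.

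Finally I would verify the continuity condition that makes these data those of a genuine local refinement: the exit point of the $i$-th piece and the entry point of the $(i+1)$-st piece must be Hamming-distance one apart. This is inherited from the untransformed case, because by Theorem \ref{affinePerm} the map $T_{(e,d)}$ acts as a translation followed by an index permutation, and both operations preserve the Hamming distance; hence the adjacency relation among entry and exit points is carried over from Hamilton's second-iteration curve to $\gc_n^{(e,d)}$ unchanged. This also reproduces the pattern of Theorem \ref{p-adicHC}, confirming that $\epsilon^{(e,d)}$ and $\delta^{(e,d)}$ are the sought sequences of entry points and intra-sub-hypercube directions.

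The step I expect to be the main obstacle is the bookkeeping in the second paragraph: one must keep straight the two distinct roles of the pair $(e,d)$ — the translation $A_e$ acting on points of $\mathds{F}_2^n$ and the permutation $\sigma_d$ acting on the direction indices in $\mathds{Z}/n\mathds{Z}$ — and must match the entry selection (offsets $i-1$, $i-2$) and the direction selection (offsets $i$, $i-1$), which are shifted relative to one another, to Hamilton's respective formulas. The delicate point is to confirm that this single substitution, $\gamma_n$ to $\gamma_n^{(e,d)}$ and $\tau_n$ to $\tau_n^{(e,d)}$, really does reproduce both the entry points and the directions of the transformed curve simultaneously.
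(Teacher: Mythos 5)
Your proposal is correct and takes essentially the same route as the paper: the paper's entire proof is the citation ``this is an immediate consequence of Hamilton, Thm.\ 2.9 and Thm.\ 2.10,'' and your argument is precisely that reduction carried out in detail --- matching the piecewise definitions of $\epsilon^{(e,d)}$ and $\delta^{(e,d)}$ to Hamilton's entry-point and direction formulas and transporting them along $T_{(e,d)}$ using Corollary \ref{transformedGray} and the Hamming-distance invariance under translation and index permutation. The bookkeeping you flag as the main obstacle is exactly what the paper compresses into the word ``immediate.''
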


\begin{proof}
This is an immediate consequence of \cite[Thm. 2.9, Thm. 2.10]{Hamilton2006}.
\end{proof}

\smallskip
For $k\ge 0$ on smallest local pieces:
\[
\xymatrix{
\gamma_n^{(\epsilon_k,\delta_k)}(i_k)+\mathds{F}_2^n t\ar[r]^{h^{k+1}}\ar[d]_{D_t}
&\gamma_n^{(\epsilon_k,\delta_k)}(i_k)+\mathds{F}_2^n t\ar[d]^{D_t}\\
\mathds{F}_2^n\ar[r]_{\gc_n^{(\epsilon_{k+1},\delta_{k+1})}}&\mathds{F}_2^n
}
\]
with
\begin{align}\label{epsdelta}
\epsilon_{k+1}&=\epsilon^{(\epsilon_k,\delta_k)}(i_k),
\quad\delta_{k+1}=\delta^{(\epsilon_k,\delta_k)}(i_k)
\\
\epsilon_0&=0,\;\delta_0=0\nonumber
\end{align}
Or on $t$-adically expanded local pieces:
\[
\xymatrix{
b_k+\mathds{F}_2^n t^{k+1}\ar[r]^{h^{k+1}}\ar[d]_{D^k_t}
&b_k+\mathds{F}_2^n t^{k+1}\ar[d]^{D_t^k}\\
\mathds{F}_2^n\ar[r]_{\gc_n^{(\epsilon_{k+1},\delta_{k+1})}}&\mathds{F}_2^n
}
\]
with  
\[
b_k=\sum\limits_{\kappa=0}^k\gamma_n^{(\epsilon_\kappa,\delta_\kappa)}(i_\kappa)t^\kappa
\]

We see that
\[
D_t^k=\underbrace{D_t\circ\dots \circ D_t}_{\text{$k$ times}}
\]
with $D_t=D_t^1$.

\begin{lem}
Let $I(t)\in\mathds{Z}/2^n\mathds{Z}[t]$ be such that 
$I\left(2^n\right)$ is the $2^n$-adic expansion of $i\in\mathds{Z}/\left(2^n\right)^{k+1}\mathds{Z}$.
Then 
\[
\pi_k(i)=\Delta_t I(2^n)
\]
where
\begin{align*}
\Delta_t\colon\mathds{Z}/2^n\mathds{Z}[t]&\to\mathds{Z}/2^n\mathds{Z}[t]
\\
\sum\limits_{\kappa=0}^k a_\kappa t^\kappa &\mapsto\sum\limits_{\kappa=1}^k a_\kappa t^{\kappa-1} 
\end{align*}
\end{lem}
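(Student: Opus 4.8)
The plan is to unwind the definition of $\pi_k$ coming from diagram (\ref{pi}) at the level of $2^n$-adic expansions and to match the result directly against the coefficient shift $\Delta_t$. First I would fix $i\in\mathds{Z}/(2^n)^{k+1}\mathds{Z}$ and write its $2^n$-adic expansion $i=\sum_{\nu=0}^{k}\alpha_\nu(2^n)^\nu$ with $\alpha_\nu\in\mathset{0,\dots,2^n-1}$, so that by hypothesis $I(t)=\sum_{\nu=0}^{k}\alpha_\nu t^\nu$. Recall that $\pi_k=(\bin_n^k)^{-1}\circ\rho_k\circ\bin_n^{k+1}$, so it suffices to chase $i$ around the square.

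Pushing $i$ along the top edge, the definition of $\bin_n^{k+1}$ gives $\bin_n^{k+1}(i)=\sum_{\nu=0}^{k}\bin_n(\alpha_\nu)t^{k-\nu}$, where the decisive feature is the \emph{reversed} placement of the digits: the lowest-order digit $\alpha_0$ sits on the top-degree monomial $t^{k}$. Since $\rho_k$ discards exactly the $t^k$-term, it deletes $\bin_n(\alpha_0)$ and leaves $\sum_{\nu=1}^{k}\bin_n(\alpha_\nu)t^{k-\nu}=\sum_{\nu=0}^{k-1}\bin_n(\alpha_{\nu+1})t^{k-1-\nu}$. Comparing this, term by term in $t^{k-1-\nu}$, with the image $\sum_{\nu=0}^{k-1}\bin_n(\beta_\nu)t^{k-1-\nu}$ of a general element $\sum_{\nu=0}^{k-1}\beta_\nu(2^n)^\nu$ under $\bin_n^{k}$, and using that $\bin_n$ is a bijection, I read off $\beta_\nu=\alpha_{\nu+1}$ and hence $\pi_k(i)=\sum_{\nu=0}^{k-1}\alpha_{\nu+1}(2^n)^\nu$; this is exactly the computation already performed in the proof that the fibres of $\pi_k$ are consecutive sets of numbers.

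It then remains to evaluate the right-hand side. By the definition of $\Delta_t$ one has $\Delta_t I(t)=\sum_{\nu=1}^{k}\alpha_\nu t^{\nu-1}=\sum_{\nu=0}^{k-1}\alpha_{\nu+1}t^{\nu}$, whence $\Delta_t I(2^n)=\sum_{\nu=0}^{k-1}\alpha_{\nu+1}(2^n)^\nu$, which coincides with the expression for $\pi_k(i)$ obtained above. This establishes $\pi_k(i)=\Delta_t I(2^n)$.

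The only genuine subtlety, and the step I would document most carefully, is the bookkeeping of the index reversal built into $\bin_n^{k}$: because the digit of weight $(2^n)^\nu$ is attached to the monomial $t^{k-1-\nu}$ (respectively $t^{k-\nu}$ at the $(k+1)$-st level), the truncation $\rho_k$ that removes the \emph{highest} power of $t$ corresponds to discarding the \emph{lowest} $2^n$-adic digit of $i$. Once this correspondence is aligned, identifying $\pi_k$ with the shift $\Delta_t$ followed by evaluation at $2^n$ is purely formal and requires no estimates.
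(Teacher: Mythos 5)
Your proof is correct and is essentially the paper's own argument: the paper dismisses this lemma with ``One readily checks this,'' and your diagram chase through $\pi_k=(\bin_n^k)^{-1}\circ\rho_k\circ\bin_n^{k+1}$ is exactly that check, reproducing the digit-shift computation the paper already displays in its proof of the preceding lemma on the fibres of $\pi_k$. Your explicit attention to the index reversal in $\bin_n^k$ (lowest digit on the highest power of $t$, so that $\rho_k$ deletes $\alpha_0$) is the one point worth writing down, and you handle it correctly.
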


\begin{proof}
One readily checks this.
\end{proof}

\smallskip
With the $2^n$-adic expansion
\begin{align}\label{2^n-adic}
i=\sum\limits_{\kappa=0}^k i_k\left(2^n\right)^{k-\kappa}
\end{align}
we have
\begin{align}\label{ed}
e_k(b)&=\epsilon_k,\quad d_k(b)=\delta_k
\end{align}
for 
\begin{align}\label{b(i)}
b=b(i)=\sum\limits_{\kappa=0}^k\gamma_n^{(\epsilon_\kappa,\delta_\kappa)}(i_\kappa)t^\kappa\in\mathds{F}_2^n[t]/t^{k+1}\mathds{F}_2^n[t]
\end{align}

\begin{thm}\label{hilbertmapping}
It holds true that
\[
h_n(x)\equiv h^{k+1}(x_k)=\sum\limits_{\kappa=0}^k\gc_n^{(\epsilon_\kappa,\delta_\kappa)}(a_\kappa) t^\kappa
\mod t^{k+1}
\]
for $x=\sum\limits_{\kappa=0}^\infty a_\kappa t^\kappa\in\mathds{F}_2^n[[t]]$ and
$x_k=\sum\limits_{\kappa=0}^k a_\kappa t^\kappa$, and with
$\epsilon_\kappa$ and $\delta_\kappa$ as in (\ref{epsdelta}), where $i_\kappa=\bin_n^{-1}(a_\kappa)$.
\end{thm}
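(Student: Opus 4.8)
The plan is to separate the statement into a formal part, the congruence $h_n(x)\equiv h^{k+1}(x_k)\bmod t^{k+1}$, and a computational part, the explicit formula for $h^{k+1}(x_k)$; the latter carries the content.

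For the formal part, recall that the commutative square of Theorem~\ref{glueHilbert} (which carries over verbatim to $p=2$) expresses the compatibility $\rho_k\circ h^{k+1}=h^{k}\circ\rho_k$, so that the family $\mathset{h^{k+1}}_{k\in\mathds{N}}$ passes to the projective limit $\mathds{F}_2^n[[t]]$ and there defines $h_n$. Projecting $h_n(x)$ down to $\mathds{F}_2^n[t]/t^{k+1}\mathds{F}_2^n[t]$ then returns $h^{k+1}$ applied to the truncation $x_k$ of $x$, which is exactly the asserted congruence. It remains to prove
\[
h^{k+1}(x_k)=\sum_{\kappa=0}^{k}\gc_n^{(\epsilon_\kappa,\delta_\kappa)}(a_\kappa)\,t^\kappa=:b_k ,
\]
with $\epsilon_\kappa,\delta_\kappa$ given by the recursion (\ref{epsdelta}).

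I would prove this formula by induction on $k$. For $k=0$ the base case of Theorem~\ref{glueHilbert} gives $h^{1}=\gc_n=\gc_n^{(0,0)}$, and $(\epsilon_0,\delta_0)=(0,0)$ by (\ref{epsdelta}), so $h^{1}(x_0)=\gc_n^{(\epsilon_0,\delta_0)}(a_0)=b_0$. For the step, assume $h^{k}(x_{k-1})=b_{k-1}$. Compatibility gives $\rho_k\bigl(h^{k+1}(x_k)\bigr)=h^{k}(x_{k-1})=b_{k-1}$, which fixes the coefficients of $t^0,\dots,t^{k-1}$ of $h^{k+1}(x_k)$ to be those of $b_{k-1}$; hence $h^{k+1}(x_k)=b_{k-1}+c\,t^{k}$ for a single $c\in\mathds{F}_2^n$ still to be determined. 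Here the local piece enters: $h^{k+1}$ restricts, on the fiber of $\rho_k$ through $x_k$, to the map $h^k_b$ of diagram (\ref{localHilbert}), and after applying the deepest-digit coordinate $D_t^{k}$ --- under which $x_k\mapsto a_k$ --- this restriction becomes the transformed Gray code $\gc_n^{(e_k(b),d_k(b))}$. Thus $c=\gc_n^{(e_k(b),d_k(b))}(a_k)$, and it remains only to identify the parameters.

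The crux --- and the step I expect to demand the most care --- is precisely this identification $e_k(b)=\epsilon_k$, $d_k(b)=\delta_k$ of the abstract local-piece parameters of Theorem~\ref{glueHilbert} with the quantities produced by the recursion (\ref{epsdelta}); it is exactly what (\ref{ed}) records, and it rests in turn on Lemma~\ref{direction+entry}, i.e.\ on the fact that $\epsilon^{(e,d)}$ and $\delta^{(e,d)}$ reproduce Hamilton's entry-point and direction recursion \cite[Thm.~2.9, Thm.~2.10]{Hamilton2006}. The one genuinely delicate point is the index bookkeeping: one must align the $D_t^{k}$-coordinate on source and image fibers so that the induced self-map of $\mathds{F}_2^n$ reads as $\gc_n^{(\epsilon_k,\delta_k)}$ and not as a reflected or shifted variant, and one must confirm that $(\epsilon_k,\delta_k)$ depends only on the already-fixed data $a_0,\dots,a_{k-1}$ (equivalently on $i_0,\dots,i_{k-1}$), so that the recursion is well founded. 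Once the base case anchors $(\epsilon_0,\delta_0)=(0,0)$, the recursion (\ref{epsdelta}) then propagates the correct parameters at every level, giving $c=\gc_n^{(\epsilon_k,\delta_k)}(a_k)$ and hence $h^{k+1}(x_k)=b_k$, which closes the induction.
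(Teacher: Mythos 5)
Your proposal is correct and follows essentially the same route as the paper: the paper's proof simply says that $h^{k+1}(x_k)=b(i)$ ``by construction'' (i.e.\ the recursive definition via the local-piece diagrams, which your induction on $k$ unwinds explicitly), invokes Lemma~\ref{direction+entry} to certify that (\ref{ed}) supplies the correct entry points and directions, and notes the congruence $h_n(x)\equiv h^{k+1}(x_k)\bmod t^{k+1}$ as immediate from compatibility. Your version is just a more detailed spelling-out of that argument, with the same key ingredient (Lemma~\ref{direction+entry} together with the recursion (\ref{epsdelta})) carrying the load.
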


\begin{proof}
Observe that, by construction,
\[
h^{k+1}(x_k)=h^{k+1}(\bin_n^{k+1}(i))=b(i)
\]
where $i=\bin_n^{-1}(x_k)$ and $b(i)$ as in (\ref{b(i)}) for the $2^n$-adic expansion of $i$ given in
(\ref{2^n-adic}). According to Lemma \ref{direction+entry}, (\ref{ed}) defines the correct intra sub-hypercube directions and entry points.
Clearly, we have
\[
h_n(x)\equiv h^{k+1}(x_k)\mod t^{k+1}
\]
This proves the assertion.
\end{proof}

\section{Indexing a sample of hypercube points using $p$-adic Hilbert curves}

\subsection{The $p$-adic Gray-Hilbert tree}

Let $\mathcal{T}_n$ be the finite rooted tree in which each non-leaf vertex has precisely $p$ children, and with $p^n$ leaves. 
A \emph{level} is the length of a path from root to a vertex of $\mathcal{T}_n$.
The \emph{$p$-adic Gray-Hilbert tree} $\mathcal{G}_n$ is obtained by gluing a copy of $\mathcal{T}_n$ to each leaf by identifying that leaf with the root of $\mathcal{T}_n$. This copy will be called a \emph{local piece}. 
Each level of a local piece is labelled by the coordinate given by the local ordering of coordinates which
is determined by the transformation used for building up the
Gray-Hilbert curve at this particular hypercube subdivision.
The \emph{$k$-th iteration} of $\mathcal{G}_n$ is the finite subtree having the same root as $\mathcal{G}_n$ and $p^{nk}$ leaves.
A path in the $p$-adic Gray-Hilbert tree from root to a given vertex
corresponds to a sequence of subdivisions of certain coordinates
which yields a subcuboid of $I^n$ whose edges are one of two
kinds: either short or long.

\smallskip
Let $S \subset I^n$ be a finite point cloud. The \emph{Gray-Hilbert tree $\mathcal{G}_n(S)$
generated by $S$} is defined as being the smallest subtree of the
$p$-adic Gray-Hilbert tree $\mathcal{G}_n$ having the same root, and  whose leaves correspond to subcuboids of $I^n$ each containing
precisely one element of $S$. Later in this section, we will show how to obtain $\mathcal{G}_n(S)$ efficiently.

\subsection{The inverse Hilbert curve map: $p=2$.}

According to Theorem \ref{hilbertmapping}, 
the inverse map $\left(h_n^{k+1}\right)^{-1}$ is given for each coefficient by
\[
\left(\gc_n^{(\epsilon,\delta)}\right)^{-1}=\gc_n^{-1}\circ T_{(\epsilon,\delta)}
\]
and $\gc_n^{-1}$ is the map
\[
\mathds{F}_2^n\to\mathds{F}_2^n,\;x\mapsto A\cdot x
\]
with $A=(\alpha_{ij})$ and
\[
\alpha_{ij}=\begin{cases}
1,&i\le j\\
0,&\text{otherwise}
\end{cases}
\]
as can be readily checked.

\subsection{The inverse Hilbert curve map: $p>2$}
In the case $p>2$, the inverse map $\left(h_n^{k+1}\right)^{-1}$ is given coefficient-wise by
\[
\left(\gc_n^{\epsilon_k}\right)^{-1}
=\gc_n^{-1}\circ T_{\epsilon_k}
\]

Let $x\in\mathds{F}_p^n$. We can write
\[
x=(x_{n-1}\;x^{n-1})
\]
with $x^{n-1}=(x_{n-2},\dots,x_0)\in\mathds{F}_p^{n-1}$.

\begin{lem}\label{gcperp}
It holds true that
\[
\gc_n(x)^\perp=\gc_n(x^\perp)
\]
for $x\in\mathds{F}_p^n$.
\end{lem}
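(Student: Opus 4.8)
The plan is to prove $\gc_n(x)^\perp=\gc_n(x^\perp)$ by induction on $n$, exploiting the recursive definition $G(n,p)=(i\,G(n-1,p)^{\sigma_i})_i$. Throughout I would keep in mind that $p>2$ is odd, which turns out to be essential. The base case $n=1$ is immediate: there $\gc_1=\id$ on $\mathds{F}_p^1$, so both sides equal $x^\perp$.

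For the inductive step I would first extract a clean recursive formula for the map $\gc_n$ from the recursive definition of the sequence. Writing $x=(x_{n-1}\;x^{n-1})$ with leading digit $i:=x_{n-1}$ and tail $v:=x^{n-1}\in\mathds{F}_p^{n-1}$, and using that reversing the block $G(n-1,p)^{\sigma_i}$ corresponds to complementing the tail (index reversal $\leftrightarrow$ index complement, exactly as observed in the proof of Lemma \ref{sigmaperp}), one obtains
\[
\gc_n(x)=\begin{cases}(i\;\gc_{n-1}(v)),&i\text{ even},\\(i\;\gc_{n-1}(v^\perp)),&i\text{ odd}.\end{cases}
\]
Since $x^\perp=(p-1-i\;v^\perp)$, the leading digit of $x^\perp$ is $i':=p-1-i$, and the leading digits of $\gc_n(x)^\perp$ and $\gc_n(x^\perp)$ agree, both being $p-1-i$. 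The content therefore lies entirely in the tails.

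The key arithmetic point is that for odd $p$ the number $p-1$ is even, so $i'=p-1-i$ has the \emph{same} parity as $i$; hence $\gc_n(x^\perp)$ selects the same branch as $\gc_n(x)$. In the even branch the tail identity to verify is $\gc_{n-1}(v)^\perp=\gc_{n-1}(v^\perp)$, which is precisely the inductive hypothesis; in the odd branch it is $\gc_{n-1}(v^\perp)^\perp=\gc_{n-1}(v)$, which follows from the inductive hypothesis applied to $v^\perp$ together with $(v^\perp)^\perp=v$. This closes the induction.

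The main obstacle — and the only genuinely delicate point — is the parity bookkeeping in the leading digit, and this is exactly where oddness of $p$ is used. For $p=2$ one has $i'=1-i$ of \emph{opposite} parity to $i$, the branches no longer match, and the statement in fact fails: by the linearity of Lemma \ref{gc=aut} one gets $\gc_n(x)^\perp=\gc_n(x)+d$ while $\gc_n(x^\perp)=\gc_n(x)+\gc_n(d)$, and $\gc_n(d)=e_{n-1}\neq d$ for $n>1$. I would therefore flag at the outset that this is an odd-$p$ phenomenon, consistent with the placement of the lemma in the $p>2$ subsection. Conceptually, the identity being proved says that the reflected Gray code sequence is anti-palindromic under complement, $G(n,p)_{p^n-1-j}=(G(n,p)_j)^\perp$, and the induction above is precisely the verification that this symmetry is inherited from dimension $n-1$ to dimension $n$.
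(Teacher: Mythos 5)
Your proof is correct and follows essentially the same route as the paper's: induction on $n$ via the recursive definition of $G(n,p)$, using $(x^\perp)^{n-1}=(x^{n-1})^\perp$ and the fact that the leading digits $x_{n-1}$ and $p-1-x_{n-1}$ have the same parity when $p$ is odd, so both sides select the same $\sigma$-branch and the tails reduce to the inductive hypothesis. The only difference is presentational: the paper computes the sum $\gc_n(x)+\gc_n(x^\perp)=d$ in one stroke, leaving the parity point implicit in its writing of $\sigma_{x_{n-1}}$ for both terms, whereas you make the parity bookkeeping and the failure for $p=2$ explicit.
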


\begin{proof}
Here, we will write $d_n$ for the vector $d\in\mathds{F}_p^n$.
We proceed by induction. For $n=1$, the assertion is clear.
Assume that the assertion holds true for $n-1$. Now, we have
\begin{align*}
\gc_n&(x)+\gc_n(x^\perp)=
\left(x_{n-1}\;\gc_{n-1}\left(x^{n-1}\right)^{\sigma_{x_{n-1}}}\right)
\\
&+\left((x^\perp)_{n-1}\;\left(\gc_{n-1}\left(x^\perp\right)^{n-1}\right)^{\sigma_{x_{n-1}}}\right)\\
&=\left(p-1\;\gc_{n-1}\left(x^{n-1}\right)^{\sigma_{x_{n-1}}}\right.
\\
&\quad +\left.\gc_{n-1}\left(\left(x^\perp\right)^{n-1}\right)^{\sigma_{x_{n-1}}}\right)
\\
&=\left(p-1\;w^{\sigma_{x_{n-1}}}\right)
\end{align*}
with
\begin{align*}
w&=\gc_{n-1}\left(x^{n-1}\right)
+\gc_{n-1}\left(\left(x^\perp\right)^{n-1}\right)
\\
&=\gc_{n-1}\left(x^{n-1}\right)+\gc_{n-1}\left(\left(x^{n-1}\right)^\perp\right)
\\
&=d_{n-1}
\end{align*}
as clearly, $\left(x^\perp\right)^{n-1}=\left(x^{n-1}\right)^\perp$.
This proves the assertion.
\end{proof}

\begin{lem}
It holds true that
\[
\gc_n\circ\gc_n=\id
\]
i.e.\ $\gc_n$ is its own inverse.
\end{lem}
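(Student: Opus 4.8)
The plan is to prove the statement for $p>2$ (the setting of this subsection) by induction on $n$, mirroring the structure of the proof of Lemma \ref{gcperp}. The base case $n=1$ is immediate: since $G(1,p)=(i)_i$, the map $\gc_1$ is the identity on $\mathds{F}_p^1$, so $\gc_1\circ\gc_1=\id$.

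For the inductive step I would use the same block decomposition of $\gc_n$ that underlies the recursion $G(n,p)=(i\,G(n-1,p)^{\sigma_i})_i$ and the proof of Lemma \ref{gcperp}. Writing $x=(x_{n-1}\;x^{n-1})$ with leading digit $x_{n-1}$ and tail $x^{n-1}\in\mathds{F}_p^{n-1}$, the recursion shows two things: first, $\gc_n$ preserves the leading digit, i.e.\ $\gc_n(x)=(x_{n-1}\;w)$ for some $w\in\mathds{F}_p^{n-1}$, since the block of indices with leading digit $i$ is mapped into the block of images with leading digit $i$; and second, $w=\gc_{n-1}(x^{n-1})$ when $x_{n-1}$ is even while $w=\gc_{n-1}((x^{n-1})^\perp)$ when $x_{n-1}$ is odd (the odd blocks being reversed). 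In short, $\gc_n(x)=(x_{n-1}\;\gc_{n-1}(r_{x_{n-1}}(x^{n-1})))$, where $r_{x_{n-1}}$ is the identity for $x_{n-1}$ even and complementation $(\cdot)^\perp$ for $x_{n-1}$ odd.

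Now I would apply $\gc_n$ a second time. Since the leading digit is unchanged, $\gc_n(x)$ has the same leading digit $x_{n-1}$, hence the same reflection $r:=r_{x_{n-1}}$ is triggered, and the tail of $\gc_n(\gc_n(x))$ equals $\gc_{n-1}(r(\gc_{n-1}(r(x^{n-1}))))$. For $x_{n-1}$ even this is $\gc_{n-1}(\gc_{n-1}(x^{n-1}))=x^{n-1}$ by the induction hypothesis. For $x_{n-1}$ odd, Lemma \ref{gcperp} gives the commutation $\gc_{n-1}\circ(\cdot)^\perp=(\cdot)^\perp\circ\gc_{n-1}$, so the inner factor $r\circ\gc_{n-1}\circ r$ collapses using $(\cdot)^{\perp\perp}=\id$, leaving $\gc_{n-1}\circ\gc_{n-1}=\id$ by induction; again the tail returns to $x^{n-1}$. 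In both cases $\gc_n(\gc_n(x))=(x_{n-1}\;x^{n-1})=x$, completing the induction.

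The only real subtlety, and the point I would flag, is the dependence on $p>2$: the whole argument rests on Lemma \ref{gcperp}, which fails for $p=2$. The underlying reason is that complementation preserves the parity of each digit precisely when $p-1$ is even, i.e.\ for odd $p$; for $p=2$ the reflection $(\cdot)^\perp$ flips the leading digit's parity, so the second application of $\gc_n$ triggers the opposite reflection and the cancellation above breaks down. This is consistent with the fact that the binary reflected Gray code is not an involution for $n\ge 3$.
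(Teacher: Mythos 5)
Your proof is correct and takes essentially the same route as the paper's: induction on $n$ via the block decomposition $\gc_n(x)=\left(x_{n-1}\;\gc_{n-1}(r_{x_{n-1}}(x^{n-1}))\right)$, the observation that $\gc_n$ preserves the leading digit so the same reflection is triggered twice, and the commutation $\gc_{n-1}\circ(\cdot)^\perp=(\cdot)^\perp\circ\gc_{n-1}$ of Lemma \ref{gcperp} to collapse the odd case. The only cosmetic difference is that you absorb the reversal-to-complementation identity (Lemma \ref{sigmaperp}) into the map form of the recursion at the outset, where the paper keeps the $\sigma$-notation and invokes that lemma explicitly mid-computation; your closing diagnosis of the failure for $p=2$ (parity of digits under complementation) is also the correct one, though it is the inner levels of the recursion, not the unchanged leading digit, where the opposite reflection gets triggered.
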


\begin{proof}
We proceed by induction. For $n=1$, the assertion is clear.
Assume that the assertion holds true for $n-1$. Now,
we have
\begin{align*}
\gc_n(\gc_n(x))
&=\left(x_n\;\gc_{n-1}\left(\gc_{n}\left(x\right)^{n-1}\right)^{\sigma_{x_{n-1}}}\right)
\\
\gc_n(x)^{n-1}&=\gc_{n-1}\left(x^{n-1}\right)^{\sigma_{x_{n-1}}}
\end{align*}
If $x_{n-1}\equiv 0\mod 2$, then
\begin{align*}
\gc_{n-1}&\left(\gc_n(x)^{n-1}\right)^{\sigma_{x_{n-1}}}
=\gc_{n-1}\left(\gc_{n-1}(x^{n-1})\right)
\\
&=x^{n-1}
\end{align*}
If $x_{n-1}\equiv 1\mod 2$, then
\begin{align*}
\gc_{n-1}&\left(\gc_n\left(x\right)^{n-1}\right)^{\sigma_{x_{n-1}}}
=\gc_{n-1}\left(\gc_{n-1}\left(x^{n-1}\right)^\sigma\right)^\sigma
\\
&\stackrel{\ref{sigmaperp}}{=}\gc_{n-1}\left(\gc_{n-1}\left(\left(x^{n-1}\right)^\perp\right)^\perp\right)
\\
&\stackrel{\ref{gcperp}}{=}\gc_{n-1}\left(\gc_{n-1}\left(x^{n-1}\right)^\perp\right)^\perp
\\
&\stackrel{\ref{gcperp}}{=}\left(\gc_{n-1}\left(\gc_{n-1}\left(x^{n-1}\right)\right)^\perp\right)^\perp
\\
&=\gc_{n-1}\left(\gc_{n-1}\left(x^{n-1}\right)\right)
=x^{n-1}
\end{align*}
From this, the assertion follows.
\end{proof}

\subsection{The indexing method}
The indexing method uses the inverse Hilbert curve map $\left(h^k\right)^{-1}$.

\smallskip
Let $S\subset\mathds{F}_p^n[[t]]$ be a finite subset.

\begin{alg}[Indexing]\label{hilbertindex}
Let $S^0=S$ and $T_0=\emptyset$.

\smallskip
Let for $k\ge 1$
\begin{align*}
S^k&=S^{k-1}\setminus\left\{s\in S^{k-1}\mid \text{$s$ is unique in $S^{k-1}$}\right.
\\
&\left.\text{with $s\equiv s'\mod t^k$ for $s'\in S^{k-1}\mod t^k$}\right\}
\\
S_k&=\left(S^{k-1}\setminus S^k\right)\mod t^{k}
\end{align*}
The map $\left(h^{k}\right)^{-1}$ induces a linear ordering on $S_k$. 

\smallskip
Map $S_k$ via $\rho_{k-1}$ to $\mathds{F}_p^n[t]/t^{k-1}\mathds{F}_2^n[t]$. Then $\left(h^{k-1}\right)^{-1}$ induces a linear ordering on $\rho_{k}(S_k)\cup S_{k-1}$. This induces a linear ordering on
$S_k\dot{\cup} S_{k-1}$, and thus on
\[
T_k=T_{k-1}\cup S_k
\]
Hence, obtain a partitioned set
\[
T_N=S_1\dot{\cup}\dots\dot{\cup} S_N
\]
consisting of partial sums of elements of $S$ such that $T_N$ and $S$ have the same cardinality, and
a linear ordering on $T_N$. This induces a linear ordering on $S$.
\end{alg}

The set $T_N$ coming out of Algorithm \ref{hilbertindex} corresponds to a set of hypercubes of various sizes, each being the largest containing precisely one element of the finite set $S$. The following algorithm
is a simplification of the algorithm in \cite[\S 3.2.2]{Mumfdendro} and
finds from this the Gray-Hilbert tree generated by $S$.

\begin{alg}[Gray-Hilbert tree of a set]\label{GHT(S)}
Let $S\subset\mathds{F}_p^n[[t]]$ be a finite set. 
For a subset $S'\subset S$ let 
\[
T'=\mathcal{G}_n(S')
\]
whose edges are labelled with the digits of the coefficients in the formal power series representing the elements of $S'$.
Start with $s_0\in S$. Set $S_0=\mathset{s_0}$. Then $T_0=\mathcal{G}_n(S_0)$ is the root and a leaf containing $s_0$. 

\smallskip
Assume $s_N\in S\setminus S_{N-1}\neq\emptyset$. In order to create
\[
T_N=\mathcal{G}_n(S_N)
\]
with $S_N=S_{N-1}\cup\mathset{s_N}$, let $v_N$ be the nearest vertex of $T_{N-1}$ on the path from root to the end $s_N$ in the infinite $p$-adic Gray-Hilbert tree.
This vertex can be found by tracing the coefficients in the formal power series representing $s$ until a coefficient is found which has a digit not being a label of an edge emanating from a vertex in $T_{N-1}$. The next vertex is a leaf containing $s_N$.

\smallskip
Continue as in the previous step until $S_N=S$.
\end{alg}

The insert algorithm is as follows:
\begin{alg}[Insert]\label{insert}
Given a finite subset $S\subset\mathds{F}_p^n[[t]]$ and a partition
\[
S=S_1\dot{\cup}\dots\dot{\cup}S_N
\]
as obtained from Algorithm \ref{hilbertindex}, and some $f\in\mathds{F}_p^n[[t]]$.
Then find all $s\in S$ such that 
\[
\absolute{f-s}_p
\]
is smallest possible. This means
\[
f\equiv s\mod t^k,\quad f\not\equiv s\mod t^{k+1}
\]
and $k$ is largest possible in $\mathset{1,\dots,N}$.
If $s$ is unique, then for all $s'\in S\setminus\mathset{s}$:
\[
\absolute{s-s'}_p=\absolute{s-f+f-s'}_p\stackrel{(*)}{=}\absolute{f-s'}_p=p^{-\ell}
\]
with $\ell<k$, where $(*)$ holds true because of the non-archimedean property \cite[Ch.\ 2]{DR2016}. Hence, $s\mod t^j\in S_j$ with $j\le k$. 
So,
update $S_{k+1}$ and $S_j$:
\begin{align*}
S_{k+1}&\leftarrow S_{k+1}\cup\mathset{f,s}\mod t^{k+1}
\\
S_j&\leftarrow S_j\setminus\mathset{s\mod j}
\end{align*}
Now, update the linear ordering with the help of $\rho_k$ as in Algorithm \ref{hilbertindex}.

\smallskip
If $s$ is not unique, then for $s'\neq s$ another such nearest neighbour of $f$ it holds true that:
\[
\absolute{s-s'}_p=\absolute{s-f+f-s'}_p\le p^{-k}
\]
Hence, $s\in S_j$ with $j>k$. So, update $S_{k+1}$:
\[
S_{k+1}\leftarrow S_{k+1}\cup\mathset{f\mod t^k}
\]
and update the linear ordering again with $\rho_k$ as in Algorithm \ref{hilbertindex}.
\end{alg}

\subsection{Upper complexity bounds}

\paragraph{Complexity of finding a point on $h^{k+1}$.} The time complexity of computing $h^{k+1}(x_k)$ equals $k$ times the time complexity of computing $\gc_n^{(e,d)}$ for $n$-bit binary numbers.
As $\gc_n^{(e,d)}$ consists of a bounded number of bit-wise additions, shifts and swaps, independent of $n$,
its time complexity is in $O(n)$. Hence,
the time complexity of $h^{k+1}(x_k)$ is in $O(kn)$.

\paragraph{Complexity of  indexing algorithm.}
For the indexing method (Algorithm \ref{hilbertindex}), the inverse Gray code $\gc_n^{-1}$ needs to be computed.
This  is $n$ times the complexity of $n$-bit operations, so it is in $O(n^2)$.
So, the time complexity of finding the image of a point under $\left(h^{k}\right)^{-1}$ is in $O(kn^2)$.
If the points of the finite set $S\subset\mathds{F}_2^n[[t]]$ are all distinguishable modulo $t^{k+1}$,
then in the worst case, $k$ iterations are needed. As all points of $S$ need to be taken into account, it follows that the overall time complexity is in 
$O(\absolute{S} k n^2)$.

\paragraph{Complexity of Gray-Hilbert tree of a set.} For computing the Gray-Hilbert tree of a finite set $S\subset\mathds{F}_p^n[[t]]$,
all elements of $S$ need to be searched, and in the worst case, $\mathcal{G}_n(S)$ is a finite $p$-regular tree (with $S$ leaf nodes). This gives an upper complexity bound of
$O(\absolute{S}\log\absolute{S})$.

\paragraph{Complexity of Insert.}
Also here, the inverse Gray code needs to be computed, and all elements of $S\subset\mathds{F}_2^n$ need to be considered. This means that the time complexity of Algorithm \ref{insert} is in 
$O(\absolute{S} k n^2)$, as also here, an image under $\left(h^{k}\right)^{-1}$ is needed.

\section{An implementation of a dynamic $2$-adic scaled Hilbert indexing method}\label{implementation}

A binary tree has been chosen as basic data structure for the implementation of a dynamic $2$-adic scaled Hilbert index. The Hilbert value is obtained by interpreting the movement to the first sub-node of a node as false/zero and the movement to the second sub-node as true/one. 
A binary number is built up bit by bit with every step up the tree. This binary number serves us as the Hilbert value of the final leaf node.

\smallskip
Starting from root node with $n$ as the dimension, $n$ steps are needed to fulfil one complete Hilbert curve iteration process. 
The standard Hilbert curve index deals with a constant number of Hilbert curve iterations $k$. 
Therefore, every leaf node of the resulting binary tree is on the same level. The tree will contain $\ell=k \cdot n$ levels and $2^\ell$ leaf nodes. 
The hyper-cuboids of the leaf nodes are equal in size and the tree will have under- and overfilled leaf nodes, if nodes are viewed as `buckets' which may contain up to a certain amount $s$ of data points; if there are more than $s$ points in a `bucket', then the vertex splits, unless the prescribed maximal number of levels is reached. So, if the data is clustered in a certain way, then there will be leaf nodes containing more than $s$ points, while other leaf nodes contain fewer or even none of the points.

\smallskip
The tree covers a maximal $n$-dimensional hyper-cuboid. Each indexed object has a representation as an object in an $n$-dimensional space. In order to insert, delete or retrieve objects, 
the tree must be queried from root to leaf node. One step along a branch splits the hyper-cuboid of a node into two parts by cutting orthogonally 
to a splitting coordinate through the centre of the node's hyper-cuboid. 
For each step up the tree, it has to be decided which splitting coordinate should be taken and how to map the two hyper-cuboid parts to the two sub-nodes. 
There are two cases for how to map the bit of the Hilbert value to the hyper-cuboid part it belongs to. The first case maps the lower part of the hyper-cuboid 
to the first sub-node (Algorithm \ref{split}, line 8) and the upper part of the hyper-cuboid to the second sub-node (Algorithm \ref{split}, line 9). 
The second case does the same vice versa (Algorithm \ref{split}, lines 5 and 6). The Gray code defines which cases have to be chosen within one Hilbert curve iteration process.

\smallskip
How to split a hyper-cuboid within one Hilbert curve iteration process also depends on the path taken within the previous Hilbert curve iteration process. 
To connect the ends of each sub-hyper-cuboids with the start of the next neighbouring sub-hyper-cuboid on the next Hilbert curve iteration correctly, 
it is necessary to know if the start or end should lie in the lower or upper part of the sub-hyper-cuboid regarding each coordinate. 
It is also important which splitting coordinate comes first in order to rotate the sub-hyper-cuboid correctly. 
The order of the following coordinates does not matter when building a space filling curve (Corollary \ref{numberOfGC}). 

\smallskip
The following implementation uses a list of coordinate identifiers and pulls up the first coordinate $d$ to the front of the list for the next Hilbert curve iteration process using the following permutation:
\[
\begin{pmatrix}
n-1&n-2&\dots &d&d-1&\dots&1&0\\
d&n-1&\dots&d+1&d-1&\dots&1&0
\end{pmatrix}
\]
and call this Gray-Hilbert curve \emph{bubble}.

We also use a ring of coordinate identifiers and point to the coordinate $d$ to start with at each start of a Hilbert curve iteration process, using the following permutation:
\[
\begin{pmatrix}
n-1&n-2&\dots &n-(d+1)&\dots&0\\
d&d-1&\dots&0&\dots&d+1
\end{pmatrix}
\]
and call this curve \emph{ring}.

\smallskip
The developed dynamic variant of the Hilbert index deals with a maximal number of objects $s$ contained within each leaf node. 
It dynamically splits a leaf node if the maximal number is exceeded, or collapses a leaf node if no objects are left within the leaf node.
But if the prescribed maximal number of levels is reached, no splitting occurs, and then there may be \emph{overfilled} leaves containing more than
$s$ objects.

\smallskip
This mechanism takes place on any level of the tree. Therefore, each object is not indexed on a certain Hilbert curve iteration level. 

\begin{alg}\label{split}
\emph{Input}. A tree $T$ covering an $n$-dimensional hyper-cube $C$ and a point $x\in C$

\smallskip\noindent
\emph{Output}. The leaf node $\nu$ where $x$ lies in sub-hypercube $c$.

\smallskip\noindent
\emph{Pseudo-code} of the routine FINDNODE($T$,$C$,$x$):
\begin{lstlisting}
01. nu:=root(T), c:=C, h:=false, 
     S:= create a list of coord id's, 
     a := S.start, a' := S.end, 
     a'' := S.end, e := null
02. while nu is not a leaf
03.  if h
04.  then {if x in lower part of c 
                            resp. a}
05.   then {nu:=nu.right, a':=a, 
            c:=lower part of c resp. a}
06.   else {nu:=nu.left, a'':=a, 
        c:=upper part of c resp. a, 
        h:=false}
07.  else {if x in lower part of c 
            resp. a}
08.   then {nu:=nu.left, a'':=a, 
        c:=lower part of c resp. a}
09.   else {nu:=nu.right, a':=a, 
        c:=upper part of c resp. a, 
        h:=true}
10.  if S.hasNext
11.  then a:=S.next, if a equals e
12.   then {if x was lower part of c 
        resp. a then h:=-h}
13.   else {if x was upper part of c 
        resp. a then h:=-h}
14.  else if h
15.   then S.start:=a'' 
            % pull a'' to the start of S
16.   else S.start:=a'  
            % pull a' to the start of S
17.   a:=S.start, a':=S.end, 
        a'':=S.end, e:=S.end
18.   if {x was in upper part of c 
        resp. a and v moved to v.left 
            resp. a} 
      or {x was in lower part of c 
        and v moved to v.right 
            resp. a}
19.   then h:=true
20.   else h:=false
21. end while
22. return nu 
\end{lstlisting}
\end{alg}

\section{Storage efficiency of the $p$-adic scaled Gray-Hilbert tree}

It is needless to say that the $p$-adic scaled Gray-Hilbert tree for a data set $S\subset\mathds{R}^n$ is storage efficient, because it is the smallest sub-tree of the Gray-Hilbert tree containing $S$ as its leaf nodes.
The aim of this section is to compare the storage properties of the scaled Gray-Hilbert tree with the static $p$-adic Gray-Hilbert index.

\smallskip
Given some multi-dimensional point cloud $S$,
the nodes of the Gray-Hilbert tree are viewed as \emph{buckets} which are either \emph{empty}, \emph{filled}, \emph{underfilled}, or \emph{overfilled}, 
respectively, depending on if the corresponding hypercuboid contains no, a specified number, less than, or more than a pre-specified number $s$ of data points from $S$. This number is 
called \emph{bucket capacity}, and we assume that $s<\absolute{S}$, where $\absolute{S}$ denotes the cardinality of $S$.

\smallskip
We define the following quantities
\begin{align*}
\Omega(T,S)&=(1+\omega(T,S))
\times\absolute{L(T)}
\\
\omega(T,S)&=
\frac{\text{$\#$ overfilled leaf nodes of $T$}}
{\text{$\#$ non-empty leaf nodes of $T$}}
\end{align*}
where  
\[
L(T)=\mathset{\text{leaf nodes of $T$}}
\]
and $T$ is a finite sub-tree of the Gray-Hilbert tree $\mathcal{G}_n$.

\smallskip
The quantity $\Omega(T,S)$ is a measure incorporating the capacity of $T$ (number of leaf nodes) and a factor 
$1+\omega(T,S)$ where $\omega(T,S)$ is the number of leaf node splittings which would occur for $S$, if the depth of the tree $T$ were not bounded. Multiplying with this factor means virtually creating space for data which everywhere would have the same overfilling property as $S$. In a sense, this would mean to transfer the average local density of $S$ within the occupied part of the space to the whole hyper-cube.
We call $\Omega(T,S)$ the \emph{capacity of $T$ for $S$}.

\smallskip
Given two trees $T_1$ and $T_2$ for indexing $S$, the \emph{capacity ratio}
is defined as
\[
R(T_1,T_2;S)=\frac{\Omega(T_1,S)}{\Omega(T_2,S)}
\]
This will be applied to the scaled and an optimal static index for $S$.

\smallskip
We will write
\[
T_{\scaled}=\mathcal{G}_n(S)
\]
for the Gray-Hilbert tree generated by $S$. The tree
\[
T_{\static}
\]
is defined as the finite Gray-Hilbert tree with maximal iteration number
\[
k=\left\lceil\frac{\log_p\dfrac{\absolute{S}}{s}}{n}\right\rceil
\]
This is motivated by wanting to have
\[
p^{nk}\approx \frac{\absolute{S}}{s}
\]
as the number of leaf nodes in $T_{\static}$. This is the smallest sub-tree of the $p$-adic Gray-Hilbert tree such that on average no leaf node is overfilled.

\smallskip
In order to compare the scaled and the static Gray-Hilbert curve indices, we consider the ratio
\[
R_p(S)=R(T_{\scaled},T_{\static};S)=\frac{\Omega(T_{\scaled},S)}{\Omega(T_{\static},S)}
\]
which we call the \emph{$p$-adic Gray-Hilbert capacity ratio for $S$}.
If $R_p(S)<1$, then the scaled index is more storage efficient than the static index, otherwise not.

\smallskip
In order to understand the behaviour of $R_p(S)$ for increasing dimension, first consider the maximal iteration number $k$, which can be written as
\[
k=\frac{\log_p\dfrac{\absolute{S}}{s}}{n}+\epsilon_p
\]
with $\epsilon_n\in [0,1)$. We have
\begin{align}
\frac12 p^{-n\epsilon_p}&=
\frac{\absolute{S}/s}{2p^{nk}}\le
R_p(S)\le\frac{\absolute{S}-s+1}{p^{nk}}\label{pcapacity}
\\
&\le s\cdot\frac{\absolute{S}/s}{p^{nk}}
=s p^{-n\epsilon_p}\nonumber
\end{align}
because
\[
\absolute{S}/s\le\absolute{L(T_{\scaled})}
\le \absolute{S}-s+1
\]
From this, it follows that 

\begin{align}
R_p(S)\to 0
\end{align}
for $n\to\infty$, as 
\[
\lim\limits_{n\to\infty}\epsilon_p= 1
\]
In particular, if the dimension $n$ is sufficiently high, then $R_p(S)\le 1$.
Namely, if
\begin{align}\label{criterion}
\epsilon_p\ge \frac{\log_p s}{n}
\end{align}
then this is the case. So, if (\ref{criterion}) is satisfied, then
the scaled $p$-adic Gray-Hilbert index index performs better then the best static $p$-adic Gray-Hilbert index, the latter being optimal in the sense that on average, no leaf node is overfilled. We summarise this as follows:

\begin{thm}
If $\frac{\log_p s}{n}<\epsilon_p$ holds true, then $T_{\scaled}$ is 
more storage efficient than $T_{\static}$, and that relative efficiency increases with increasing dimension $n$. For fixed $n$, this is the case, if the bucket capacity $s$ is sufficiently small.
\end{thm}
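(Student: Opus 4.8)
The plan is to recognize that the theorem is essentially a restatement of the chain of inequalities already established in~(\ref{pcapacity}) together with the limiting behaviour $\lim_{n\to\infty}\epsilon_p=1$, so the proof will mostly consist of reading off the consequences of those displayed bounds rather than proving anything genuinely new. First I would recall the two-sided estimate
\[
\tfrac12 p^{-n\epsilon_p}\le R_p(S)\le s\,p^{-n\epsilon_p}
\]
derived above from the cardinality bounds $\absolute{S}/s\le\absolute{L(T_{\scaled})}\le\absolute{S}-s+1$. The upper bound is the crucial one: it shows that $R_p(S)<1$ is guaranteed as soon as $s\,p^{-n\epsilon_p}\le 1$, which is exactly the condition $p^{n\epsilon_p}\ge s$, i.e.\ $n\epsilon_p\ge\log_p s$, i.e.\ the criterion $\frac{\log_p s}{n}<\epsilon_p$ (allowing the non-strict-versus-strict discrepancy to be absorbed harmlessly). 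So the first assertion, that $T_{\scaled}$ is more storage efficient under the hypothesis, follows immediately from the upper half of~(\ref{pcapacity}).

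For the monotonicity claim---that the relative efficiency \emph{increases} with $n$---I would argue that the governing quantity $s\,p^{-n\epsilon_p}$ tends to $0$ as $n\to\infty$, because $\epsilon_p\to 1$ forces the exponent $-n\epsilon_p\to-\infty$. Here I would make the informal phrase ``efficiency increases'' precise as: the dominating upper bound on $R_p(S)$ decreases toward $0$, so in the limit the scaled tree is arbitrarily more efficient than the static one. This is really just the statement $R_p(S)\to 0$ already displayed, re-expressed as a monotone improvement. I would emphasize that this uses only the elementary fact that $p^{-n\epsilon_p}$ shrinks geometrically once $\epsilon_p$ is bounded away from~$0$.

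The final sentence---that for fixed $n$ the conclusion holds when $s$ is small enough---I would obtain by reading the criterion~(\ref{criterion}) as a condition on $s$: for fixed $n$ and fixed $\epsilon_p$, the inequality $\log_p s< n\epsilon_p$, equivalently $s< p^{n\epsilon_p}$, is satisfied for all sufficiently small bucket capacities $s$. Since $p^{n\epsilon_p}\ge 1$ whenever $\epsilon_p\ge 0$, there is always a nonempty range of admissible~$s$, so the claim is vacuously consistent and the threshold is explicit.

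The part requiring the most care is the interplay between the strict inequality $\frac{\log_p s}{n}<\epsilon_p$ in the hypothesis and the non-strict estimates in~(\ref{pcapacity}); I would want to verify that the strict hypothesis indeed yields $R_p(S)<1$ rather than merely $R_p(S)\le 1$, which follows because the upper bound $s\,p^{-n\epsilon_p}$ is strictly below~$1$ under the strict criterion. The only other subtlety is justifying $\lim_{n\to\infty}\epsilon_p=1$, but this is asserted in the surrounding text as a property of the ceiling-defined iteration number~$k$, and I would treat it as given. Beyond these bookkeeping points, no genuine obstacle is expected: the theorem is a packaging of the already-proven inequalities.
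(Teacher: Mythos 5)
Your proposal is correct and follows essentially the same route as the paper: the theorem there is explicitly a summary ("We summarise this as follows") of the two-sided bound (\ref{pcapacity}), the criterion (\ref{criterion}), and the limit $\epsilon_p\to 1$, which is exactly what you read off. Your extra care about the strict-versus-non-strict inequality and the reformulation $s<p^{n\epsilon_p}$ for fixed $n$ only makes explicit what the paper leaves implicit.
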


The inequalities (\ref{pcapacity}) are equivalent to
\[
0\le-\log_p R_p(S)-n\epsilon_p+\log_p s\le \log_p (2s)
\]
so that in order to have a measure between zero and one,
we define the quantity
\begin{align*}
\rho_p(S,s)&=\frac{-\log_p(R_p(S)/s)-n\epsilon_p}
{\log_p (2s)}\in[0,1]
\end{align*}
We call this the \emph{$p$-adic Gray-Hilbert local sparsity measure for $S$ and $s$}. It holds true that
\[
(2s)^{\rho_p(S,s)}=\frac{\absolute{S}}
{\absolute{L(T_{\scaled})}}
\left(1+\omega(T_{\static})\right)
\]
Observe that  the dependence on $p$ is through $T_{\static}$ and $T_{\scaled}$,
while the dependence on $s$ is through the definition of a leaf being overfilled in $\omega(T_{\static},S)$ and $L(T_{\scaled})$.
This means that
for fixed bucket capacity $s$, the quantity $\rho_p(S,s)$ should depend on the distribution of $S$ only, the question being in which way.

\smallskip
In any case, a value of $\rho_p(S,s)$ near one means that very many leaf nodes in the $T_{\static}$ are overfilled, possibly 
because of local non-sparsity, whereas in the case of $\rho_p(S,s)$ near zero, only few leaf nodes of $T_{\static}$ are overfilled, possibly because of a higher local sparsity of the distribution of $S$.

\smallskip
Experiments were done in the $2$-adic case with various different data sets. Namely, the iris data set \cite{Fisher1936} in four different encodings, as well as randomly generated data.

\smallskip
The iris data set consists of 150 points.
The four different encodings of this data set are listed in Table \ref{irisEncodings}. The complete disjunctive form (cdf) and BCH 127 50 27 contain only $0$ and $1$ values. The latter is an error-correcting
code named after Hocquenghem, Bose and Ray-Chaudhuri \cite{Hocquenghem1959, BR1960}. This encoding of the iris data set was also used in \cite{Brad2017}, whereas its complete disjunctive form 
was studied in \cite{Murtagh2007}. The data set named \emph{cdf trimmed} is obtained by projecting out all coordinates containing only $0$ values.

\smallskip
The random point clouds are 
$100,000$ samples taken from $\mathds{R}^3$ uniformly and under a standard normal distribution, repeated 100 times.

\begin{table}
\centering
\begin{tabular}{|l|r|}
\hline
\rule{0pt}{3mm}endoding&dim\\\hline
original&4\\
complete disjunctive form (cdf)&400\\
cdf trimmed & 123\\
BCH 127 50 27&431\\\hline
\end{tabular}
\caption{Different encodings of the iris data set and their dimensions.}\label{irisEncodings}
\end{table}

\bigskip
Table \ref{rho_iris} shows the Gray-Hilbert local sparsity measure for the case $p=2$ with
the different encodings of the iris data set using \emph{bubble} and \emph{ring} (as defined in Section \ref{implementation}). Both methods yield the same values. It was observed that the high-dimensional encodings yield no overfilled leaf nodes in $T_{\static}$. This can be seen as an indicator that in these cases the data could be locally sparse, which agrees well with the ultrametricity property found in \cite{Murtagh2007}.
The case $s=1$ reflects the fact that $\omega(T_{\static})=0$ in that case, as then the number of leaf nodes of $T_{\scaled}$ equals the number of points in $S$.

\begin{table}
\centering
\begin{tabular}{|c|cc|c|c|c|}\hline
$s$&\multicolumn{2}{|c|}{original}&	{cdf trimmed}&	{cdf}&	{BCH}\\
&\emph{bubble}&\emph{ring}&&&\\\hline 
1&	0.79&0.79&	0.00&	0.00	&0.00	\\
2&	0.61&0.63&	0.22&0.22	&0.22\\
4&	0.59&0.58&	0.36	&0.36&	0.36\\
8&	0.59&0.58&	0.52	&0.52&	0.52\\
16&	0.68&0.68&	0.48&	0.48&	0.48\\
32&	0.72&0.72&0.44&	0.44&	0.44\\
64&	0.80&0.80&	0.42&	0.42&	0.42\\\hline
\end{tabular}
\caption{$2$-adic Gray-Hilbert local sparsity measures  for different encodings of the iris data set using the methods \emph{bubble} and \emph{ring}. Both methods yield the same values.}\label{rho_iris}
\end{table}

\begin{table}
\centering
\begin{tabular}{|c|cc|c|c|c|}\hline
&\multicolumn{2}{c|}{original}&cdf trimmed&cdf&BCH\rule{0pt}{4mm}\\
s&\emph{bubble}&\emph{ring}&&&\\\hline
1&147&147&147&147&147\rule{0pt}{3.5mm}\\
2&99&97&108&108&108\\
4&56&57&70&70&70\\
8&32&33&35&35&35\\
16&20&20&28&28&28\\
32&9&9&24&24&24\\
64&3&3&19&19&19\\\hline
\end{tabular}
\caption{The number of leaf nodes in $T_{\scaled}$ for different encodings of the iris data set. The methods \emph{bubble} and \emph{ring} yield different results only for the original data set.}\label{iris_leafnodes}
\end{table}

\smallskip
Table \ref{iris_leafnodes} shows that the number of leaf nodes in the $2$-adic scaled tree is higher for the high-dimensional encodings of the iris data set than for the original data, if the bucket capacity $s$ is greater than one. In fact, the trees $T_{\scaled}$ are isomorphic for the three high-dimensional encodings.

\begin{table}
\centering
\begin{tabular}{|c|cc|cc|}\hline
&\multicolumn{2}{c|}{uniform}&\multicolumn{2}{c|}{normal}\rule{0pt}{3.5mm}\\
$\log_2 s$&\emph{bubble}&\emph{ring}&\emph{bubble}&\emph{ring}\\\hline
0&0.24&0.24&0.55&0.55\rule{0pt}{3.5mm}\\
1&0.34&0.34&0.47&0.47\\
2&0.59&0.59&0.70&0.70\\
3&0.62&0.62&0.72&0.72\\
4&0.70&0.70&0.72&0.72\\
5&0.77&0.77&0.82&0.82\\
6&0.80&0.80&0.81&0.81\\
7&0.83&0.83&0.81&0.81\\
8&0.84&0.84&0.86&0.86\\
9&0.86&0.86&0.86&0.86\\
10&0.87&0.87&0.84&0.84\\\hline
\end{tabular}
\caption{$2$-adic Gray-Hilbert local sparsity measures for 100,000 random points in the unit cube from a uniform or standard normal distribution.}\label{random}
\end{table}

\smallskip
Table \ref{random} gives the values of $\rho_2(S,s)$ for the random samples from 100,000 points in the unit cube from a uniform or standard normal distribution. For small bucket capacity values $s$, the uniform distribution has a lower value than the normal distribution, but as $s$ increases, it becomes locally less sparse than the normal distribution.

\section{Conclusion}

The $p$-adic generalisation of higher-dimensional Hilbert curves can be accomplished by using the $p$-adic Gray code. Restricting to affine transformations, which are in this case coordinate permutations plus translations, allows the mathematical construction of a plethora of well-defined space-filling curves called $p$-adic affine Gray-Hilbert curves. Each of these can be used for indexing $n$-dimensional point clouds by taking the smallest sub-tree of the tree associated with a $p$-adic Gray-Hilbert curve, whose leaf nodes correspond to hypercuboids containing the point cloud. This is the scaled $p$-adic Gray-Hilbert index. It has efficient access and insert algorithms and can be shown to be more storage efficient than the optimal static version of the $p$-adic Gray-Hilbert index tree, if the dimension is sufficiently high or, equivalently, the bucket capacity is chosen sufficiently low. Optimality is defined here as giving rise to the smallest finite $p$-regular tree containing the data in its leaf nodes in such a way that on average no leaf node is overfilled.
The derived $p$-adic local sparsity measure thus contains information about the local distribution of the data set. This has been tested for different encodings of the famous iris data set with the result that the high-dimensional encodings seem to be locally uniform, which agrees with previous findings about their discovered ultrametricity. Potential applications for the index structures constructed here should be seen in the context of massive data in any finite dimension.

\section*{Funding}
This work is supported by the Deutsche Forschungsgemeinschaft (DFG) [grant numbers BR 3513/12-1, BR 2128/18-1]. 

\section*{Acknowledgement}
Norbert Paul is thanked for fruitful discussions.
\bibliographystyle{plain}
\bibliography{biblio}

\end{document}